\documentclass[a4paper,UKenglish,cleveref, autoref, thm-restate,authorcolumns]{lipics-v2019}

\usepackage{microtype}
\usepackage{pict2e, xcolor, graphics, graphicx, amssymb}
\usepackage{times}

\usepackage{url}
\usepackage{tikz}
\usepackage{graphicx}
\usetikzlibrary{positioning}
\usepackage{amsmath}
\usepackage{color}
\usepackage{amsfonts}%
\usepackage{amssymb}%
\usepackage{graphicx}




\usepackage{amsfonts}
\usepackage{amsthm}
\usepackage[noend]{algorithm}
\usepackage[noend]{algorithmic}


\newcommand{\St}{\mathrm{St}}

\newcommand{\su}{\mathrm{succ}}

\usepackage{comment}



\bibliographystyle{plainurl}

\title{Interval Query Problem on Cube-free Median Graphs} 


\author{Soh Kumabe}{The University of Tokyo}{soh\_kumabe@mist.i.u-tokyo.ac.jp}{}{}

\authorrunning{S.Kumabe} 

\Copyright{Soh Kumabe} 

\ccsdesc[100]{Mathematics of computing $\to$ Combinatorial algorithms
} 

\keywords{Data Structures; Range Query Problems; Median Graphs} 

\category{} 

\relatedversion{} 

\supplement{}



\nolinenumbers 


\EventEditors{John Q. Open and Joan R. Access}
\EventNoEds{2}
\EventLongTitle{42nd Conference on Very Important Topics (CVIT 2016)}
\EventShortTitle{CVIT 2016}
\EventAcronym{CVIT}
\EventYear{2016}
\EventDate{December 24--27, 2016}
\EventLocation{Little Whinging, United Kingdom}
\EventLogo{}
\SeriesVolume{42}
\ArticleNo{23}

\begin{document}

\maketitle

\begin{abstract}
In this paper, we introduce the \emph{interval query problem} on cube-free median graphs. Let $G$ be a cube-free median graph and $\mathcal{S}$ be a commutative semigroup. For each vertex $v$ in $G$, we are given an element $p(v)$ in $\mathcal{S}$. For each query, we are given two vertices $u,v$ in $G$ and asked to calculate the sum of $p(z)$ over all vertices $z$ belonging to a $u-v$ shortest path. This is a common generalization of range query problems on trees and grids. In this paper, we provide an algorithm to answer each interval query in $O(\log^2 n)$ time. The required data structure is constructed in $O(n\log^3 n)$ time and $O(n\log^2 n)$ space. To obtain our algorithm, we introduce a new technique, named the \emph{staircases decomposition}, to decompose an interval of cube-free median graphs into simpler substructures.
\end{abstract}

\section{Introduction}

The \emph{range query problem}~\cite{gabow1984scaling} is one of the most fundamental problems in the literature on data structures, particularly for string algorithms~\cite{gusfield1997algorithms}. Let $f$ be a function defined on arrays. In the range query problem, we are given an array $P=(p(1),\dots, p(n))$ of $n$ elements and a \emph{range query} defined by two integers $i,j$ with $1\leq i\leq j\leq n$. For each query $(i,j)$, we are asked to return the value $f((p(i),\dots, p(j)))$.
The main interest of this problem is the case where $f$ is defined via a \emph{semigroup operator}~\cite{yao1982space}. Let $\mathcal{S}$ be a semigroup with operator $\oplus$, and let $P$ consist of elements in $\mathcal{S}$. Then, the function $f$ is defined as $f((p(i),\dots, p(j)))=p(i)\oplus \dots \oplus p(j)$. Typical examples of semigroup operators are sum, max, and min. The fundamental result~\cite{yao1982space,yao1985complexity} is that for any constant integer $k$, a range query can be answered in $O(\alpha_k(n))$ time, where $\alpha_k$ is a slow-growing function related to the inverse of the Ackermann function. The required data structure is constructed in linear time and space. \emph{Range minimum query problem}, i.e., $\oplus=\min$, is one of the well-studied problems in the literature, and it admits a constant-time algorithm with a data structure constructed in linear time and space~\cite{alstrup2002nearest, bender2000lca,bender2005lowest,gabow1984scaling,harel1984fast}.

This problem is generalized into trees and grids. In these settings, we are given a tree/grid $G$ and an element $p(v)$ for each vertex of $G$. As a query, given two vertices $u,v$ in $G$, we are asked to calculate the sum~\footnote{In this paper, for simplicity, we represent the semigroup operation by the terms of summation; that is, we denote $a\oplus a'$ by the word \emph{sum} of $a$ and $a'$ for $a,a'\in \mathcal{S}$.} of the elements assigned at the vertices on a $u-v$ shortest path. In particular, we are asked to calculate the sum of the elements on the unique $u-v$ path for trees and the axis-parallel rectangle with corners $(u,v)$ on its diagonal for grids. For constant dimensional grids, an almost-constant time algorithm~\cite{chazelle1989computing} with linear space on semigroup operators and a constant-time algorithm for range minimum query is known~\cite{yuan2010data}. For range query problem on trees, an almost-constant time algorithm~\cite{chazelle1987computing} with linear space is known on semigroup operators; see~\cite{brodal2011path} for further survey on the problem on trees, particularly for dynamic version.

In this paper, we introduce a common generalization of the two above mentioned cases, named \emph{interval query problem} on \emph{median graphs}. 
Let $G=(V(G),E(G))$ be a connected graph with $n$ vertices. For two vertices $u,v\in V(G)$, let the \emph{interval} $I[u,v]$ be the set of vertices belonging to a $u-v$ shortest path, where the length of a path is defined by the number of its edges. The graph $G$ is called a \emph{median graph} if for all $u,v,w\in V(G)$, $I[u,v]\cap I[v,w]\cap I[w,u]$ is a singleton~\cite{avann1961metric,birkhoff1947ternary,nebesky1971median}. The median graph $G$ is said to be \emph{cube-free} if $G$ does not contain a cube as an induced subgraph. Trees and grids are examples of cube-free median graphs. 
In our problem, we are given a median graph $G$ and an element $p(v)$ of a commutative semigroup $\mathcal{S}$ for each vertex $v$ of $G$. As a query, given two vertices $u,v$ in $G$, we are asked to calculate $p(I[u,v])$~\footnote{For a vertex subset $X$, we denote the sum of $p(z)$ over all $z\in X$ by $p(X)$.}. The interval query problem on cube-free median graphs is a common generalization of the range query problems on trees and grids.

In this paper, we provide an algorithm to the interval query problem on cube-free median graphs. The main result here is presented as follows:

\begin{theorem}
There is an algorithm to answer interval queries on cube-free median graphs in $O(\log^2 n)$ time. The required data structure is constructed in $O(n\log^3 n)$ time and $O(n\log^2 n)$ space, where $n$ is the number of vertices in a given cube-free median graph.
\end{theorem}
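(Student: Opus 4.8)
The plan is to build the data structure from two nested decompositions. The \emph{outer} one is a recursive decomposition of $G$ of depth $O(\log n)$: repeatedly cut $G$ along a single $\Theta$-class $H$ — removing the carrier of $H$, a prism $T\times K_2$ over a tree $T$ (a tree because $G$ is cube-free), separates $G$ into its two halfspaces $H^{-}$ and $H^{+}$, and $T$ is our handle on the single ``transverse coordinate'' of the cut. The \emph{inner} one is the \emph{staircase decomposition}: at each node of the outer recursion we will reorganize the portion of a query interval that is attached to the corresponding hyperplane into a single \emph{staircase} — a monotonically varying family of sub-intervals indexed by a path inside the hyperplane-tree — and equip it with a one-dimensional range structure over $\mathcal{S}$. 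A query is then answered by descending the outer recursion, harvesting one staircase value at each visited node, and summing the resulting $O(\log n)$ semigroup elements. Since $\mathcal{S}$ is only assumed commutative — not idempotent — the guiding constraint everywhere is that the harvested pieces form an honest \emph{partition} of $I[u,v]$, so no vertex is summed twice; this is exactly why a bare ``cover by rectangles'' does not suffice and the staircase viewpoint is needed.

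\textbf{Step 1: behaviour of a single interval under a cut.}
Since $I[u,v]$ is a convex, hence induced, subgraph of $G$, it is again a cube-free median graph with the two distinguished antipodal vertices $u,v$; I would first record that it is \emph{graded} by the levels $L_k=\{z\in I[u,v]:d(u,z)=k\}$ and that two-dimensionality forces each $L_k$ to be path-like, branchings being opened and closed only by squares. The key lemma concerns a $\Theta$-class $H$ separating $u$ from $v$: a shortest $u$–$v$ path crosses $H$ exactly once, and, writing $v^{-}$ for the gate of $v$ in the (convex, hence gated) halfspace $H^{-}$ and $u^{+}$ for the gate of $u$ in $H^{+}$, one gets the clean identities
\[ I[u,v]\cap H^{-}=I[u,v^{-}], \qquad I[u,v]\cap H^{+}=I[u^{+},v], \]
so $I[u,v]$ splits, as a disjoint union, into two strictly smaller intervals, each anchored on the inner boundary tree of one halfspace. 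Applied naively this recursion branches, so the second half of Step 1 is to show that, as the anchor slides along the boundary tree, the resulting sub-intervals vary \emph{monotonically} (they are linearly ordered with well-defined prefixes) and therefore assemble into $O(\log n)$ staircases rather than an exponential recursion tree; establishing this monotonicity and the partition property is the combinatorial heart of the argument.

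\textbf{Step 2: the hierarchy, the stored structures, and the query.}
Build the outer decomposition tree by always cutting along a \emph{balanced} $\Theta$-class when one exists, and using a heavy-path-style choice otherwise, so that the depth is $O(\log n)$ and, at each fixed level, the halfspaces partition $V(G)$. At each node store an LCA/path-sum structure on the hyperplane-tree \cite{chazelle1987computing}, per-fiber precomputations along that tree, and, on top of them, the one-dimensional range structures over $\mathcal{S}$ of Step 1 \cite{yao1982space}; since across one level the underlying vertex sets sum to $O(n)$ and each carries an $O(\log n)$-factor range structure, the total space is $O(n\log^{2}n)$ and construction time $O(n\log^{3}n)$. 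To answer a query $(u,v)$, descend from the root: at a node with hyperplane $H$, if $u,v$ lie on the same side recurse into that side only (the halfspace is convex, so $I[u,v]$ lies inside it); otherwise identify by gate computations the staircase that $I[u,v]$ contributes at $H$, harvest its semigroup value in $O(\log n)$ time, and recurse into the two halfspaces on the two strictly smaller intervals given by the identities above, with the boundary-anchored side folded into the staircase it belongs to so that no re-branching occurs. This touches $O(\log n)$ nodes, spends $O(\log n)$ at each, and the harvested staircases partition $I[u,v]$; summing their $O(\log n)$ values yields $p(I[u,v])$ in $O(\log^{2}n)$ total.

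\textbf{Main obstacle.}
The technical core is the structure theorem of Step 1 — proving that the a priori exponential recursion generated by the cut identities collapses into $O(\log n)$ \emph{staircases}, i.e., that the sub-intervals anchored along a hyperplane-tree really do form a monotone, linearly ordered family supporting an honest (non-over-counting) one-dimensional range structure, and that these staircases partition the interval. A secondary obstacle is securing the $O(\log n)$-depth outer decomposition for ``thin'', tree-like cube-free median graphs that possess no balanced $\Theta$-class, which should be handled by a heavy-path argument phrased in terms of gated subgraphs; and throughout one must respect the non-idempotency of $\mathcal{S}$, which is what forces every cover produced along the way to be refined to a partition, with the gates on the hyperplane-trees used to make the cuts clean.
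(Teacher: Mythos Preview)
Your high-level architecture --- a balanced recursive decomposition of depth $O(\log n)$ combined with staircase structures on boundary trees --- is the same as the paper's, but two load-bearing pieces are either wrong or missing.

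\textbf{The outer decomposition.} Cutting along a single $\Theta$-class does not in general give balance: in the star $K_{1,k}$ (a cube-free median graph) every $\Theta$-class is a single edge and the two halfspaces have sizes $1$ and $k$, so no balanced hyperplane exists. Your fallback to a ``heavy-path-style choice'' for thin graphs is undeveloped, and it is unclear how it interleaves with the rest. The paper sidesteps this entirely by taking $m$ to be a \emph{median vertex} of $G$ and decomposing into the fibers of the star $\St(m)$; a known lemma then guarantees every fiber has at most $n/2$ vertices, so $O(\log n)$ depth follows with no case split. (At query time the paper does not harvest a staircase at every level either: it simply descends, spending $O(1)$ per level, until $u$ and $v$ first lie in different fibers, and does all the real work at that one level, where $I[u,v]$ meets at most nine fibers.)

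\textbf{Avoiding branching at query time.} Your split $I[u,v]=I[u,v^{-}]\sqcup I[u^{+},v]$ is correct, but both pieces are again general intervals (one endpoint on a boundary tree, the other deep inside the halfspace), so recursing into both sides is a binary recursion, not a path of $O(\log n)$ nodes. The phrase ``the boundary-anchored side folded into the staircase it belongs to'' is precisely the hard step, and it is not true as stated: an interval $I[u',v]$ with $u'$ on a tree $T$ with gated branches is \emph{not} a single staircase over a path in $T$. The paper's actual technical contribution is the \emph{staircases decomposition}: such an interval breaks into at most two genuine staircases whose bases are columns of $T$, plus one residual interval $I[u',w]$ where $w$ is an imprint of $u'$ in $T$ (or the lowest common ancestor of the two imprints). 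Crucially, this residual depends only on $u'$, not on $v$, so there are only $O(n)$ possible residuals per level and all their $p$-values are precomputed during construction using the recursive data structure on the fiber. At query time one then looks up the precomputed residual and evaluates the two staircases via segment trees laid over a heavy-light decomposition of $T$; no further recursion on boundary-anchored pieces is needed. You correctly flag this collapse as the ``main obstacle'' but supply neither the two-staircases-plus-residual structure theorem nor the $O(n)$-candidates precomputation trick, and without them the $O(\log^{2}n)$ query bound does not follow.
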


The time complexity of answering a query matches the complexity for the two-dimensional \emph{range tree}~\cite{lueker1978data} in the \emph{orthogonal range query problem}, without acceleration via \emph{fractional cascading}~\cite{chazelle1986fractional}. 

To obtain the algorithm, we introduce a new technique, named the \emph{staircases decomposition}. This technique provides a new method to decompose an interval of cube-free median graphs into a constant number of smaller intervals. Most of the candidates of the smaller intervals, which we refer to as \emph{staircases}, are well-structured, and an efficient algorithm to answer the interval queries can be constructed. The rest are not necessarily staircases; however, each of them are one of the $O(n\log n)$ candidates, and we can precalculate all the answers of the interval queries on these intervals.

Designing fast algorithms for median graphs is a recently emerging topic. The \emph{distance labeling scheme}~\cite{peleg2000proximity} is a type of data structure that is defined by the encoder and decoder pair. The encoder receives a graph and assigns a label for each vertex, whereas the decoder receives two labels and computes the distance of the two vertices with these labels. For cube-free median graphs, there is a distance labeling scheme that assigns labels with $O(\log^3 n)$ bits for each vertex~\cite{chepoi2019distance}. Very recently, a linear-time algorithm to find the \emph{median} of median graphs was built~\cite{beneteau2019medians}. This paper continues with this line of research and utilizes some of the techniques presented in these previous studies.

Various applications can be considered in the interval query problem on median graphs. The solution space of a 2-SAT formula forms a median graph, where two solutions are adjacent if one of them can be obtained by negating a set of pairwise dependent variables of the other~\cite{bandelt2008metric,mulder1979median,schaefer1978complexity}. For two solutions $u$ and $v$, the interval $I[u,v]$ corresponds to the set of the solutions $x$, such that for each truth variable, if the same truth value is assigned in $u$ and $v$, so does $x$. Suppose we can answer the interval queries to calculate sum (resp. min) in polylogarithmic time with a data structure of subquadratic time and space. Then, if we have the list of all feasible solutions of the given 2-SAT formula, we can calculate the number (resp. minimum weight) of these solutions in polynomial time of the number of variables for each query, without precalculating the answers for all possible queries. Note that, there is a polynomial-delay algorithm to enumerate all solutions to the given 2-SAT formula~\cite{creignou1997generating}. Therefore, if the number of the feasible solutions (and thus the number of vertices in the corresponding median graph) is small, we can efficiently list them. In social choice theory, the structure of median graphs naturally arises as a generalization of \emph{single-crossing preferences}~\cite{clearwater2015generalizing,demange2012majority} and every closed Condorcet domains admits the structure of a median graph~\cite{puppe2019condorcet}. For two preferences $u$ and $v$, the voters with their preferences in interval $I[u,v]$ prefer candidate $x$ to candidate $y$ whenever both $u$ and $v$ prefer $x$ to $y$. Therefore, using interval query, we can count the number of voters $w$ such that for all pairs of candidates, at least one of $u$ and $v$ has the same preference order as $w$ between these candidates.
Although these structures are not necessarily cube-free, we hope that our result will be the first and important step toward obtaining fast algorithms for these problems.

\subsection{Algorithm Overview}

Here we give high-level intuition to our algorithm. More detailed outline is given in Section~\ref{sec:outline}.

Let $G$ be a cube-free median graph. The first idea for our algorithm is to decompose $G$ recursively. We recursively divide $V(G)$ into some parts, called \emph{fibers}. Roughly speaking, a fiber is a set of the vertices located on the similar direction from the special vertex $m$ (see figure~(a)). Each fiber induces a cube-free median graph and, if we take $m$ properly, has at most $|V(G)|/2$ vertices; there are at most $O(\log n)$ recursion steps.

Let $u,v$ be vertices of $G$. Consider calculating $p(I[u,v])$. If $u$ and $v$ are in the same fiber, we calculate it recursively.
Otherwise, we can show that $I[u,v]$ intersects with only a constant number of fibers and the intersections are intervals with one end on the boundary of the fiber (Section~\ref{sec:decomposition}, see Figure~(h)). Thus, it is sufficient to construct an algorithm on such intervals.

To do this, we further decompose such an interval into more well-structured intervals, using our main technique named \emph{staircases decomposition} (Section~\ref{sec:stairs}, see Figure~(d),~(e)~and~(f)). Roughly speaking, we decompose the interval into at most two structured substructures names \emph{staircases} (figure~(b)) and a special interval of $O(n)$ candidates. For special intervals $I$, we just use the precalculated $p(V(I))$. For staircases $L$, we construct an algorithm to calculate $p(V(L))$ in $O(\log^2 n)$ time (Section~\ref{sec:process}), using the fact that the boundary of the fiber is actually a tree~\cite{chepoi2019distance}. We decompose this tree into paths by heavy-light decomposition and build segment trees to answer the queries.


\section{Basic Tools for Cube-Free Median Graphs and Trees}\label{sec:prelim}

In this section, we introduce basic facts about cube-free median graphs and trees.

Let $G$ be a connected, undirected, finite graph. We denote the vertex set of $G$ by $V(G)$. For two vertices $u$ and $v$ in $G$, we write $u\sim v$ if $u$ and $v$ are adjacent. For two vertices $u$ and $v$ of $G$, the \emph{distance} $d(u,v)$ between them is the minimum number of edges on a path connecting $u$ and $v$, and the \emph{interval} $I[u,v]$ is the set of vertices $w$ which satisfies $d(u,v)=d(u,w)+d(w,v)$. The graph $G$ is a \emph{median graph} if for any three vertices $u,v,w$, $I[u,v]\cap I[v,w]\cap I[w,u]$ contains exactly one vertex, called \emph{median} of $u,v$ and $w$. Median graphs are bipartite and do not contain $K_{2,3}$ as a subgraph.
A median graph is \emph{cube-free} if it does not contain a (three-dimensional) cube graph as an induced subgraph. The followings hold.

\begin{lemma}[{\rm \cite{chepoi2013shortest}}]\label{lem:subgrid}
Any interval in a cube-free median graph induces an isometric subgraph of a two-dimensional grid.
\end{lemma}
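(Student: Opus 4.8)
The plan is to establish two things: (i) every interval $I[u,v]$, with the induced subgraph structure, is isomorphic to some subgraph of the integer grid $\mathbb{Z}^2$, and (ii) this embedding is isometric, i.e. distances inside $I[u,v]$ agree with distances in the grid. The starting point is the classical fact that in any median graph the interval $I[u,v]$ is itself a median graph, and moreover a *convex* subgraph, so distances inside $I[u,v]$ coincide with distances in $G$. Thus it suffices to understand the structure of $I[u,v]$ as a median graph in its own right. A finite median graph has an isometric embedding into a hypercube, with the coordinates given by the $\Theta$-classes (Djoković–Winkler relation) of its edges; the number of coordinates is the number of $\Theta$-classes. So the real content is to show that an interval of a cube-free median graph has at most two $\Theta$-classes whose "directions" are genuinely independent — more precisely, that the $\Theta$-classes split into two groups so that contracting within each group collapses $I[u,v]$ onto a path, giving the two grid coordinates.

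Concretely, I would argue as follows. Fix $u,v$ and set $d=d(u,v)$. Every vertex $w\in I[u,v]$ lies on a shortest $u$–$v$ path, so the edges of $I[u,v]$ are partitioned into $\Theta$-classes $H_1,\dots,H_k$, each of which is a "cut" separating $u$ from $v$; walking from $u$ to $v$ along any geodesic crosses each $H_i$ exactly once, so $k=d$. For each $\Theta$-class $H_i$ one gets a coordinate function $f_i:V(I[u,v])\to\{0,1\}$ (which side of the cut), and $w\mapsto(f_1(w),\dots,f_k(w))$ is the isometric hypercube embedding. Now I want to show cube-freeness forces the relation graph on $\{H_1,\dots,H_k\}$ — say, "$H_i$ and $H_j$ are linked if some square of $I[u,v]$ has opposite edges in $H_i$ and the other pair in $H_j$" — to be such that we can two-color... no: rather, I expect the right statement is that the $\Theta$-classes can be linearly ordered $H_1,\dots,H_k$ along $u\to v$ and partitioned into a prefix-type structure giving exactly two "crossing" families. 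The key local fact: if three $\Theta$-classes were pairwise crossing inside $I[u,v]$, the corresponding eight vertices would form an induced cube (using that intervals are median and that in a median graph pairwise-crossing classes produce a cube as a convex subgraph), contradicting cube-freeness. Hence the crossing graph on the $\Theta$-classes is triangle-free; combined with the fact that it is a median-graph-derived structure, I expect it to be a *complete bipartite* crossing pattern, which is exactly the condition characterizing (subgraphs of) 2D grids.

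So the main steps in order: (1) recall/cite that $I[u,v]$ is convex, hence isometric, in $G$, and is itself a cube-free median graph; (2) invoke the canonical isometric embedding of a median graph into the hypercube via $\Theta$-classes, and note the classes of $I[u,v]$ are exactly the $d=d(u,v)$ cuts crossed by geodesics; (3) show no three $\Theta$-classes of $I[u,v]$ pairwise cross, since that would yield an induced cube; (4) deduce that the $\Theta$-classes split into two families $A$ and $B$ with all crossings going between $A$ and $B$ and none inside either family — so contracting all classes in $A$ yields a path (a geodesic of length $|B|$) and contracting $B$ yields a path of length $|A|$; (5) conclude the map $w\mapsto(\#\text{ of }A\text{-cuts below }w,\ \#\text{ of }B\text{-cuts below }w)$ is an isometric embedding into the grid $\{0,\dots,|A|\}\times\{0,\dots,|B|\}$, and that adjacency is preserved because each edge of $I[u,v]$ belongs to a single $\Theta$-class and hence changes exactly one coordinate by one.

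The hard part will be step (3)–(4): turning "cube-free" into the clean combinatorial statement that the crossing pattern of $\Theta$-classes is bipartite with no crossings inside the two sides. The subtlety is that "$H_i$ crosses $H_j$" (their four quadrants all nonempty) must be upgraded, using the median property, to the existence of an actual square with the right edge labels, and then three mutually crossing classes must be upgraded to an actual induced $8$-vertex cube inside the convex set $I[u,v]$ — for which I would use that in a median graph the convex closure of such a configuration is a cube, and convexity of $I[u,v]$ keeps it induced. I would also need the small extra observation that non-crossing, non-nested behavior can't happen for cuts that all separate $u$ from $v$ (so the only alternatives are "crossing" or "parallel"), which is what makes the two-family partition forced rather than merely possible.
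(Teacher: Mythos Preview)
The paper does not prove this lemma; it is quoted from \cite{chepoi2013shortest} as a known structural fact and used as a black box. So there is no in-paper argument to compare against, and what follows is an assessment of your plan on its own merits.

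Your outline is essentially correct and is the natural route: intervals in median graphs are convex, hence themselves median; the hypercube embedding via $\Theta$-classes gives $d=d(u,v)$ cut-classes each separating $u$ from $v$; and cube-freeness forbids three pairwise-crossing classes. The soft spot is step~(4). ``Triangle-free crossing graph'' does not by itself yield ``bipartite crossing graph'' (think $C_5$), and your remark that the extra median structure should force it is right but needs to be made precise. The clean way is this: for classes that separate $u$ from $v$, non-crossing forces nesting (as you note, two of the four quadrants always contain $u$ and $v$, so the empty quadrant makes one halfspace contain the other). Thus the $\Theta$-classes carry a partial order in which ``crossing'' is exactly ``incomparable''. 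Equivalently, $I[u,v]$ is a finite distributive lattice with meet $m(u,\cdot,\cdot)$ and join $m(v,\cdot,\cdot)$, and by Birkhoff it is $J(P)$ for the poset $P$ of $\Theta$-classes. Your step~(3) then says $P$ has no $3$-element antichain, i.e.\ width at most~$2$; Dilworth's theorem (trivial in the width-$2$ case) gives a decomposition $P=A\cup B$ into two chains, which is precisely the bipartition you want. The map $D\mapsto(|D\cap A|,|D\cap B|)$ on down-sets is then the isometric grid embedding, exactly your step~(5).

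One small correction: the crossing graph need not be \emph{complete} bipartite (e.g.\ take $P$ to be two $2$-chains with a single extra comparability across them); bipartite is what you actually need and what the Dilworth argument delivers.
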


\begin{lemma}[{\rm \cite{chepoi2019distance}}]\label{lem:quad}
Let $u,v,w_1,w_2$ be four pairwise distinct vertices of a median graph such that $v\sim w_1, v\sim w_2$ and $d(u,v)-1=d(u,w_1)=d(u,w_2)$. Then, there is unique vertex $z$ with $w_1\sim z,w_2\sim z$ and $d(u,z)=d(u,v)-2$.
\end{lemma}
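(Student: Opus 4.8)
The plan is to exhibit $z$ explicitly as a median of the triple $u,w_1,w_2$, and then to pin down uniqueness using the $K_{2,3}$-freeness of median graphs.

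For existence, I would take $z$ to be the median of $u,w_1,w_2$, which exists because $G$ is a median graph. Since $G$ is bipartite and $w_1\sim v\sim w_2$ with $w_1\neq w_2$, the vertices $w_1$ and $w_2$ lie in the same color class, so $d(w_1,w_2)=2$ (it is even, at least $2$ by distinctness, and at most $2$ via $v$). From $z\in I[w_1,w_2]$ we get $d(z,w_1)+d(z,w_2)=2$, and from $z\in I[u,w_1]\cap I[u,w_2]$ together with $d(u,w_1)=d(u,w_2)=d(u,v)-1$ we get $d(u,z)+d(z,w_1)=d(u,z)+d(z,w_2)=d(u,v)-1$, whence $d(z,w_1)=d(z,w_2)$ and so both equal $1$. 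Therefore $z\sim w_1$, $z\sim w_2$, and $d(u,z)=(d(u,v)-1)-1=d(u,v)-2$, as required. Note also that $z\neq v$, since $d(u,z)\neq d(u,v)$.

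For uniqueness, suppose $z'$ is any vertex with $w_1\sim z'$, $w_2\sim z'$ and $d(u,z')=d(u,v)-2$. Then $z'$ and $v$ are both common neighbors of $w_1$ and $w_2$, and $z'\neq v$ (again because their distances to $u$ differ), so $v$ and $z$ are two distinct common neighbors of $w_1,w_2$. Since a median graph contains no $K_{2,3}$ as a subgraph, any two vertices have at most two common neighbors; hence the common neighborhood of $w_1,w_2$ is exactly $\{v,z\}$. As $z'\neq v$, we conclude $z'=z$.

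I expect the existence half to be entirely routine once the median of $u,w_1,w_2$ is written down — it is just a matter of combining the metric identities defining intervals, the one subtlety being the use of bipartiteness to conclude $d(w_1,w_2)=2$. The genuine content sits in uniqueness, and it rests on $K_{2,3}$-freeness: this is precisely what forbids a second candidate distinct from the median, so it is the structural fact I would be careful to invoke.
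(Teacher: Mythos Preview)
Your argument is correct. The paper does not actually supply its own proof of this lemma; it is quoted from \cite{chepoi2019distance} and used as a black box. So there is nothing in the paper to compare against, but your proof stands on its own: taking $z$ as the median of $u,w_1,w_2$ and using bipartiteness to force $d(w_1,w_2)=2$ gives existence cleanly, and the uniqueness step via $K_{2,3}$-freeness (which the paper does record in Section~\ref{sec:prelim}) is exactly the right tool, since $v$ and $z$ already account for the two allowed common neighbors of $w_1,w_2$.
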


From now on, let $G$ be a cube-free median graph with $n$ vertices. Let $X$ be a subset of $V(G)$. For vertex $z\in V(G)$ and $x\in X$, $x$ is the \emph{gate} of $z$ in $X$ if for all $w\in X$, $x\in I[z,w]$. The gate of $z$ in $X$ is unique (if it exists) because it is the unique vertex in $X$ that minimizes the distance from $z$. $X$ is \emph{gated} if all vertices $z\in V(G)$ have a gate in $X$. The following equivalence result is known.

\begin{lemma}[{\rm \cite{chepoi1989classification,chepoi2019distance}}]
Let $X$ be a vertex subset of the median graph $G$. Then, following three conditions are equivalent.
\begin{description}
  \item[(a)] $X$ is gated.
  \item[(b)] $X$ is \emph{convex}, i.e., $I[u,v]\subseteq X$ for all $u,v\in X$. 
  \item[(c)] $X$ induces a connected subgraph and $X$ is \emph{locally convex}, i.e., $I[u,v]\subseteq X$ for all $u,v\in X$ with $d(u,v)=2$.
\end{description}
\end{lemma}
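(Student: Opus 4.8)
The plan is to prove the cycle of implications $(a)\Rightarrow(b)\Rightarrow(c)\Rightarrow(a)$. The first two are short, purely metric, arguments, while $(c)\Rightarrow(a)$ carries all the weight and is where the median structure — in particular the quadrangle condition of Lemma~\ref{lem:quad}, bipartiteness, and $K_{2,3}$-freeness — enters.

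For $(a)\Rightarrow(b)$: take $u,v\in X$ and $w\in I[u,v]$, and let $x$ be the gate of $w$ in $X$. Since $x$ lies on a shortest $w$--$u$ path and on a shortest $w$--$v$ path, $d(w,u)+d(w,v)=2\,d(w,x)+d(x,u)+d(x,v)\ge 2\,d(w,x)+d(u,v)$; as $w\in I[u,v]$ the left-hand side equals $d(u,v)$, forcing $d(w,x)=0$, i.e. $w=x\in X$, so $I[u,v]\subseteq X$. The implication $(b)\Rightarrow(c)$ is immediate: local convexity is the special case $d(u,v)=2$ of convexity, and $G[X]$ is connected because every $I[u,v]$ already contains a $u$--$v$ path.

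For $(c)\Rightarrow(a)$ I would first upgrade $(c)$ to $(b)$ and then deduce $(a)$ from $(b)$. The step $(b)\Rightarrow(a)$ is easy: given $z\in V(G)$, pick $x\in X$ minimizing $d(z,x)$; for any $w\in X$, the median $m$ of $z,x,w$ lies in $I[x,w]\subseteq X$ by convexity and also in $I[z,x]$, so $d(z,m)\le d(z,x)$, which by the choice of $x$ forces $m=x$; then $x=m\in I[z,w]$, so $x$ is the gate of $z$. The real work is $(c)\Rightarrow(b)$, which I would prove by induction on $d(u,v)$ for $u,v\in X$. The cases $d(u,v)\le 2$ are trivial or are exactly local convexity. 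For $d(u,v)=k\ge 3$, using the standard fact that in a median graph $I[u,v]=\{u\}\cup\bigcup\{I[u',v]:u'\sim u,\ u'\in I[u,v]\}$ together with the induction hypothesis applied to the pairs $(u',v)$ (which are at distance $k-1$), it suffices to show that every neighbor $u'$ of $u$ on a $u$--$v$ geodesic lies in $X$. To do this, use connectedness to pick a neighbor $u_2\in X$ of $u$; if $u_2=u'$ we are done, and otherwise $d(u',u_2)=2$ by bipartiteness and $I[u',u_2]$ is either a path of length two through $u$ or a $4$-cycle (no further common neighbors, since $G$ has no $K_{2,3}$). Comparing distances to $v$ and applying Lemma~\ref{lem:quad} pins down the relevant unit square, and local convexity applied to $u$ and the fourth corner (which one checks are at distance $2$) then places $u'$ into $X$.

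The main obstacle is precisely this inductive step: controlling where the auxiliary neighbor $u_2\in X$ sits relative to $I[u,v]$. The clean case is when $u_2$ itself lies on a $u$--$v$ geodesic; the awkward case is when it does not, where one must either re-choose $u_2$ — walking within $X$ towards $v$, which is possible by connectedness — or supplement the argument with a separate claim (for instance, that among the vertices of $X$ nearest to $v$ there is one adjacent to $u'$) before the quadrangle condition applies. Once this configuration is resolved, everything else reduces to routine manipulation of the identities $d(u,v)=d(u,u')+d(u',v)$ and the basic facts that median graphs are bipartite, $K_{2,3}$-free, and that intervals are isometric subgrids by Lemma~\ref{lem:subgrid}.
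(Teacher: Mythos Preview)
The paper does not prove this lemma; it is quoted from \cite{chepoi1989classification,chepoi2019distance} without argument, so there is no proof in the paper to compare against. Your implications $(a)\Rightarrow(b)$, $(b)\Rightarrow(c)$, and $(b)\Rightarrow(a)$ via the median are correct and standard.

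The gap you flag in $(c)\Rightarrow(b)$ is genuine, and your sketch does not close it. When the chosen neighbour $u_2\in X$ of $u$ satisfies $d(u_2,v)=d(u,v)+1$ rather than $d(u,v)-1$, the quadrangle condition applied to $u,u',u_2$ relative to $v$ does not place the fourth corner inside a shorter interval, so you cannot invoke the inductive hypothesis on $I[u_2,v]$. ``Walking within $X$ towards $v$'' is the right instinct but needs to be made precise. One clean route is to first prove that $G[X]$ is isometric in $G$: take $u,v\in X$ with $d_X(u,v)=\ell>d_G(u,v)=k$ minimizing $\ell$, let $p_1,p_2$ be the first two steps of a shortest $X$-path; minimality forces $d_G(p_1,v)=\ell-1=k+1$ and $d_G(p_2,v)=\ell-2=k$, whence $d_G(u,p_2)=2$ and $I[u,p_2]\subseteq X$ by local convexity; now Lemma~\ref{lem:quad} applied to $v,p_1,u,p_2$ yields $z\in I[u,p_2]\subseteq X$ with $z\sim u$ and $d_G(z,v)=k-1$, contradicting minimality of $\ell$. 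Once isometry is established you may always choose $u_2$ with $d(u_2,v)=d(u,v)-1$, and then your square argument goes through verbatim: Lemma~\ref{lem:quad} gives $z\sim u',z\sim u_2$ with $z\in I[u_2,v]\subseteq X$ by induction, and local convexity on $I[u,z]$ yields $u'\in X$.

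One further correction: Lemma~\ref{lem:subgrid} is specific to \emph{cube-free} median graphs, whereas the present statement is for arbitrary median graphs, so it should not be invoked here. Bipartiteness, $K_{2,3}$-freeness, and Lemma~\ref{lem:quad} are the legitimate tools, and they suffice.
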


An induced subgraph of $G$ is \emph{gated} (resp. \emph{convex}, \emph{locally convex}) if its vertex set is gated (resp. convex, locally convex).
The intersection of two convex subsets is convex. Any interval of median graphs are convex.

For a convex subset $X$ and a vertex $x\in X$, the \emph{fiber} $F_X(x)$ of $x$ \emph{with respect to} $X$ is the set of vertices in $G$ whose gate in $X$ is $x$. Two fibers $F_X(x), F_X(y)$ are \emph{neighboring} if there are vertices $x'\in F_X(x)$ and $y'\in F_X(y)$ such that $x'\sim y'$, which is equivalent to $x\sim y$~\cite{chepoi2019distance}.
Fibers for all $x\in X$ define a partition of $V(G)$. For two adjacent vertices $x,y\in X$, the \emph{boundary} $T_X(x,y)$ of $F_X(x)$ \emph{relative to} $F_X(y)$ is the set of the vertices which have a neighbor in $F_X(y)$. $T_X(x,y)$ and $T_X(y,x)$ are isomorphic. A vertex in $T_X(x,y)$ has a unique neighbor in $T_X(y,x)$, which is the corresponding vertex under that isomorphism.
For vertex $x\in X$, a \emph{total boundary} $T_X(x)$ of $F_X(x)$ is the union of all $T_X(x,y)$ for $y\in X$ with $x\sim y$.
The subgraph $H$ is \emph{isometric} in $G$ if for all $u,v\in V(H)$, there is a path in $H$ with length $d(u,v)$.
A rooted tree has \emph{gated branches} if any of its root-leaf path is convex. The next lemma exploits the structures of the boundaries of fibers of cube-free median graphs. 

\begin{lemma}\label{lem:gtree}\label{lem:samedist}\label{lem:uniqueneighbor}\label{lem:neighboringpath}
Let $X$ be a convex vertex subset of cube-free median graph $G$. Let $x,y\in X$ and assume $x\sim y$. Then, the followings hold.\\
{\it (i)}~{\rm (\cite{chepoi2019distance})} $T_X(x,y)$ induces a tree, which is convex.\\
{\it (ii)}~{\rm (\cite{chepoi2019distance})} $T_X(x)$ induces a tree with gated branches, which is isometric in $G$.
\end{lemma}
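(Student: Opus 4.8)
Both parts are quoted from \cite{chepoi2019distance}; the plan is to recover them from the tools already available — halfspaces and splits, the quadrangle condition (Lemma~\ref{lem:quad}), bipartiteness, and cube-freeness.

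\emph{Part (i).} First I would observe that $F_X(x)$ is convex: a vertex $w$ has gate $x$ in $X$ iff $d(w,y)=d(w,x)+1$ for every $y\in X$ adjacent to $x$ (the forward implication is the gate property, the converse follows by sliding a geodesic inside the convex set $X$), so $F_X(x)$ is an intersection of halfspaces $\{z:d(z,x)<d(z,y)\}$, hence convex. Next I would show $T_X(x,y)=F_X(x)\cap\partial W$, where $W=\{z:d(z,x)<d(z,y)\}$ and $\partial W$ is its boundary (the vertices of $W$ with a neighbour in $V(G)\setminus W$): the inclusion $\subseteq$ is clear, and conversely, if $w\in F_X(x)$ has a neighbour $w'$ outside $W$, a short distance computation (using that $x$ cannot be the gate of $w'$) gives $d(w',X)=d(w',y)$, so $w'\in F_X(y)$. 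Since $\partial W$ is convex (a standard property of splits in median graphs), $T_X(x,y)$ is an intersection of convex sets, hence convex, hence connected and isometric in $G$, and it is itself a cube-free median graph. Now a connected median graph is a tree exactly when it contains no $4$-cycle (its shortest cycle is isometric, and one checks that an isometric cycle of length $\geq 6$ creates a shorter $4$-cycle through the median of three well-chosen vertices on it), so it suffices to rule out an induced $4$-cycle $a_1a_2a_3a_4$ in $T_X(x,y)$. Let $b_i$ be the unique neighbour of $a_i$ in $T_X(y,x)$. For consecutive vertices, say with $d(a_{i+1},x)=d(a_i,x)-1$, apply Lemma~\ref{lem:quad} with apex $y$ to $v=a_i$, $w_1=a_{i+1}$, $w_2=b_i$ (the distance hypotheses follow from the gate relations for $F_X(x)$ and $F_X(y)$): the vertex it produces must be $b_{i+1}$, since it cannot lie in $F_X(x)$ (otherwise it would be the split-partner of $b_i$, i.e.\ $a_i$, with the wrong distance to $y$) and a fibre touching both $F_X(x)$ and $F_X(y)$ must be $F_X(y)$ as $G$ is triangle-free. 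Hence $b_1b_2b_3b_4$ is a $4$-cycle; together with the matching edges $a_ib_i$ and the fact that non-matched vertices on opposite sides of the split are non-adjacent, the eight vertices induce a cube — contradiction.

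\emph{Part (ii).} Write $T_X(x)=\bigcup_{y\in X,\,y\sim x}T_X(x,y)$, a union of convex trees through $x$. It is connected, so I would prove it is a tree by showing every $w\in T_X(x)\setminus\{x\}$ has exactly one neighbour in $T_X(x)$ closer to $x$ (so $|E(T_X(x))|=|V(T_X(x))|-1$). ``At least one'' comes from the branch $T_X(x,y)\ni w$, which is an isometric tree through $x$. For ``at most one'': two such neighbours $w_1,w_2$ would have to lie in different branches (each branch is a tree, hence contributes at most one), Lemma~\ref{lem:quad} with apex $x$ would give a common neighbour $z\in F_X(x)$ of $w_1,w_2$ at distance $d(x,w)-2$, producing an induced $4$-cycle $ww_1zw_2$ in $F_X(x)$; then, tracking which neighbouring fibres each of $w,w_1,w_2,z$ meets and invoking cube-freeness (or the absence of $K_{2,3}$), one reaches a contradiction. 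Once $T_X(x)$ is a tree rooted at $x$, isometry follows from a median argument: for $u,v\in T_X(x)$ the median $m(u,v,x)$ lies in $I[x,u]\cap I[x,v]\cap I[u,v]$, hence in both branches containing $u$ and $v$ (convexity), hence on the tree-paths $x$--$u$ and $x$--$v$, so it equals their least common ancestor $r$; thus $r\in I[u,v]$, and since each branch is an isometric tree, $d_G(u,v)=d_G(u,r)+d_G(r,v)$ is the tree distance. Finally, every root--leaf path of $T_X(x)$ lies in a single branch $T_X(x,y)$, where it coincides with $I_G[x,\ell]$; convexity of $T_X(x,y)$ then forces this path to be convex in $G$, so $T_X(x)$ has gated branches.

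\emph{Main obstacle.} The hard step is ``at most one closer neighbour'' in part~(ii): unlike in part~(i), the two candidate neighbours need not lie on a single boundary tree, so one must carefully analyse, for the four vertices $w,w_1,w_2,z$ of the would-be square, exactly which neighbouring fibres they reach, and verify that no configuration consistent with the fibre partition avoids a forbidden induced cube or $K_{2,3}$. This is precisely the case analysis of \cite{chepoi2019distance}; the remaining steps are the halfspace and median bookkeeping sketched above.
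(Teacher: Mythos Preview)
The paper supplies no proof of this lemma: both parts are stated with a bare citation to \cite{chepoi2019distance} and nothing further. Your proposal is therefore not competing with any argument in the paper itself; it is a reconstruction of the cited result.

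As such a reconstruction it is sound and follows the same route as \cite{chepoi2019distance}: realise $T_X(x,y)$ as an intersection of two convex sets (a fibre and a halfspace boundary), kill squares in it via the matching with $T_X(y,x)$ and cube-freeness, then assemble $T_X(x)$ by showing each non-root vertex has a unique neighbour closer to $x$, with isometry recovered from the median $m(u,v,x)$ and gated branches from part~(i). The step you flag as the main obstacle --- ruling out two closer neighbours in (ii) --- is indeed the one place where a genuine case analysis on which neighbouring fibres the four corners of the square $ww_1zw_2$ reach is required, and your decision to defer that analysis to \cite{chepoi2019distance} is honest rather than a hidden gap. One small point worth tightening: in your gated-branches argument you assert that a root--leaf path ``coincides with $I_G[x,\ell]$''; this uses that $I_G[x,\ell]\subseteq T_X(x,y)$ by convexity and that an interval contained in a tree must be a path, both of which you have available but do not quite state.
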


The following is folklore in a literature of median graphs. A proof is in Appendix~\ref{app:proof}.

\begin{lemma}[{\rm folklore}]\label{lem:convexstar}
Let $X$ be a convex vertex set of a median graph and let $Y$ be a convex subset of $X$. For $x\in X$, let $F(x)$ be the fiber of $x$ with respect to $X$. Then, $\bigcup_{y\in Y}F(y)$ is convex.
\end{lemma}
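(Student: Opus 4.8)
The plan is to prove the statement through the characterization (stated above) that a vertex subset is convex if and only if it is connected and locally convex. Write $Z=\bigcup_{y\in Y}F(y)$, where $F(x)$ denotes the fiber of $x\in X$ with respect to $X$, and let $g=g_X$ be the gate map onto $X$. Note that $Y\subseteq Z$ (since $g(y)=y$ for $y\in Y$) and, more importantly, $Z=\{z\in V(G):g(z)\in Y\}$. I will verify that $Z$ is connected and that $I[z_1,z_2]\subseteq Z$ whenever $z_1,z_2\in Z$ and $d(z_1,z_2)=2$.

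For connectedness, I would first check that each fiber $F(x)$ is connected: if $z\in F(x)$ and $z'$ lies on a shortest $z$--$x$ path, then for every $w\in X$ we get $d(z',w)\ge d(z,w)-d(z,z')\ge d(z,x)-d(z,z')=d(z',x)$, so $x$ is still the unique closest vertex of $X$ to $z'$, i.e. $z'\in F(x)$; hence $F(x)$ is connected through shortest paths to $x$. Since $Y$ is convex, hence connected, and for adjacent $y,y'\in Y$ the fibers $F(y)$ and $F(y')$ are neighboring — so joined by an edge — the union $Z$ is connected.

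For local convexity, let $z_1,z_2\in Z$ with $d(z_1,z_2)=2$ and let $w$ be any common neighbor of $z_1$ and $z_2$; I must show $g(w)\in Y$. Set $a_i=g(z_i)\in Y$ and $c=g(w)$. The gate map is non-expansive: for any $u,v$, combining $g(u)\in I[u,g(v)]$, $g(v)\in I[v,g(u)]$ with the triangle inequality gives $d(g(u),g(v))\le d(u,v)$. Hence $d(c,a_1)\le1$ and $d(c,a_2)\le1$. If $c\in\{a_1,a_2\}$ we are done. Otherwise $c$ is a common neighbor of $a_1$ and $a_2$ distinct from both; if moreover $a_1\ne a_2$, then $a_1\not\sim a_2$ by bipartiteness, so $d(a_1,a_2)=2$ and $c\in I[a_1,a_2]\subseteq Y$ by convexity of $Y$, as needed.

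The remaining case, which I expect to be the crux, is $a_1=a_2=:a$ — equivalently, one must show that each fiber $F(a)$ is locally convex. Here I would argue by contradiction, assuming $c\ne a$ (so $c\sim a$). From $c=g(w)$ and $a=g(z_i)\in X$ one gets $d(w,a)=d(w,c)+1$ and $d(z_i,c)=d(z_i,a)+1$, and then the two bounds $d(z_i,c)\le d(z_i,w)+d(w,c)$ and $d(w,a)\le d(w,z_i)+d(z_i,a)$ force $d(z_1,a)=d(z_2,a)=d(w,c)=:k$. Now put $z=m(z_1,z_2,a)$; the closest-vertex computation above shows $z\in F(a)$, and $d(z,a)=k-1$, so $z\ne w$, and since $z\in I[z_1,z_2]\setminus\{z_1,z_2\}$ it is a second common neighbor of $z_1,z_2$, whence (by the absence of $K_{2,3}$) $\{z_1,z_2,w,z\}$ spans a $4$-cycle. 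If $k=1$ then $z=a$ and $c,z_1,z_2$ are three distinct common neighbors of $a$ and $w$, contradicting the absence of $K_{2,3}$. If $k\ge2$ then $d(w,z)=2$, $d(z,c)=d(z,a)+1=k=d(w,c)$, $I[z,w]=\{z,z_1,z_2,w\}$, and a direct check shows none of $z,z_1,z_2,w$ lies in $I[z,c]\cap I[w,c]$; thus $I[z,w]\cap I[z,c]\cap I[w,c]=\varnothing$, contradicting the existence of the median $m(z,w,c)$. Hence $c=a\in Y$, which finishes the argument. The delicate step is precisely this last one, the local convexity of a single fiber; for it the quadrilateral lemma (\cref{lem:quad}) and the exclusion of $K_{2,3}$ are exactly the tools needed, while the rest is bookkeeping with the triangle inequality and the defining property of gates.
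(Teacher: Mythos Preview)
Your proof is correct and follows the same high-level strategy as the paper: show that $Z=\bigcup_{y\in Y}F(y)$ is connected and locally convex, reducing local convexity to the non-expansiveness of the gate map (equivalently, to the fact that $F(x)$ and $F(x')$ are neighboring iff $x\sim x'$).

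The only substantive difference lies in the case $a_1=a_2=:a$. The paper handles it in one line by invoking the convexity of the single fiber $F(a)$ as a known fact: if $z_1,z_2\in F(a)$ and $w\in I[z_1,z_2]$, then $w\in F(a)$. You instead re-derive this from first principles, building the square on $\{z_1,z_2,w,z\}$ with $z=m(z_1,z_2,a)$ and then contradicting either $K_{2,3}$-freeness (when $k=1$) or the existence of $m(z,w,c)$ (when $k\ge 2$). Your argument is self-contained and a nice exercise in the basic combinatorics of median graphs; the paper's version is much shorter but leans on a standard fact it states only implicitly. Either way, once fiber convexity is available the rest of the two proofs are essentially identical.
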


Let $T$ be a tree with gated branches. For a vertex $v\in V(G)$ and $w\in T$, $w$ is an \emph{imprint} of $v$ if $I[v,w]\cap T=\{w\}$. If $T$ is convex, the imprint is equal to the gate and therefore unique. Even if it is not the case, we can state following.

\begin{lemma}\label{lem:equaldist}
Let $T$ be a tree with gated branches rooted at $r$. Let $u\in V(G)$. Then, the following statements hold.\\
{\it (i)}~{\rm (\cite{chepoi2019distance})} There are at most two imprints of $u$ in $T$.\\
{\it (ii)} Assume $u$ has two distinct imprints $w^1,w^2$ in $T$. Then, $w^1,w^2\in I[r,u]$.
\end{lemma}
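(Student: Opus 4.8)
The plan is to prove (ii) by reducing it to a single gate computation inside one convex path of $T$; part (i) is quoted from \cite{chepoi2019distance}. The guiding observation is that an imprint of $u$ is forced to behave like a gate: if $w$ is an imprint of $u$ in $T$ and $Q$ is any gated subset of $T$ containing both $w$ and the root $r$, then the gate of $u$ in $Q$ must equal $w$. Indeed, that gate lies in $T$ and on some $u$--$w$ geodesic, whereas the imprint condition $I[u,w]\cap T=\{w\}$ permits no vertex of $T$ other than $w$ on such a geodesic.

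First I would fix an imprint $w$ of $u$ (this will later be instantiated as $w=w^1$ and then $w=w^2$) and choose a root--leaf path $Q$ of the rooted tree $T$ passing through $w$; such a $Q$ exists (if $w$ is a leaf, take the $r$--$w$ path, otherwise extend $w$ downward to any leaf in its subtree), and it necessarily contains $r$. By the gated-branches hypothesis $Q$ is convex, hence gated by the equivalence of convexity and gatedness in median graphs recalled above, so the gate $g$ of $u$ in $Q$ is well defined.

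Next I would show $g=w$: by the defining property of the gate, $g\in I[u,w']$ for every $w'\in Q$, in particular $g\in I[u,w]$, while also $g\in Q\subseteq V(T)$; hence $g\in I[u,w]\cap V(T)=\{w\}$, i.e.\ $g=w$. Finally, since $r\in Q$, taking $w'=r$ in the gate property gives $w=g\in I[u,r]=I[r,u]$. Instantiating this argument for $w^1$ and for $w^2$ yields (ii). I do not expect a genuine obstacle here; the only points needing care are that a root--leaf path through an arbitrary vertex of $T$ exists and is one of the convex branches (so that gatedness applies), and that the imprint definition genuinely excludes every vertex of $T$ other than $w$ from a $u$--$w$ geodesic. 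The remainder is bookkeeping, and note that the argument in fact places every single imprint of $u$ on $I[r,u]$, the hypothesis of two distinct imprints being used only because that is the situation in which (ii) is invoked.
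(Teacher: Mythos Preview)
Your argument is correct and is essentially the same as the paper's: both take a convex root--leaf path through the imprint, use the gate of $u$ in that path, and identify it with the imprint via $I[u,w]\cap T=\{w\}$ to conclude $w\in I[r,u]$. The paper's proof compresses this to a single line (``$P_1$ is convex and therefore $d(r,u)=d(r,w^1)+d(w^1,u)$''), whereas you spell out the gate identification explicitly; the underlying reasoning is identical.
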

\begin{proof}
We prove (ii). From symmetry, we only prove $w^1\in I[r,u]$. Let $P_1$ be the root-leaf path of $T$ that contains $w^1$. Then, $P_1$ is convex and therefore $d(r,u)=d(r,w^1)+d(w^1,u)$.
\end{proof}

\begin{lemma}
Let $T$ be a tree with gated branches and $w\in V(T)$. Then, the set of vertices with an imprint $w$ in $T$ is convex. 
\end{lemma}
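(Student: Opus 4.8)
The plan is to show that $Z := \{v \in V(G) : w \text{ is an imprint of } v \text{ in } T\}$ is convex by verifying the local convexity criterion: since $G$ is a median graph, it suffices to show that whenever $u_1, u_2 \in Z$ with $d(u_1, u_2) = 2$, the (unique) vertex $z$ with $u_1 \sim z \sim u_2$ also lies in $Z$. Here I use that an interval of length $2$ in a median graph that is not a path is exactly a $4$-cycle, and more precisely that for $d(u_1,u_2)=2$ there is a unique common neighbor $z$ lying ``between'' them (the median of $u_1$, $u_2$, and $w$, or an application of Lemma~\ref{lem:quad}); I would first pin down exactly which common neighbor $z$ is the relevant one. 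Note $w \in I[u_1,u_2]$ is false in general, so some care is needed: the natural candidate is $z = $ median$(u_1, u_2, w)$, and I would argue that $z \sim u_1$ and $z \sim u_2$ (because $z \in I[u_1,u_2]$ which has length $2$, so $z \in \{u_1, u_2, \text{the common neighbor}\}$, and $z \notin \{u_1,u_2\}$ since $w$ being an imprint of $u_i$ means $I[u_i,w]\cap T = \{w\}$, forcing $z \ne u_i$ unless $u_i \in T$, a case handled separately).

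The heart of the argument is then to show $w$ is an imprint of $z$, i.e. $I[z,w] \cap T = \{w\}$. First, $w \in I[z,w]\cap T$ trivially. For the reverse inclusion, suppose some $w' \in I[z,w]\cap T$ with $w' \ne w$. Since $z \in I[u_1,w]$ (as $z$ is the median of $u_1,u_2,w$), we have $I[z,w]\subseteq I[u_1,w]$, so $w' \in I[u_1,w] \cap T$. But $w$ is an imprint of $u_1$, so $I[u_1,w]\cap T = \{w\}$, giving $w' = w$, a contradiction. Hence $I[z,w]\cap T = \{w\}$ and $z \in Z$. The case where one of $u_1, u_2$ — say $u_1$ — lies in $T$ itself must be treated separately: then $w$ being an imprint of $u_1$ with $I[u_1,w]\cap T = \{w\}$ forces $u_1 = w$, and one checks directly (again via the median of $u_1,u_2,w=u_1$, which is just a vertex on $I[u_1,u_2]$ adjacent to both) that the common neighbor still has imprint $w$.

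The main obstacle I anticipate is the bookkeeping around the median/common-neighbor: one must be sure that the vertex whose membership in $Z$ we need to check is genuinely $z = \mathrm{median}(u_1,u_2,w)$ and that this $z$ is adjacent to both $u_1$ and $u_2$ (rather than equal to one of them), which relies on combining $d(u_1,u_2)=2$ with the imprint condition ruling out the degenerate placements; Lemma~\ref{lem:quad} is the natural tool if the direct median argument needs reinforcement. Once $z$ is correctly identified, the inclusion $I[z,w]\subseteq I[u_1,w]$ and the hypothesis $I[u_1,w]\cap T=\{w\}$ finish the proof cleanly. Since $T$ having gated branches guarantees imprints behave well but does not by itself give convexity of $Z$, this local-convexity route through the median operation is, I believe, the right level of generality.
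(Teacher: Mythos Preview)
Your plan has a real gap at the local-convexity step. For $u_1,u_2\in Z$ with $d(u_1,u_2)=2$, the interval $I[u_1,u_2]$ can contain \emph{two} common neighbours (the square case you yourself mention), and you must show that \emph{both} lie in $Z$. You treat only $z=\mathrm{median}(u_1,u_2,w)$, which is the common neighbour at distance $d(u_1,w)-1$ from $w$; for this $z$ your inclusion $I[z,w]\subseteq I[u_1,w]$ is correct and gives $z\in Z$. But the second common neighbour $z'$, when it exists, satisfies $d(z',w)=d(u_1,w)+1$, hence $z'\notin I[u_1,w]$ and $z'\notin I[u_2,w]$; your inclusion argument says nothing about $I[z',w]\cap T$. (Your parenthetical ``unique vertex $z$'' is simply false in the square case, and identifying ``the'' vertex to check with the median is exactly where the argument silently drops $z'$.)

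This is precisely the nontrivial case. The paper argues by contradiction: assuming $z_1,z_3\in Z$, $z_1\sim z_2\sim z_3$, $z_2\notin Z$, it first shows $d(w,z_2)=d(w,z_1)+1=d(w,z_3)+1$ (so $z_2$ plays the role of your $z'$), then produces a \emph{third} neighbour $z_4$ of $z_2$ at distance $d(w,z_2)-1$ from $w$, taken along a shortest $z_2$--$w$ path that meets $T\setminus\{w\}$; since $z_4\notin Z$ one has $z_4\neq z_1,z_3$. Three distinct neighbours of $z_2$ inside $I[w,z_2]$ then yield, via Lemma~\ref{lem:quad}, three squares pairwise sharing an edge at $z_2$, contradicting the grid embedding of Lemma~\ref{lem:subgrid}. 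That last step uses cube-freeness, which your route never invokes; without it (or an equivalent structural input) the case of $z'$ cannot be closed by the median/inclusion idea alone. A minor additional omission: the locally-convex $\Rightarrow$ convex implication also needs $Z$ connected; this is easy (for $v\in Z$ one has $I[v,w]\subseteq Z$, so $Z$ is star-shaped around $w$) but should be stated.
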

\begin{proof}
Assume the contrary. Then, there are distinct vertices $z_1,z_2,z_3$ with $z_1\sim z_2\sim z_3$, such that $z_1$ and $z_3$ have an imprint $w$ but $z_2$ doesn't. We have $d(w,z_2)=d(w,z_1)+1$; otherwise, $d(w,z_2)=d(w,z_1)-1$ because of bipartiteness of $G$ holds and in this case, $I[w,z_2]\subseteq I[w,z_1]$ holds and $z_2$ has an imprint $w$. By the same reason we have $d(w,z_2)=d(w,z_3)+1$. From definition of the imprint, there is a $z_2-w$ shortest path that contains a vertex of $T$ other than $w$. Let $z_4$ be the neighbor of $z_2$ in this shortest path. Then, $z_4$ does not have an imprint $w$ and especially, $z_1\neq z_4\neq z_3$. Now we have $d(w,z_4)+1=d(w,z_1)+1=d(w,z_3)+1=d(w,z_2)$ and obtain three squares that all two intersect at an edge from Lemma~\ref{lem:quad}, which contradicts Lemma~\ref{lem:subgrid}.
\end{proof}


For a vertex $m\in V(G)$, the \emph{star} $\St(m)$ of $m$ is the set of vertices $x\in V(G)$ such that there is an edge or a square that contains both $m$ and $x$. $\St(m)$ is convex. 
The vertex $m\in V(G)$ is a \emph{median} of $G$ if it minimizes the sum of distances to all vertices in $G$. The following holds.

\begin{lemma}[{\rm \cite{chepoi2019distance}}]\label{lem:half}
All the fibers of $\St(m)$ of a median graph contains at most $\frac{n}{2}$ vertices.
\end{lemma}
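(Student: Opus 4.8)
The plan is to locate each fiber of $\St(m)$ inside a single ``halfspace'' of $G$ cut off by an edge incident to $m$, and then to use the fact that $m$ is a median to bound the size of such a halfspace by $n/2$. For an edge $mu$ of $G$ write $W(u,m)=\{z\in V(G):d(z,u)<d(z,m)\}$ and let $W(m,u)$ be the symmetric set. Since median graphs are bipartite and $u\sim m$, for every $z$ the numbers $d(z,u)$ and $d(z,m)$ have different parities (as $u$ and $m$ lie in different classes of the bipartition) and so differ by exactly $1$; hence $V(G)=W(u,m)\sqcup W(m,u)$, with $d(z,u)=d(z,m)-1$ on $W(u,m)$ and $d(z,u)=d(z,m)+1$ on $W(m,u)$.

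\emph{Step 1: halfspaces at a median are small.} Summing the two distance identities above over all $z\in V(G)$ gives
\[
\sum_{z\in V(G)}d(z,u)\;=\;\sum_{z\in V(G)}d(z,m)\;+\;|W(m,u)|\;-\;|W(u,m)|.
\]
Since $m$ minimizes the sum of distances to all vertices, the left-hand side is at least the first summand on the right, so $|W(u,m)|\le|W(m,u)|$. As these two sets partition $V(G)$, we conclude $|W(u,m)|\le n/2$ for every neighbor $u$ of $m$.

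\emph{Step 2: each fiber of $\St(m)$ sits inside such a halfspace.} The fiber of $m$ itself is $\{m\}$: if $z\ne m$ and $w$ is the neighbor of $m$ on a shortest path from $z$ to $m$, then $w\in\St(m)$ but $m\notin I[z,w]$, so $m$ cannot be the gate of $z$ in $\St(m)$. Now fix $x\in\St(m)$ with $x\ne m$ and let $u$ be the neighbor of $m$ lying on a shortest path from $m$ to $x$; then $u\sim m$, so $u\in\St(m)$, and $d(x,u)=d(x,m)-1$. Let $z$ be any vertex whose gate in $\St(m)$ is $x$. Applying the gate property to the members $m$ and $u$ of $\St(m)$ gives $d(z,m)=d(z,x)+d(x,m)$ and $d(z,u)=d(z,x)+d(x,u)=d(z,m)-1$, so $z\in W(u,m)$. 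Hence $F_{\St(m)}(x)\subseteq W(u,m)$, and together with Step~1 this yields $|F_{\St(m)}(x)|\le n/2$.

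I do not anticipate a genuine obstacle here: once the definitions of star, gate, and fiber are unwound, the lemma reduces to the elementary ``majority'' characterization of medians in bipartite graphs used in Step~1. The only place that needs a little care is the choice of $u$ in Step~2 — it must be taken adjacent to $m$, so that $u$ itself belongs to $\St(m)$ and the gate property may legitimately be applied to it; this is precisely what forces a vertex with gate $x\ne m$ to be strictly closer to $u$ than to $m$. The degenerate cases ($n=1$, and $|F_{\St(m)}(m)|=1\le n/2$ for $n\ge2$) are routine.
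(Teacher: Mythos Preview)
Your proof is correct. Note, however, that the paper does not give its own proof of this lemma at all: it simply cites \cite{chepoi2019distance} and moves on. What you have written is in fact the standard ``majority rule'' argument that appears in that reference (halfspaces $W(u,m)$ at a median each contain at most half the vertices, and every nontrivial fiber of $\St(m)$ lies in one such halfspace), so there is nothing to compare --- you have reconstructed the intended proof.
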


For a rooted tree $T$ that is rooted at $r$, a vertex $u\in V(T)$ is an \emph{ancestor} of $v$ and $v$ is a \emph{descendant} of $u$ if there is a path from $u$ to $v$, only going toward the leaves. The vertex subset $X$ is a \emph{column} of $T$ if for any two vertices $x,y$ in $X$, $x$ is either an ancestor or a descendant of $y$. The vertex $t$ is the \emph{lowest common ancestor}~\cite{harel1984fast} of $u$ and $v$ if $t$ is an ancestor of both $u$ and $v$ that minimizes the distance between $u$ and $t$ (or equivalently, $v$ and $t$) in $T$.
There is a data structure that is constructed in linear time and space such that, given two vertices on $T$, it returns the lowest common ancestor of them in constant time~\cite{bender2005lowest}. 
$u$ is a \emph{parent} of $v$ and $v$ is a \emph{child} of $u$ if $u$ is an ancestor of $v$ and $u\sim v$. Let $X\subseteq V(T)$ and $u\in V(T)$. The \emph{nearest ancestor} of $u$ in $X$ on $T$ is the vertex $v\in X$ such that $v$ is an ancestor of $u$ and minimizes $d(u,v)$.

Let $T$ be a rooted tree rooted at $r$. For a vertex $v\in V(T)$, let $T_v$ be the subtree of $T$ rooted at $v$. An edge $(u,v)$ in $G$ such that $u$ is the parent of $v$ is a \emph{heavy-edge} if $|V(T_u)|\leq 2|V(T_v)|$ and a \emph{light-edge} otherwise. Each vertex has at most one child such that the edge between them is a heavy-edge. The \emph{heavy-path} is a maximal path that only contains heavy-edges. The \emph{heavy-light decomposition} is the decomposition of $T$ into heavy-paths. Note that, there is at most $O(\log n)$ light-edges on any root-leaf path on $T$.

\section{Outline and Organization}\label{sec:outline}

Here we roughly describe our algorithm using the notions in Section~\ref{sec:prelim}.
Let $G$ be a cube-free median graph. Let $m$ be a median of $G$, $\St(m)$ be the star of $m$, and for each $x\in \St(m)$, let $F(x)$ be the fiber of $x$ in $\St(m)$ (see figure~(a)). Let $u,v$ be vertices of $G$.

Consider calculating $p(I[u,v])$. If $u$ and $v$ are in the same fiber $F(x)$ of $\St(m)$, we calculate the answer by using the algorithm on $F(x)$, which is recursively defined. Lemma~\ref{lem:half} ensures that the recursion depth is at most $O(\log n)$.
Otherwise, we can show that $I[u,v]$ intersects with only a constant number of fibers, and for each fiber $F(x)$ that intersects $I[u,v]$, $I[u,v]\cap F(x)$ can be represented as $I[u_x,v_x]$ for some vertices $u_x, v_x\in F(x)$ such that $v_x$ is on the total boundary of $F(x)$. Thus, it is sufficient to construct an algorithm to answer the query on the interval, such that one of the ends is on the total boundary of $F(x)$.

To do this, we introduce a technique to decompose intervals, which we name the \emph{staircases decomposition}. Let $T$ be a tree with gated branches and assume $u\in V(G)$ and $v\in V(T)$. We partition an interval $I[u,v]$ into an interval $I$ and at most two special structures, which we name a \emph{staircases} (figure~(b)), which we describe in Section~\ref{sec:stairs}. Such a decomposition can be calculated in $O(\log n)$ time with appropriate preprocessing. Here, we can take $I$ as one of the $O(n)$ candidates of intervals. We just precalculate and store the value $p(I)$ for each candidate, and recall it when we answer the queries.

Now we just need an algorithm to calculate the value $p(V(L))$ quickly for a staircases $L$. Let $P$ be a root-leaf path of $T$. The segment trees can answer the staircases queries whose base is a subpath of $P$ in $O(\log n)$ time. To answer the general queries, we use a heavy-light decomposition of $T$.

The rest of the paper is organized as follows. In Section~\ref{sec:stairs}, we introduce the staircases decomposition of the intervals with one end on the tree with gated branches. In Section~\ref{sec:process}, we construct an algorithm and a data structure for the interval queries for the same cases. In Section~\ref{sec:decomposition}, we prove that we can decompose a given interval into constant number of intervals with one of the ends on the total boundaries of the fibers of $\St(m)$. This technique can also be applied to the query that asks the median of given three vertices. Some detailed parts in these sections are found in Appendix~\ref{sec:finding}. Finally, in Appendix~\ref{sec:construction}, we give an algorithm to construct our data structure efficiently.

\section{The Staircases Decomposition of the Intervals with One End on the Boundary}\label{sec:stairs}

\begin{figure}
\begin{tabular}{c}
    \begin{minipage}{0.33\hsize}
    \centering
          \includegraphics[scale=1.0]{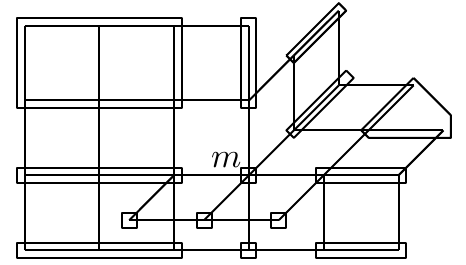}\\
          \flushleft{(a) a median graph and\\ its decomposition \\into fibers of $\St(m)$.}
    \end{minipage}
    \begin{minipage}{0.33\hsize}
    \centering
          \includegraphics[scale=1.0]{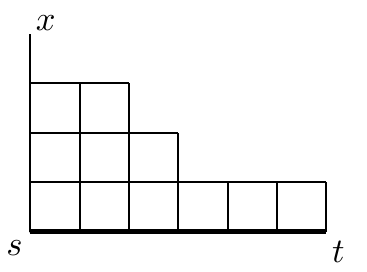}\\
          \flushleft{(b) staircases with top $x$\\ with base starts at $s$\\ and ends at $t$.}
    \end{minipage}
    \begin{minipage}{0.33\hsize}
    \centering
          \includegraphics[scale=0.8]{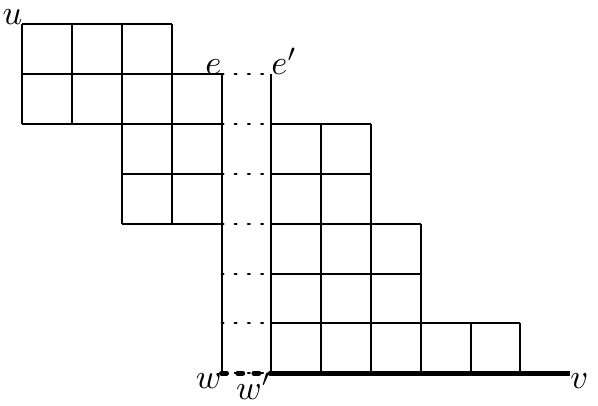}\\
          \flushleft{(c) decomposition of\\ $I[u,v]$ into an interval $I[u,w]$\\ and staircases $I[e',v]$.\\ The bold line represents $P$.}
    \end{minipage}
\end{tabular}
\end{figure}

Let $T$ be a tree with gated branches. In this section, we introduce a technique, \emph{staircases decomposition}, to decompose an interval $I[u,v]$ such that $v$ is on $T$. 

Let $P=(s=w_0,\dots, w_k=t)$ be a convex path. For a vertex $x$ with gate $s$ in $P$, the interval $I[x,t]$ induces \emph{staircases} if for all $i=0,\dots, k$, the set of vertices in $I[x,t]$ with gate $w_i$ in $P$ induces a path. $P$ is the \emph{base} of $L$ and the vertex $x$ is the \emph{top} of $L$. The base \emph{starts} at $s$ and \emph{ends} at $t$ (see figure~(b)). Our staircases decomposition decomposes $I[u,v]$ into an interval and at most two staircases such that their bases are columns of $T$.

\subsection{The case with One End on a Convex Path}\label{sec:path}

Here we investigate the structure of an interval such that one of the endpoints is on a convex path $P$.
Consider an interval $I[u,v]$ such that $v$ is on $P$. Let $w$ be the gate of $u$ in $P$. The purpose here is to prove that $I[u,v]$ can be decomposed into the disjoint union of an interval $I[u,w]$ and a staircases (see Figure~(c)), if $w\neq v$. We assume $w\neq v$ because otherwise we have no need of decomposition. Let $w'$ be the neighbor of $w$ in $P$ between $w$ and $v$. We take the isometric embedding of $I[u,v]$ into a two-dimensional grid (see Lemma~\ref{lem:subgrid}). We naively introduce a $xy$-coordinate system with $w=(0,0)$, $w'=(1,0)$, $u=(x_u,y_u)$ with $y_u\geq 0$, and $v=(x_v,y_v)$ with $y_v\leq 0$. Now, we can state the following.

\begin{lemma}
If a vertex $z=(x_z,y_z)$ on $I[u,v]\cap V(P)$ is not on the $x$-axis, there is no vertex other than $z$ in $I[u,v]$ with gate $z$ in $P$.
\end{lemma}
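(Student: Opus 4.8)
The plan is to argue entirely inside the isometric grid embedding of $I[u,v]$, translating each instance of the gate condition into a linear inequality on coordinates.

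First I would fix the geometry of $P\cap I[u,v]$. Since $w$ is the gate of $u$ in $P$ and $v\in P$, one has $w\in I[u,v]$, and a one-line manipulation of the distance equalities defining gates and intervals shows $P\cap I[u,v]=I[w,v]$. This set is a connected (convex) subset of the path $P$, hence a subpath, and by Lemma~\ref{lem:subgrid} it is isometric in the grid; thus it is a monotone lattice path from $w=(0,0)$ to $v=(x_v,y_v)$. Its first edge is $w\sim w'=(1,0)$, so all its steps are in the $+\hat x$ or $-\hat y$ directions: the $x$-coordinates along it are non-decreasing (in particular $x_v\ge 1$) and the $y$-coordinates are non-increasing. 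The only vertex of this staircase with $x$-coordinate $0$ is $w$ itself, so --- and this is where the hypothesis $y_z\neq 0$ enters --- the vertex $z=(x_z,y_z)$, having $y_z<0$ and hence $z\neq w$, satisfies $x_z\ge 1$ (and of course $y_z<0$).

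Now take any $q=(x_q,y_q)\in I[u,v]$ whose gate in $P$ is $z$; the goal is $q=z$. I would extract three facts. (i) Since $q$ lies in the bounding box of $u$ and $v$ and $y_u\ge 0\ge y_v$, we get $y_q\ge y_v$. (ii) The gate condition gives $z\in I[q,w]$, i.e.\ $d(q,w)=d(q,z)+d(z,w)$; writing the three distances in grid coordinates ($d(q,w)=|x_q|+|y_q|$, $d(z,w)=|x_z|+|y_z|=x_z-y_z$, $d(q,z)=|x_q-x_z|+|y_q-y_z|$) and using that $|a|+|b|\ge|a+b|$ with equality exactly when $a,b$ have a common sign, the equality forces $|x_q-x_z|+|x_z|=|x_q|$ and $|y_q-y_z|+|y_z|=|y_q|$; since $x_z>0$ and $y_z<0$, these say precisely $x_q\ge x_z$ and $y_q\le y_z$. (iii) Because $w$ is the gate of $u$ in $P$ and $w'=(1,0)\in P$, we have $w\in I[u,w']$, so the $x$-coordinate $0$ of $w$ lies between $x_u$ and $1$; hence $x_u\le 0$, and then the bounding-box membership of $q$ gives $x_q\le x_v$.

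Finally I would feed the last instance of the gate condition, $z\in I[q,v]$, into the picture. By (i)--(iii) together with the staircase inequalities $x_z\le x_v$ and $y_v\le y_z$, every absolute value in $d(q,v)=d(q,z)+d(z,v)$ now has a determined sign, and the equation simplifies (after cancellation) to $x_q-y_q=x_z-y_z$, i.e.\ $x_q-x_z=y_q-y_z$. But (ii) makes the left-hand side $\ge 0$ and the right-hand side $\le 0$, so both vanish: $x_q=x_z$ and $y_q=y_z$, that is $q=z$.

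I expect the main obstacle to be fact (iii): without it, the single equality $z\in I[q,v]$ yields only $x_q-y_q=x_z-y_z$, which is far from $q=z$, and the rest of the sign bookkeeping does not close (spurious candidates like $q=(x_v+1,y_z)$ are killed exactly by $x_u\le 0$). The cheap but essential observation is that ``$w$ is the gate of $u$'' forces $u$ to lie on the $w$-side, not the $w'$-side, in the $x$-direction. A smaller point to watch is that the equality analysis in (ii) pins down the inequalities in the correct directions only because $x_z>0$ and $y_z<0$; the second of these is precisely the hypothesis that $z$ is off the $x$-axis, and it is also what gave $x_z\ge 1$ through the staircase structure, so the hypothesis is used twice.
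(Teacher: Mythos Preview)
Your argument is correct. Both your proof and the paper's work entirely inside the grid embedding and begin the same way: read off $x_z>0$, $y_z<0$ from the monotone shape of $P\cap I[u,v]$, and extract $x_q\ge x_z$ (and implicitly $y_q\le y_z$) from the gate relation $z\in I[q,w]$. The divergence is in the finish. The paper, having obtained $x_z\le x_{z'}\le x_v$, picks a vertex $z''\in P$ with $x$-coordinate $x_{z'}$ and claims $z'\in I[w,z'']$, which contradicts convexity of $P$; the argument is terse and leaves the step $x_{z'}\le x_v$ (your fact~(iii), i.e.\ $x_u\le 0$) and the $y$-coordinate half of $z'\in I[w,z'']$ to the reader. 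You instead feed in the \emph{second} gate relation $z\in I[q,v]$, turn it into the single identity $x_q-y_q=x_z-y_z$, and close by sign comparison. Your route avoids the appeal to convexity of $P$ altogether and is more self-contained; the paper's route is shorter once one grants the unstated sign facts. Either way, the crucial ingredient you correctly isolate---$x_u\le 0$ from ``$w$ is the gate of $u$''---is exactly what the paper uses without comment when it asserts $x_v\ge x_{z'}$.
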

\begin{proof}
Assume the contrary and let $z'=(x_{z'},y_{z'})$ be a vertex in $I[u,v]$ with gate $z$ in $P$.
Because of the isometricity, $x_z>0$ and $y_z<0$ holds. Since $z\in I[w,z']$, we have $x_{z'}\geq x_z$. Since $z'\in I[u,v]$, $x_v\geq x_{z'}$ holds. Therefore, we can take a vertex $z''$ in $P$ with $x$-coordinate $x_{z'}$, but it means $z'\in I[w,z'']$ and contradicts to the convexity of $P$.
\end{proof}

Since such $z$ does not affect the possibility of decomposition (we can just add such vertices at the end of the staircases), we can assume that $v=(x_v,0)$ for $x_v>0$. Moreover, from convexity, we have that all vertices in $I[u,v]$ have non-negative $y$-coordinate. Thus, $I[u,v]\setminus I[u,w]$ is the set of vertices with positive $x$-coordinate and forms staircases (see figure~(c)), which is the desired result.

To build an algorithm to calculate $p(I[u,v])$ as the sum of $p(I[u,w])$ and $p(I[u,v]\setminus I[u,w])$, we should identify the top $e'$ of the staircases. Instead of direct identification, we rather identify the unique neighbor of it in $I[u,w]$, named the \emph{entrance} $e$ of the staircases: The top $e'$ can be determined as the neighbor of $e$ with gate $w'$ on $P$. Here, we have that $e$ is the gate of $u$ in the boundary of $F(w)$ with respect to $F(w')$, where $F(w)$ (resp. $F(w')$) is the fiber of $w$
(resp. $w'$) with respect to $P$. Indeed, this gate should be in $I[u,w]$ from the definition of the gate and $e$ is the only candidate for it.
We can calculate $e$ in $O(\log n)$ time by working on the appropriate data structure on total boundary of the fiber of $w$ with respect to $P$. We discuss this algorithm in Appendix~\ref{sec:finding}.

\subsection{Single Imprint}\label{sec:single}

\begin{figure}
\begin{tabular}{c}
    \begin{minipage}{0.33\hsize}
    \centering
          \includegraphics[scale=0.8]{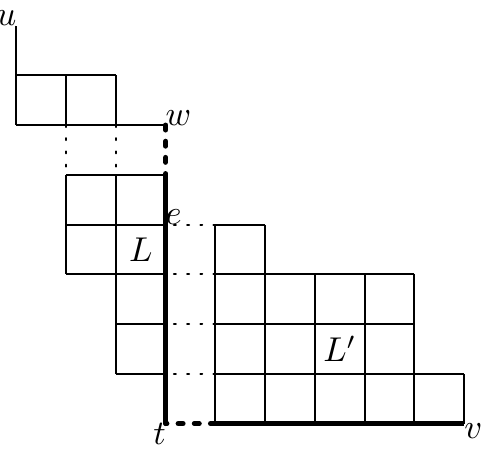}\\
          \flushleft{(d) staircases decomposition of $I[u,v]$.(single imprint, first case)}
    \end{minipage}
    \begin{minipage}{0.33\hsize}
    \centering
          \includegraphics[scale=0.8]{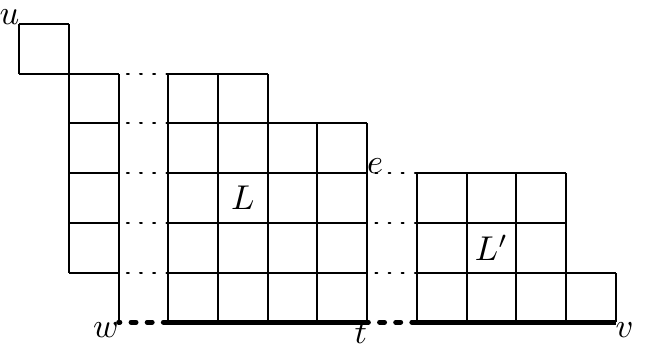}\\
          \flushleft{(e) staircases decomposition of $I[u,v]$.(single imprint, second case)}
    \end{minipage}
    \begin{minipage}{0.33\hsize}
    \centering
          \includegraphics[scale=0.8]{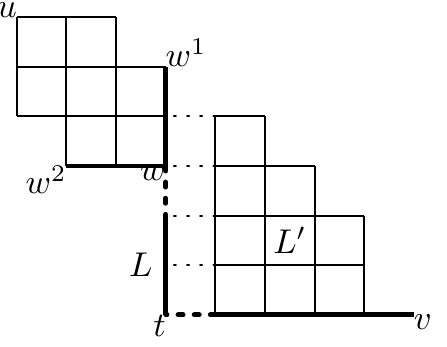}\\
          \flushleft{(f) staircases decomposition of $I[u,v]$.(double imprints)}
    \end{minipage}
\end{tabular}
\end{figure}

Here we give the staircases decomposition of the interval $I[u,v]$, where $v$ is on a tree $T$ with gated branches, rooted at $r$. First, we treat the case that there is exactly one imprint $w$ of $u$ in $T$ in $I[u,v]$. Let $t$ be a lowest common ancestor of $w$ and $v$ in $T$. Note that, $t$ might coincide with $w$ or $v$. Let $P$ (resp. $P'$) be the root-leaf path of $T$ that contains $w$ (resp. $v$).

Since $P'$ is convex, we can decompose $I[u,v]$ into a staircases $L'$ with base on $P'$ and an interval $I[u,t]$. Since $P$ is convex, we can further decompose the interval $I[u,t]$ into a staircases $L$ with base on $P$ and an interval $I[u,w]$. Now, for fixed $T$, $I[u,w]$ is one of the $O(n)$ candidates of the intervals, because it is specified only by a vertex $u$ and one of at most two imprints of $u$ on $T$. This is the staircases decomposition we obtain here.

To bound the size of the data structure we construct in Section~\ref{sec:process}, we should ensure that staircases $L$ and $L'$ contains only vertices with an imprint on $P$ and $P'$, respectively. Let $B_L$ (resp. $B_{L'}$) be the base of $L$ (resp. $L'$). We prove the following.

\begin{lemma}\label{lem:ppintersect}\label{lem:oneimp}
The following statements hold.\\
{\it (i)} $I[u,v]$ contains no vertices in $T$ other than the vertices on the $w-v$ path on $T$.\\
{\it (ii)} For a vertex $z$ in $L'$, the gate of $z$ in $P'$ is an imprint of $z$ in $T$.\\ {\it (iii)} For a vertex $z$ in $L$, the gate of $z$ in $P$ is an imprint of $z$ in $T$.
\end{lemma}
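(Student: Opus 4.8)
The plan is to prove the three parts in order, since each one feeds into the next. For part (i), suppose $I[u,v]$ contains a vertex $z \in V(T)$ not on the $w$-$v$ path of $T$. The key fact is that $w$ is the (unique) imprint of $u$ in $T$ that lies in $I[u,v]$, so $d(u,z) = d(u,w) + d(w,z)$ cannot hold for any $z \in V(T) \setminus \{w\}$ with $z \in I[u,v]$ unless $z$ lies "past" $w$ in a controlled way; more precisely, I would argue that since $z \in I[u,v]$ we have $d(u,z) + d(z,v) = d(u,v) = d(u,w) + d(w,v)$, and since $T$ is a tree with gated branches, take the root-leaf paths through $z$, through $w$ and through $v$. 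Using convexity of these paths (Lemma~\ref{lem:equaldist}(ii) tells us imprints lie on $I[r,u]$, and any root-leaf path is convex), I would show $z$ must lie on the path from $w$ to $v$ inside $T$: otherwise $z$ branches off, and I can produce a second imprint of $u$ in $I[u,v]$ or violate the singleton-imprint hypothesis, contradicting the case assumption. The cleanest route is: let $t$ be the LCA of $w$ and $v$; any $z\in V(T)\cap I[u,v]$ has its gate-to-$u$ path controlled, and $z\in I[u,t]$ or $z\in I[t,v]$; in each subinterval, convexity of the corresponding root-leaf path pins $z$ onto the $w$-$t$ or $t$-$v$ subpath.

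For part (ii), let $z \in L'$, so $z$ is a vertex of the staircases with base on $P'$, meaning $z \in I[u,v]$ and the gate $g$ of $z$ in $P'$ lies between $t$ and $v$ (this is how the decomposition was set up in Section~\ref{sec:path}). I must show $g$ is an imprint of $z$ in $T$, i.e.\ $I[z,g] \cap V(T) = \{g\}$. Suppose not; then $I[z,g]$ contains another vertex $g'' \in V(T)$ with $g'' \neq g$. Since $g$ is the gate of $z$ in $P'$, we have $g \in I[z,g'']$ whenever $g'' \in P'$, so $g''$ must lie off $P'$. But $g'' \in I[z,g] \subseteq I[u,v]$ (as $z,g \in I[u,v]$ and intervals are convex), so $g''$ is a vertex of $T$ in $I[u,v]$ not on the $w$-$v$ path — unless $g''$ happens to be on that path; I would rule that out using that $g''\in I[z,g]$ and $g$ is already the closest point of $P'$ to $z$, so any $T$-vertex strictly between $z$ and $g$ off $P'$ would have to branch away from the $w$-$v$ path, contradicting part (i). This gives the contradiction. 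Part (iii) is the same argument with $P$, $w$, and $L$ in place of $P'$, $v$, and $L'$, noting that $L$ has base on $P$ with top between $r$ and $t$, and using part (i) again.

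The main obstacle I anticipate is part (i): carefully arguing that a $T$-vertex in $I[u,v]$ cannot branch off the $w$-$v$ path. The subtlety is handling the LCA vertex $t$ and the two "sides" of the path uniformly, and making precise the claim that a branching $T$-vertex in the interval would force a second imprint of $u$ (contradicting the single-imprint case) — this needs Lemma~\ref{lem:equaldist} together with convexity of root-leaf paths and possibly Lemma~\ref{lem:quad}/Lemma~\ref{lem:subgrid} to exclude degenerate configurations, much as in the proof that the set of vertices with a fixed imprint is convex. Once part (i) is in hand, parts (ii) and (iii) should follow quickly by the reduction-to-(i) argument sketched above, since in the staircases construction the relevant gates are by definition on the appropriate root-leaf paths.
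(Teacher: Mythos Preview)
Your plan has genuine gaps in both (i) and (ii).

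\textbf{Part (i).} You write that $d(u,z)=d(u,w)+d(w,z)$ ``cannot hold \ldots\ unless $z$ lies past $w$ in a controlled way''. In fact this equation holds \emph{unconditionally} for every $z\in V(T)\cap I[u,v]$, and that is precisely the key. The reason: for any $z\in V(T)$ there exists an imprint $w'$ of $u$ in $T$ with $w'\in I[u,z]$ (take the $T$-vertex closest to $u$ inside $I[u,z]$). If moreover $z\in I[u,v]$, then $w'\in I[u,z]\subseteq I[u,v]$, so $w'=w$ by the single-imprint hypothesis; hence $d(u,z)=d(u,w)+d(w,z)$. Applying this also to $z=v$ gives $d(u,v)=d(u,w)+d(w,v)$. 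Subtracting from $d(u,v)=d(u,z)+d(z,v)$ yields $d(w,v)=d(w,z)+d(z,v)$, i.e.\ $z\in I[w,v]$. Since $T$ is isometric, this forces $z$ onto the $w$--$v$ path in $T$. Your proposed route via the LCA $t$ and a split into $z\in I[u,t]$ versus $z\in I[t,v]$ is not obviously exhaustive (two vertices of an interval need not be comparable this way) and is unnecessary once you have the equation above.

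\textbf{Part (ii).} Your attempted contradiction runs the wrong way. You take $g''\in I[z,g]\cap V(T)$ with $g''\neq g$, correctly note $g''\notin P'$, and then try to argue $g''$ must ``branch away from the $w$--$v$ path'' to contradict (i). But (i) says the opposite: $g''$ \emph{is} on the $w$--$v$ path, and since $g''\notin P'$ it lies on the $w$--$t$ segment. You cannot rule this out; you must derive a contradiction from it. The missing step is isometricity of $T$: since $g''$ is on the $w$--$t$ side and $g$ lies strictly below $t$ on $P'$ (the base of $L'$ starts at the child $t'$ of $t$, so the gate of any $z\in L'$ in $P'$ is not $t$), the tree path from $g''$ to $g$ passes through $t$, whence $t\in I[g'',g]\subseteq I[z,g]$. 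But $t\in P'$ and $t\neq g$, contradicting that $g$ is the gate of $z$ in $P'$. Part (iii) is then genuinely symmetric.
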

\begin{proof}
{\it (i)} Let $z\in I[u,v]\cap V(T)$. Then, $d(u,v)=d(u,z)+d(z,v)$ holds. Since $w$ is the unique imprint of $u$ in $T$, $d(u,z)=d(u,w)+d(w,z)$ and $d(u,v)=d(u,w)+d(w,v)$ holds. Therefore $d(w,v)=d(w,z)+d(z,v)$ and it means $z$ is on the unique path between $w$ and $v$ on $T$. 
{\it (ii)} Let $w'_z$ be the gate of $z$ in $P'$. We prove $I[z,w'_z]\cap V(T)=\{w'_z\}$. Assume $x\in (I[z,w'_z]\cap V(T))\setminus \{w'_z\}$. From {\it (i)}, $x$ is on $w-t$ path. 
From isometricity of $T$, $t\in I[x,w'_z]\subseteq I[z,w'_z]$ holds and it contradicts the definition of $w'_z$. 
{\it (iii)} Similar to {\it (ii)}.
\end{proof}

We should also make algorithms to identify the top of the staircases $L$ and $L'$. The top of $L$ can be found by applying the discussion in previous subsection by precalculating the entrances for all possible patterns of $u$ and $w$, because the start of the base of $L$ is uniquely determined as a parent of $w$, independent of $v$. However, we cannot apply it to find the top of $L'$, because the start of the base of $L'$ is a child of $t$, not a parent.
Instead, we calculate the top of $L'$ by case-analysis of the positional relation of the staircases. Intuitively, we divide cases by the angle formed by $B_L$ and $B_{L'}$. We have essentially two cases\footnote{To explain all cases by these two, we take $T$ as the maximal tree with gated branches that contains the fiber we consider, rather than the fiber itself.} to tract, which this angle is $\pi/2$ (Figure~(d)) or $\pi$ (Figure~(e)) (we formally define these cases and prove that they cover all cases in Appendix~\ref{sec:finding}). In the case in Figure~(d), the entrance $e$ of $L'$ can be found on $B_L$. In the case in Figure~(e), $e$ can be found on the total boundary of the vertex set with imprint $t$. In both case, by appropriate data structure given in Appendix~\ref{sec:finding}, we can find the entrance in $O(\log n)$ time.

\subsection{Double Imprints}\label{sec:double}

Here we consider the staircases decomposition for the case that there are two imprints $w^1,w^2$ of $u$ in $T$ in $I[u,v]$. Let $w$ be the lowest common ancestor of $w^1$ and $w^2$ in $T$. From (ii) of Lemma~\ref{lem:equaldist} and isometricity of $T$, $d(u,w^1)+d(w^1,w)=d(u,r)-d(w,r)=d(u,w^2)+d(w^2,w)$ holds and particularly we have $w^1,w^2\in I[u,w]$. From isometricity of $T$, we have $w\in I[w_1,w_2]\subseteq I[u,v]$. Let $t$ be the lowest common ancestor of $w$ and $v$. Then, from isometricity of $T$, we have $t\in I[w,v]\subseteq I[u,v]$. Note that, the lowest common ancestor of $v$ and $w^1$ (resp. $w^2$) is also $t$, because otherwise we have $w\not \in I[u,v]$. Let $P$ (resp. $P'$) be any root-leaf path of $T$ that contains $w$ (resp. $v$).

Since $P'$ is convex, we can decompose $I[u,v]$ into a staircases $L'$ with base on $P'$ and an interval $I[u,t]$. Since the subpath of $P$ between $r$ and $w$ is convex, we can further decompose the interval $I[u,t]$ into a staircases $L$ with base on $P$ and an interval $I[u,w]$ (actually, we can prove that $L$ is a line). Now, for fixed $T$, $I[u,w]$ is one of the $O(n)$ candidates of the intervals, because $w$ is specified only by a vertex $u$, as the lowest common ancestor of two imprints of $u$ in $T$. This is the staircases decomposition we obtain here.

Let $B_L$ (resp. $B_{L'}$) be the base of $L$ (resp. $L'$). From the same reason as the case with a single imprint, we prove the following lemma. The proof is similar to the proof of Lemma~\ref{lem:oneimp}.

\begin{lemma}\label{lem:pppintersect}\label{lem:twoimp}
The following statements hold.\\
(i) $I[u,v]$ contains no vertices in $T$ other than vertices in $w^1-v$ and $w^2-v$ path on $T$.\\
(ii) For a vertex $z$ in $L'$, the gate of $z$ in $P'$ is an imprint of $z$ in $T$.\\ (iii) For a vertex $z$ in $L$, the gate of $z$ in $P$ is an imprint of $z$ in $T$.
\end{lemma}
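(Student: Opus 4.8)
The plan is to mirror the proof of Lemma~\ref{lem:oneimp} closely, since the double-imprint configuration differs from the single-imprint one only in that the single imprint $w$ is replaced by the pair $w^1,w^2$, whose lowest common ancestor $w$ plays the role of the old $w$. First I would prove (i). Let $z\in I[u,v]\cap V(T)$, so $d(u,v)=d(u,z)+d(z,v)$. Since $w^1$ and $w^2$ are the only imprints of $u$ in $T$, any shortest $u$--$z$ path must pass through $w^1$ or $w^2$, hence $d(u,z)=d(u,w^i)+d(w^i,z)$ for $i=1$ or $i=2$; similarly $d(u,v)=d(u,w^j)+d(w^j,v)$ for $j=1$ or $j=2$. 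Combining these with the equality $d(u,w^1)+d(w^1,w)=d(u,w^2)+d(w^2,w)$ established just before the lemma (so that $w^1,w^2$ are ``at the same level'' seen from $u$), and using the isometricity of $T$ (so that tree-distances agree with graph-distances along the relevant paths), I would deduce $d(w^i,v)=d(w^i,z)+d(z,v)$, which forces $z$ to lie on the unique $w^i$--$v$ path in $T$. Since that path is contained in the union of the $w^1$--$v$ and $w^2$--$v$ paths regardless of which index occurs, (i) follows.

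Next, for (ii): let $z$ be a vertex of $L'$ and let $w'_z$ be its gate in $P'$. I want $I[z,w'_z]\cap V(T)=\{w'_z\}$. Suppose some $x\in (I[z,w'_z]\cap V(T))\setminus\{w'_z\}$. By (i), $x$ lies on the $w^1$--$v$ or $w^2$--$v$ path; in either case $x$ is an ancestor-side vertex, so $t$ (the lowest common ancestor of $w$ and $v$, which by the remark before the lemma is also the LCA of $w^i$ and $v$) lies on the $x$--$v$ path in $T$, hence by isometricity $t\in I[x,w'_z]\subseteq I[z,w'_z]$. But $t$ is a strict ancestor of $w'_z$ on $P'$ unless $t=w'_z$, and $t$ on a shortest $z$--$w'_z$ path contradicts that $w'_z$ is the gate (equivalently the closest point of the convex path $P'$) to $z$ — exactly the contradiction used in Lemma~\ref{lem:oneimp}(ii). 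For (iii) the argument is the same with $P$ in place of $P'$ and $w$ in place of $t$: here $L$ has base the convex subpath of $P$ between $r$ and $w$, $z\in L$ has gate $w_z$ in $P$, and any spurious $x\in I[z,w_z]\cap V(T)$ would, by (i), sit on a $w^i$--$v$ path, so $w$ would be forced onto a shortest $z$--$w_z$ path, contradicting the gate property.

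The one genuinely new ingredient compared with the single-imprint case is the bookkeeping around the two imprints and the identity $d(u,w^1)+d(w^1,w)=d(u,r)-d(w,r)=d(u,w^2)+d(w^2,w)$, together with the facts $w^1,w^2\in I[u,w]$, $w\in I[w^1,w^2]\subseteq I[u,v]$, and $t\in I[w,v]\subseteq I[u,v]$, all of which were already derived in the paragraph preceding the lemma; so in the proof I would simply cite those and concentrate on case-splitting cleanly on which imprint a given shortest path uses. I expect the main obstacle to be purely organizational rather than mathematical: making sure that in (i) the ``$z$ is on a $w^i$--$v$ path'' conclusion is drawn for a consistent index $i$ (so that one does not accidentally mix $w^1$ on the $u$-side with $w^2$ on the $v$-side in a way that breaks the distance equation), and double-checking that the LCA of $w^i$ and $v$ really is $t$ for both $i$ — but that is exactly the content of the ``Note that'' sentence before the lemma, so it can be invoked directly. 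Everything else is a transcription of the Lemma~\ref{lem:oneimp} proof.
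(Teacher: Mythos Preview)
Your proposal is correct and follows essentially the same approach as the paper's proof. The only difference is that in (i) you over-complicate the index bookkeeping: since the hypothesis of the double-imprint case is precisely that \emph{both} $w^1$ and $w^2$ lie in $I[u,v]$, the equality $d(u,v)=d(u,w^i)+d(w^i,v)$ holds for whichever $i$ you already chose for $z$, so no separate index $j$ or appeal to the level identity $d(u,w^1)+d(w^1,w)=d(u,w^2)+d(w^2,w)$ is needed --- the paper simply subtracts to get $d(w^i,v)=d(w^i,z)+d(z,v)$ directly.
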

\begin{proof}
{\it (i)} Let $z\in I[u,v]\cap V(T)$. Then, $d(u,v)=d(u,z)+d(z,v)$ holds. Let $w^i$ be the imprint of $u$ in $T$ with $d(u,z)=d(u,w^i)+d(w^i,z)$. Then, $d(u,v)=d(u,w^i)+d(w^i,v)$ holds. Therefore $d(w^i,v)=d(w^i,z)+d(z,v)$ and it means $z$ is on the unique path between $w^i$ and $v$ on $T$.
{\it (ii)} Let $w'_z$ be the gate of $z$ in $P'$. We prove $I[z,w'_z]\cap V(T)=\{w'_z\}$. Assume $x\in (I[z,w'_z]\cap V(T))\setminus \{w'_z\}$. From {\it (i)}, $x$ is on $t-w^1$ or $t-w^2$ path. 
From isometricity of $T$, $t\in I[x,w'_z]\subseteq I[z,w'_z]$ holds and it contradicts the definition of $w'_z$. 
{\it (iii)} Similar to {\it (ii)}.
\end{proof}

We should also provide a way to identify the top of the staircases $L'$. We have only one case to tract, shown in Figure~(f), which we can find the entrance on $w^1-t$ or $w^2-t$ path on $T$ (we formally define the case in Appendix~\ref{sec:finding}). We can find it in $O(\log n)$ time in the algorithm in Appendix~\ref{sec:finding}.

\section{Query Processiing of the Case with One End on the Tree with Gated Branches}\label{sec:process}

In this section, we construct an algorithm and a data structure that answers the queries with one of the endpoints on the tree with gated branches. That part is the core of our algorithm.

\subsection{Query Processing for Maximal Staircases with Base on Convex Path}

Here we construct an algorithm and a data structure for the staircases whose base is contained in a convex path $P$. For simplicity, we assume that $P$ contains $2^q$ vertices for some integer $q$. We do not lose generality by this restriction because we can safely attach dummy vertices at the end of $P$. Let $P=(w_0,\dots, w_{2^q-1})$. Our data structure uses a segment tree defined on $P$. The information of the vertices with base $w_i$ in $P$ are stored by linking to $w_i$.

It is convenient to consider the \emph{direction} of $P$, as if $P$ is directed from $w_0$ to $w_{2^q-1}$. The \emph{reverse} $\bar{P}$ of $P$ is the same path as $P$ as an undirected path but has different direction, i.e., $\bar{P}=(w_{2^q-1},\dots, w_0)$. We represent the path between $w_x$ and $w_y$ on $P$ by $P[x,y]$. 

Let us formally define the queries to answer here. A query is represented by three vertices $x,w_a,w_b$ such that the gate of $x$ on $P$ is $w_a$, and asks to answer the value $p(L(x,w_a,w_b))$, where $L(x,w_a,w_b)$ represents the staircases with top $x$ and base starts at $w_a$ and ends at $w_b$. We construct two data structures, the first one treats the case $a\leq b$ and the second one treats the case $a>b$. The second data structure is just obtained by building the first data structure on the reverse of $P$, therefore we can assume that for all queries, $w_a\leq w_b$ holds.

For $i=0,\dots, 2^q-1$, let $F_i$ be the fiber of $w_i$ with respect to $P$. For $i=0,\dots, 2^q-2$ and $z\in F_i$, the \emph{successor} $\su_P(z)$ of $z$ is the gate of $z$ in $F_{i+1}$ (see Figure~(g)). Intuitively, $\su_P(z)$ represents the next step of $z$ in the staircases with base in $P$; more precisely, for $a<i<b$, if $F_i \cap V(L(x,w_a,w_b))$ induces $z-w_i$ path, $F_{i+1}\cap V(L(x,w_a,w_b))$ induces $\su_P(z)-w_{i+1}$ path.

Here we construct a complete binary tree, which is referred to as \emph{segment tree}, to answer the queries.
For each $d=0,\dots, q$ and for each $i=0,1,\dots, 2^{q-d}-1$, we prepare a node that corresponds to $P[i\times 2^d, (i+1)\times 2^d-1]$. For each node $v$ that corresponds to $P[l,r]$ and for each $z\in F_l$, we store the vertex $s(z,l,r)=\su_P^{r-l}(z)$ and the value $S(z,l,r)=p(L(z,w_l,w_r))=p(I[\su_P^{0}(z),w_{l}])\oplus \dots \oplus p(I[\su_P^{r-l}(z),w_{r}])$, where the $\su_P^k(z)$ is recursively defined by $\su_P^0(z)=z$ and $\su_P^{k+1}(z)=\su_P(\su_P^k(z))$ for all $0\leq k$.

The Algorithm~\ref{alg:stairs} calculates $p(L(x,w_a,w_b))$. We call the procedure $\text{StaircasesQuery}_P(0,2^q-1,a,b,x)$ to calculate it, and the algorithm returns the pair of the vertex $\su_P^{b-a+1}(x)$ and the value $p(L(x,w_a,w_b))$. The time complexity is $O(q)=O(\log n)$.

\begin{algorithm}
\caption{$\text{StaircasesQuery}_P(l,r,a,b,x)$}
\begin{algorithmic}[1]\label{alg:stairs}
\IF{$[l,r]\subseteq [a,b]$}
    \RETURN $(s(x,l,r), S(x,l,r))$
\ENDIF
\STATE $med\leftarrow \lfloor \frac{l+r}{2}\rfloor$
\IF{$b\leq med$}
    \RETURN $\text{StaircasesQuery}_P(l,med,a,b,x)$
\ENDIF
\IF{$med<a$}
    \RETURN $\text{StaircasesQuery}_P(med+1,r,a,b,x)$
\ENDIF
\STATE $(x',S_1)\leftarrow \text{StaircasesQuery}_P(l,med,a,b,x)$
\STATE $(x'',S_2)\leftarrow \text{StaircasesQuery}_P(med+1,r,a,b,\su_P(x'))$
\RETURN $(x'',S_1\oplus S_2)$
\end{algorithmic}
\end{algorithm}

This data structure is constructed as in Algorithm~\ref{alg:stairsconstruction}. The correctness is clear and the time complexity is $O(nq)\leq O(n\log n)$, assuming that we know the vertex $\su_P(x)$ and the value $p(I[x,w_i])$ for all $i=0,\dots, 2^q-1$ and $x\in F_i$. The size of the data structure is clearly $O(nq)\leq O(n\log n)$. We give algorithms to calculate $\su_P(x)$ in Appendix~\ref{sec:finding} and $p(I[x,w_i])$ in Appendix~\ref{sec:construction}.

\begin{algorithm}
\caption{Construction of the Data Structure for Staircases with Base on Convex Path}
\begin{algorithmic}[1]\label{alg:stairsconstruction}
\REQUIRE A cube-free median graph $G$, a convex path $P=(w_0,\dots, w_{2^q-1})$ 
\FOR{$i=0,\dots, 2^q-1$}
    \FORALL{$x\in F_i$}
        \STATE $s(x,i,i)\leftarrow x$
        \STATE $S(x,i,i)\leftarrow p(L(x,w_i,w_i))=p(I[x,w_i])$
    \ENDFOR
\ENDFOR
\FOR{$d=q-1,\dots, 0$}
    \FOR{$i=0,\dots, 2^{q-d}-1$}
        \STATE $a\leftarrow i\times 2^d,b\leftarrow (i+\frac{1}{2})\times 2^d,c\leftarrow (i+1)\times 2^d$
        \FORALL{$x\in F_i$}
            \STATE $s(x,a,c-1)\leftarrow s(\su_P(s,a,b-1),b,c-1)$
            \STATE $S(x,a,c-1)\leftarrow S(x,a,b-1)\oplus S(\su_P(s(x,a,b-1)),b,c-1)$
        \ENDFOR
    \ENDFOR
\ENDFOR
\end{algorithmic}
\end{algorithm}

\begin{figure}
\begin{tabular}{c}
    \begin{minipage}{0.5\hsize}
    \centering
          \includegraphics[scale=1.0]{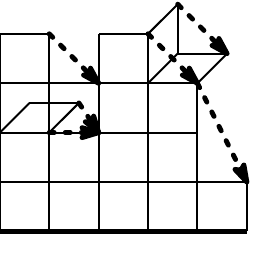}\\
          \hspace{1.2cm} (g) the arrows go from $v$ to $\su_P(v)$.\\ The bold line represents $P$.
    \end{minipage}
    \begin{minipage}{0.5\hsize}
    \centering
          \includegraphics[scale=1.0]{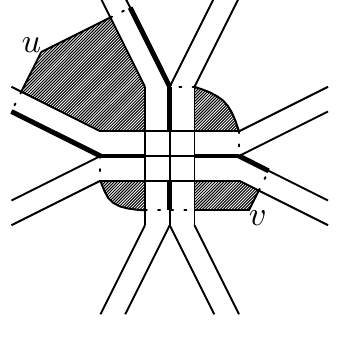}\\
          \hspace{1.2cm} (h) decomposition of $I[u,v]$.
    \end{minipage}
\end{tabular}
\end{figure}

\subsection{Query Processing for Staircases with Base on the Tree with Gated Branches}

Let $T$ be a tree with gated branches. Here we construct an algorithm and a data structure for the staircases whose base is a column of $T$. The simplest idea is to prepare the data structure discussed in the previous subsection for all root-leaf paths on $T$, but in this case the total size of the data structure can be as bad as $O(n^2\log n)$. To reduce the size, we instead prepare the above data structure on every heavy-path of heavy-light decomposition of $T$.

For a vertex $w\in V(T)$, let $F(w)$ be the set of vertices with an imprint $w$.
For an edge $(w,w')$ of $T$ and a vertex $z\in F(w)$, we denote $\su_{w,w'}(z)$ by the gate of $z$ in $F(w')$. For the staircases $L$ whose base starts at $w_1$ and ends at $w_2$ such that $w_1,w,w',w_2$ are located on some column of $T$ in this order, if $F(w)\cap V(L)$ induces $z-w$ path, $F(w')\cap V(L)$ is $\su_{w,w'}(z)-w'$ path.

Let $P$ be a heavy-path of $T$. Let $V_P$ be the set of vertices that has an imprint in $P$. We build a data structure discussed in the previous subsection on the graph induced by $V_P$ together with the convex path $P$; Lemma~\ref{lem:oneimp} and Lemma~\ref{lem:twoimp} ensures that, for any staircases $L$ we want to treat, all the vertices in $L$ has an imprint in the base of $L$.
We can calculate the answer for the queries by Algorithm~\ref{alg:stairs2}, where the vertices in a heavy-path is represented as $P=(w_{P,0},\dots, w_{P,w^{q_P}})$.

\begin{algorithm}
\caption{$\text{StaircasesQuery}(u,w,v)$}
\begin{algorithmic}[1]\label{alg:stairs2}
\REQUIRE $w,v\in V(T)$, $u\in V(G)$ such that $w$ and $v$ are on the same column of $T$ and $u\in F(w)$
\STATE Let $Q$ be the $w-v$ path on $T$ and $P_1,\dots, P_k$ be the list of heavy-paths that contains vertices in $Q$, in the same order appearing in $Q$
\STATE Let $P_1\cap Q=(w=w_{P_1,s_1},\dots,w_{P_1,t_1}),P_2\cap Q=(w_{P_2,s_2},\dots, w_{P_2,t_2}),\dots,P_k\cap Q=(w_{P_k,s_k},\dots, w_{P_k,t_k}=v)$
\STATE $(x,S)\leftarrow \text{StaircasesQuery}_{P_1}(0,2^{q_{P_1}}-1,s_1,t_1,u)$
\FOR{$i=2,\dots,k$}
    \STATE $(x',S')\leftarrow \text{StaircasesQuery}_{P_1}(0,2^{q_{P_i}}-1,s_i,t_i,\su_{w_{P_{i-1},t_{i-1}},w_{P_i,s_i}}(x))$
    \STATE $x\leftarrow x', S\leftarrow S\oplus S'$
\ENDFOR
\RETURN $(x,S)$
\end{algorithmic}
\end{algorithm}

The correctness of the algorithm is clear. The size of the data structure is bounded by $O(n\log n)$, because the size of the data structure on a heavy-path $P$ is bounded by $O(|V_P|\log |V_P|)$ and each vertex is in $V_P$ for at most two heavy-paths $P$. 
We should make an algorithm to calculate the successor efficiently. We describe an algorithm that works in $O(\log n)$ time in Appendix~\ref{sec:finding}.


Now, the time complexity of Algorithm~\ref{alg:stairs2} is $O(\log^2 n)$ because $k$ in the algorithm is at most $O(\log n)$. 

\subsection{Putting them Together}

Here we summarize our work on the interval query problem with one end on the tree with gated branches. In Section~\ref{sec:stairs}, for the fixed tree $T$ with gated branches, we have seen that any interval with one end on $T$ can be decomposed to at most two staircases (say, $L$ and $L'$, for instance we allow any of them to be empty) and a special interval $I$ that is one of $O(n)$ candidates. As we roughly described in Section~\ref{sec:stairs}, such decomposition can be calculated in $O(\log n)$ time (See Appendix~\ref{sec:finding} for details). 

Now we consider calculating the answer as $p(L)+p(L')+p(I)$.
$p(L)$ and $p(L')$ can be calculated in $O(\log^2 n)$ time by above algorithm. Furthermore, $p(I)$ is precalculated in the construction of our data structure and we can take this value in constant time. Therefore we can answer the interval query in the case with one end on the tree with gated branches in $O(\log^2 n)$ time. We summarize our algorithm in Algorithm~\ref{alg:stairsall} in Appendix.

Here we describe how $p(I)$ can be precalculated. Recall that, we construct our data structure recursively on each fibers. Therefore, after constructing the smaller data structure on each fiber, we can calculate the value $p(I)$ in $O(\log^2 n)$ time by using an interval query on them to complete construction. This is the bottleneck part of our construction algorithm, along with $O(\log n)$ recursion steps. Note that, this procedure can be implemented during preprocessing because there are only $O(n)$ candidates of $I$. When answering to the queries, we do not need to use the smaller data structure; we have only to refer these precalculated values.

\section{Decomposing Intervals into intervals with One End on the Boundary}\label{sec:decomposition}

In this section, we consider decomposing an interval with both ends in different fibers into smaller intervals with one end on boundaries (see Figure~(h)). Specifically, we bound the number of such fibers by $9$. Let $m$ be the median of $G$. For $x\in \St(m)$, let $F(x)$ be the fiber of $x$ with respect to $\St(m)$. For $v\in V(G)$, let $r(v)$ be the vertex in $\St(m)$ that is nearest from $v$. From definition of fibers, $v\in F(r(v))$ holds. 

First, we prove that the intersection of an interval and a fiber is indeed an interval. The following lemma holds.

\begin{lemma}\label{lem:fiberintersection}
Let $u,v$ be vertices and let $x\in \St(m)$. Let $g_u,g_v$ be the gate of $u,v$ in $F(x)$, respectively. Then, $I[u,v]\cap F(x)$ coincides with $I[g_u,g_v]$ if it is nonempty.
\end{lemma}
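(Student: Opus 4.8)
The plan is to prove $I[u,v]\cap F(x)=I[g_u,g_v]$ by establishing the two inclusions separately, relying heavily on the characterization of gates (Lemma on gated sets) and the convexity of fibers.

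\medskip

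\noindent\textbf{Plan.}
First I would dispose of the inclusion $I[g_u,g_v]\subseteq I[u,v]\cap F(x)$. Since $F(x)$ is convex (being a fiber with respect to the convex set $\St(m)$; see Lemma~\ref{lem:convexstar} with $Y=\{x\}$, or directly the fact that fibers are gated) and $g_u,g_v\in F(x)$, we get $I[g_u,g_v]\subseteq F(x)$. For the containment in $I[u,v]$, I would use that $g_u$ is the gate of $u$ in $F(x)$, so $g_u\in I[u,w]$ for every $w\in F(x)$; in particular $g_u\in I[u,g_v]$, hence $d(u,g_v)=d(u,g_u)+d(g_u,g_v)$. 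Symmetrically $g_v\in I[v,g_u]$, giving $d(v,g_u)=d(v,g_v)+d(g_v,g_u)$. Adding these and using that for any $z\in I[g_u,g_v]$ we have $d(g_u,z)+d(z,g_v)=d(g_u,g_v)$, a short distance computation shows $d(u,z)+d(z,v)=d(u,v)$, i.e. $z\in I[u,v]$. (Concretely: $d(u,z)\le d(u,g_u)+d(g_u,z)$ and $d(z,v)\le d(z,g_v)+d(g_v,v)$, and summing gives $d(u,z)+d(z,v)\le d(u,g_u)+d(g_u,g_v)+d(g_v,v)=d(u,g_v)+d(g_v,v)=d(u,v)$, where the last equality is $g_v\in I[u,v]$, which itself follows since $g_v\in I[u,g_u']$... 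I would instead argue $d(u,v)\le d(u,z)+d(z,v)$ is automatic by the triangle inequality, so equality holds.)

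\medskip

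\noindent\textbf{The reverse inclusion.}
Now suppose $z\in I[u,v]\cap F(x)$; I must show $z\in I[g_u,g_v]$. Since $z\in F(x)$ and $g_u$ is the gate of $u$ in $F(x)$, we have $g_u\in I[u,z]$, so $d(u,z)=d(u,g_u)+d(g_u,z)$; symmetrically $g_v\in I[v,z]$, so $d(v,z)=d(v,g_v)+d(g_v,z)$. Also, applying the gate property to the pair $g_u,g_v\in F(x)$: $g_u\in I[u,g_v]$ and $g_v\in I[v,g_u]$, so $d(u,g_v)=d(u,g_u)+d(g_u,g_v)$ and $d(v,g_u)=d(v,g_v)+d(g_v,g_u)$. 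Combining $z\in I[u,v]$ with these identities:
\[
d(u,g_u)+d(g_u,z)+d(z,g_v)+d(g_v,v)=d(u,z)+d(z,v)=d(u,v)\le d(u,g_u)+d(g_u,g_v)+d(g_v,v),
\]
where the final inequality is the triangle inequality $d(u,v)\le d(u,g_u)+d(g_u,g_v)+d(g_v,v)$. This forces $d(g_u,z)+d(z,g_v)\le d(g_u,g_v)$, and the reverse is the triangle inequality, so $d(g_u,z)+d(z,g_v)=d(g_u,g_v)$, i.e. $z\in I[g_u,g_v]$, as desired.

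\medskip

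\noindent\textbf{Main obstacle.}
The statement is guarded by ``if it is nonempty,'' so I should address existence of $g_u,g_v$: these are well-defined precisely because $F(x)$ is gated (every vertex of $G$ has a gate in a gated set), which holds for fibers of convex sets. The only real subtlety is bookkeeping the chain of triangle-inequality identities so that the two nonobvious facts — $g_u\in I[u,z]$ for all $z\in F(x)$ and $g_u\in I[u,g_v]$ — are both instances of the single defining property of a gate. Once those are in hand, everything reduces to routine manipulation of the metric; there is no need to invoke the grid structure (Lemma~\ref{lem:subgrid}) or the median property beyond the existence of gates. I do not expect any genuine difficulty here.
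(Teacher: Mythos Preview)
Your argument is correct and follows essentially the same route as the paper: use the gate property $g_u\in I[u,z]$, $g_v\in I[v,z]$ together with $z\in I[u,v]$ to squeeze $d(g_u,z)+d(z,g_v)\le d(g_u,g_v)$ for the harder inclusion, and convexity of $F(x)$ plus $g_u,g_v\in I[u,v]$ for the easy one. One small point of order: in your Plan you need $g_v\in I[u,v]$ (equivalently $d(u,g_u)+d(g_u,g_v)+d(g_v,v)=d(u,v)$), and your ``I would instead argue'' sentence does not actually supply this---it only comes out once you have run the reverse-inclusion computation on some witness $z\in I[u,v]\cap F(x)$, which is exactly where the nonemptiness hypothesis enters; the paper avoids this by doing the reverse inclusion first.
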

\begin{proof}
Assume $z\in I[u,v]\cap F(x)$. From the definition of the gate, there is a $u-z$ (resp. $v-z$) shortest path that passes through $g_u$ (resp. $g_v$). Therefore there is a $u-v$ shortest path that passes through $u,g_u,z,g_v,v$ in this order, which means $z\in I[g_u,g_v]$.
Converse direction is clear from $I[g_u,g_v]\subseteq I[u,v]$, which is from the definition of the gate.
\end{proof}

Note that, unless $r(u)=r(v)$, one of the gates of $u$ or $v$ in $F(x)$ is on the total boundary of $F(x)$. Therefore, to obtain the desired structural result, we just need to bound the number of fibers with non-empty intersection with $I[u,v]$. We use the following lemma from~\cite{chepoi2019distance}.

\begin{lemma}[{\rm \cite{chepoi2019distance}}]
Let $u,v$ be vertices with $r(u)\neq r(v)$. Then, one of the $m\in I[u,v]$, $r(u)\sim r(v)$, or $d(m,r(u))=d(m,r(v))=d(r(u),r(v))=2$ holds.
\end{lemma}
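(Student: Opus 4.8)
The plan is to reduce the statement to a short computation near $m$ by passing to the median of the three vertices $u,v,m$. Write $a:=r(u)$ and $b:=r(v)$, and let $c$ be the median of $u,v,m$, i.e. the unique vertex of $I[u,v]\cap I[v,m]\cap I[m,u]$. Since $m\in\St(m)$ and $a,b$ are the gates of $u,v$ in $\St(m)$, we have $a\in I[u,m]$ and $b\in I[v,m]$, and more importantly $d(u,w)=d(u,a)+d(a,w)$ for every $w\in\St(m)$ (and symmetrically for $v,b$). The first key step is the identity $d(u,m)+d(v,m)=d(u,v)+2\,d(c,m)$, obtained by adding the two equalities $d(u,m)=d(u,c)+d(c,m)$ and $d(v,m)=d(v,c)+d(c,m)$ and substituting $d(u,v)=d(u,c)+d(c,v)$. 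This yields the clean equivalence $m\in I[u,v]\iff c=m$. Hence if $c=m$ we are in the first case of the trichotomy, and from now on I would assume $m\notin I[u,v]$, so that $c\neq m$, and aim to prove that either $a\sim b$ or $d(m,a)=d(m,b)=d(a,b)=2$.

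The second step pulls the analysis back onto $\St(m)$. I would first show $\St(m)\cap I[u,m]=I[a,m]$: the inclusion $\supseteq$ holds because $\St(m)$ is convex and $a\in I[u,m]$, while for $\subseteq$, any $w\in\St(m)\cap I[u,m]$ satisfies $d(u,m)=d(u,w)+d(w,m)=d(u,a)+d(a,w)+d(w,m)$ using the gate property, which forces $w\in I[a,m]$ after comparing with $d(u,m)=d(u,a)+d(a,m)$; the symmetric statement holds for $v,b$. Now, since $c\neq m$, pick a neighbor $c_1$ of $m$ lying on a shortest path from $m$ to $c$, so $c_1\in I[m,c]\setminus\{m\}$. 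Because $c\in I[m,u]\cap I[m,v]$ we get $c_1\in I[m,u]\cap I[m,v]$, and since $c_1\sim m$ we have $c_1\in\St(m)$; combining with the previous identities gives the crucial conclusion $c_1\in I[a,m]\cap I[b,m]$ with $c_1\neq m$ and $d(m,c_1)=1$.

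The final step is a short case analysis. From $c_1\in I[a,m]$ and $d(m,c_1)=1$ we get $d(a,m)=d(a,c_1)+1$, hence $d(a,c_1)=d(a,m)-1\le 1$ (recall $d(m,a)\le 2$ since $a\in\St(m)$); thus either $a=c_1$ (when $d(m,a)=1$) or $a\sim c_1$ (when $d(m,a)=2$), and the same dichotomy holds for $b$. If $a=c_1=b$ then $a=b$, contradicting $r(u)\neq r(v)$. If exactly one of $a,b$ equals $c_1$, then the other is a neighbor of $c_1$, so $a\sim b$, the second case. If neither equals $c_1$, then $d(m,a)=d(m,b)=2$ and both $a,b$ are neighbors of $c_1$; bipartiteness of median graphs then forbids $a\sim b$ (both lie at even distance from $m$), so $d(a,b)=2$ since $d(a,b)\le d(a,c_1)+d(c_1,b)=2$, which is the third case. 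I expect the main obstacle to be the second step, namely proving that the single near-$m$ vertex $c_1$ can be pulled back simultaneously into both $I[a,m]$ and $I[b,m]$; this is exactly where the gate characterization $\St(m)\cap I[u,m]=I[a,m]$ does the work and where the degenerate possibilities $a=m$ or $b=m$ must be excluded (they are ruled out automatically, since $d(a,c_1)\ge 0$ forces $d(m,a)\ge 1$). Once $c_1$ is available the trichotomy falls out of distances bounded by $2$ together with bipartiteness.
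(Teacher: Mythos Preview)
Your argument is correct. The paper does not supply its own proof of this lemma; it is quoted from \cite{chepoi2019distance} without argument, so there is nothing in the present paper to compare against line by line.

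For what it is worth, your route via the median $c$ of $u,v,m$ and a single neighbor $c_1$ of $m$ on a shortest $m$--$c$ path is clean and self-contained: the identity $d(u,m)+d(v,m)=d(u,v)+2\,d(c,m)$ dispatches the case $m\in I[u,v]$, the gate characterization $\St(m)\cap I[u,m]=I[a,m]$ (and its $v,b$ analogue) is exactly what pins $c_1$ inside both $I[a,m]$ and $I[b,m]$, and the final trichotomy then reduces to distances at most $2$ plus bipartiteness. The degenerate possibility $a=m$ (or $b=m$) is indeed excluded automatically, since $c_1\in I[a,m]$ with $d(m,c_1)=1$ forces $d(m,a)\ge 1$. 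No step is missing.
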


Assume $m\in I[u,v]$. Then, $I[u,v]\cap F(x)\neq \emptyset$ means $x\in I[u,v]$. Therefore the number of such fibers $F(x)$ is same as the number of vertices in $I[u,v]\cap \St(m)$. Now, from the fact that $I[u,v]$ has a grid structure (see Lemma~\ref{lem:subgrid}) and $\St(m)$ consists of the vertices in an edge or a square that contains $m$, we have that $|I[u,v]\cap \St(m)|\leq 9$.

If $r(u)\sim r(v)$, from Lemma~\ref{lem:convexstar}, we have $I[u,v]\subseteq F(r(u))\cup F(r(v))$. If $d(r(u),r(v))=2$, let $w$ be the unique common neighbor of $r(u)$ and $r(v)$. Then, from Lemma~\ref{lem:convexstar}, we have $I[u,v]\subseteq F(r(u))\cup F(w)\cup F(r(v))$. Therefore, in all cases, the number of fibers with nonempty intersection with $I[u,v]$ is bounded by $9$.

In all of these cases, we can list the fibers $F(x)$ with nonempty intersection with the given interval $I[u,v]$; it is the set of the fibers of the vertices in $I[r(u),r(v)]$ because $\bigcup_{x\in I[r(u),r(v)]}F(x)$ is convex, and, we can list them efficiently by using the list of all squares in $G$. Now it is sufficient to give a way to calculate the gate of $u$ and $v$ in each of these fibers for our algorithm. We give the algorithm in Appendix~\ref{sec:finding}.

Above technique can also be applied for the following query. We are given three vertices $v_1,v_2,v_3$ in a cube-free median graph $G$ and asked to answer the median $v$ of these three vertices. Let $x$ be the median of $r(v_1)$, $r(v_2)$ and $r(v_3)$. $x$ can be calculated in $O(\log n)$ time because each of $I[r(v_1),r(v_2)]$, $I[r(v_2),r(v_3)]$ and $I[r(v_3),r(v_1)]$ contains at most $9$ vertices and $x$ is the unique vertex in the intersection of these intervals. Now, we can state that $v\in F(x)$, because $F(x)$ is the only fiber that can intersect all of $I[v_1,v_2]$, $I[v_2,v_3]$ and $I[v_3,v_1]$.

Let $g_{v_1}$ (resp. $g_{v_2}$, $g_{v_3}$) be the gate of $v_1$ (resp. $v_2$, $v_3$) in $F(x)$, which can be calculated in $O(\log n)$ time. Then, from Lemma~\ref{lem:fiberintersection}, $v$ coincides with the median of $g_{v_1}$, $g_{v_2}$ and $g_{v_3}$. Therefore we can reduce the median query on the original graph into the median query on the fiber $F(x)$ in $O(\log n)$ time. By recursively working on the fiber, we can calculate $v$ after $O(\log n)$ recursion steps. Therefore the query can be answered in $O(\log^2 n)$ time in total. The data structure required here is constructed in $O(n\log^2 n)$ time just by taking the necessary parts of the algorithm in Appendix~\ref{sec:construction}.

\section{Acknowledgement}

We are grateful to our supervisor Prof. Hiroshi Hirai for supporting our work. He gave us a lot of ideas to improve our paper. In particular, he simplified the proofs and helped us improve the introduction and the overall structure of this paper. 
This work has been supported in part by The University of Tokyo Toyota-Dwango Scholarship for Advanced AI Talents.

\bibliographystyle{plainurl}
\bibliography{bib.bib}

\appendix
\newpage

\section{Finding the Gates, Entrances and Successors}\label{sec:finding}

In this section, we give a way to find the gates, entrances and successors, which is the remaining task in previous sections. 
The following lemma ensures that we can apply the discussion in Section~\ref{sec:single} and Section~\ref{sec:double} on the maximal tree with gated branches, not the total boundary. It is proved by a similar strategy as Lemma~7 in~\cite{chepoi2019distance}.

\begin{lemma}\label{lem:tiso}
$T$ is isometric.
\end{lemma}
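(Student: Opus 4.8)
The plan is to establish the clean, slightly more general fact that every rooted tree with gated branches in a median graph is isometric, using the median argument behind Lemma~7 of~\cite{chepoi2019distance}. Let $r$ be the root of $T$ and fix two vertices $u,v\in V(T)$; let $t$ be their lowest common ancestor in $T$. Since in a rooted tree every vertex lies on a root--leaf path, I would pick root--leaf paths $P_u\ni u$ and $P_v\ni v$, which are convex by the gated-branches hypothesis. The first, routine, observation is that because any shortest $r$--$u$ path must stay inside the convex set $P_u$, the $r$--$u$ subpath $Q_u$ of $P_u$ is itself a shortest path; hence $d(r,u)$ equals its number of edges, i.e.\ the tree-distance $d_T(r,u)$, and moreover $I[r,u]=Q_u\subseteq V(T)$. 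The same applies to $Q_v\subseteq P_v$, giving $I[r,v]=Q_v$ and $d(r,v)=d_T(r,v)$.

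The core of the argument is to prove $t\in I[u,v]$. Here I would introduce $z:=\mathrm{med}(r,u,v)$, which exists because $G$ is a median graph. Then $z\in I[r,u]\cap I[r,v]=Q_u\cap Q_v$, and since $Q_u$ and $Q_v$ are the root-to-$u$ and root-to-$v$ paths of the tree $T$, their intersection is exactly the root-to-$t$ path; thus $z$ is an ancestor of $t$, sitting on $Q_u$ in the order $r,z,t,u$ and on $Q_v$ in the order $r,z,t,v$. Additivity of distances along the shortest paths $Q_u,Q_v$ then yields $d(z,u)=d(z,t)+d(t,u)$ and $d(z,v)=d(z,t)+d(t,v)$, so, using $z\in I[u,v]$,
\[
d(u,v)=d(u,z)+d(z,v)=2\,d(z,t)+d(t,u)+d(t,v)\le d(t,u)+d(t,v),
\]
the last inequality being the triangle inequality. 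This forces $d(z,t)=0$, i.e.\ $z=t$ and $d(u,v)=d(t,u)+d(t,v)$. Finally, since the $u$--$t$ segment lies on the convex path $P_u$ and the $t$--$v$ segment on the convex path $P_v$, we get $d(u,t)=d_T(u,t)$ and $d(t,v)=d_T(t,v)$, whence $d(u,v)=d_T(u,t)+d_T(t,v)=d_T(u,v)$ because $t=\mathrm{lca}_T(u,v)$; as $T$ is a tree this common value is the length of the unique $u$--$v$ path in $T$, so $T$ is isometric.

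The only step I expect to require genuine care is the identification $Q_u\cap Q_v=(\text{root-to-}t\text{ path})$ together with the resulting order $r,z,t,u$ on $Q_u$ (and $r,z,t,v$ on $Q_v$): this is exactly where the tree structure of $T$ and the definition of the lowest common ancestor are used, and it goes through precisely because convexity of $P_u$ and $P_v$ pins $I[r,u]$ and $I[r,v]$ down to $Q_u,Q_v\subseteq V(T)$, so that the a priori arbitrary $G$-vertex $z$ really lands on these tree paths. Everything else is bookkeeping with additivity of distances along convex (hence shortest) paths and a single application of the triangle inequality; in particular the cube-free hypothesis plays no role here — only that $G$ is a median graph, so that the median $z$ of $r,u,v$ is defined.
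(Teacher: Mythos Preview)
Your proof is correct and follows essentially the same route as the paper's own argument: take the median $z$ of $r,u,v$, use convexity of the root--leaf paths to place $z$ on both $Q_u$ and $Q_v$ and hence on the root-to-$t$ segment, then compare distances to force $z=t\in I[u,v]$. The paper's version compresses your distance computation into the single chain $d(u,v)=d(u,z)+d(z,v)\ge d(u,t)+d(t,v)\ge d(u,v)$, but the content is identical.
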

\begin{proof}
Let $x,y\in V(T)$. Let $z$ be a median of $x,y$ and $r$. Let $z'$ be the lowest common ancestor of $x$ and $y$ in $T$. Denote the path on $T$ between $x$ (resp. $y$) and $r$ by $P_x$ (resp. $P_y$). Then, $z$ is on $P_x$ and $P_y$ and therefore $z$ is on $z'-r$ path on $T$. Here, $d(x,y)=d(x,z)+d(z,y)\geq d(x,z')+d(z',y)\geq d(x,y)$ hold. Therefore $z=z'$ and there is a $x-y$ shortest path that only uses the vertices of $T$.
\end{proof}

The following observation is useful.

\begin{lemma}\label{lem:treegate}
Let $T$ be a tree with gated branches rooted at $r$ and $T'$ be any convex subgraph of $T$ that contains $r$. Let $u\in V(G)$. Then, there is an imprint $w$ of $u$ in $T$ such that the gate of $u$ in $T'$ is the nearest ancestor of $w$ in $V(T')$ on $T$.
\end{lemma}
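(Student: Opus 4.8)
The plan is to exploit the structure of the root–leaf paths of $T$, all of which are convex, together with Lemma~\ref{lem:equaldist}. First I would let $g$ be the gate of $u$ in $T'$ (which exists and is unique since $T'$ is convex). Let $P$ be the root–leaf path of $T$ that passes through $g$; since $P$ is convex, $P$ is gated, so $u$ has a well-defined gate $g_P$ in $P$, and $g$ lies on $P$. I claim $g$ is the nearest ancestor of $g_P$ in $V(T')$ on $T$. Indeed, $g_P\in I[u,w]$ for every $w\in P$, in particular $g\in I[u,g_P]$, so $g$ is an ancestor of $g_P$ on $P$ (using that $g\in T'$ and $T'$ is convex, together with the tree structure of $T$: the portion of $P$ inside $T'$ is exactly the ancestors of some vertex, or one argues directly that $g_P$ lies below $g$). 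Conversely any vertex of $T'$ strictly between $g_P$ and $g$ on $P$ would be closer to $u$ than $g$, contradicting that $g$ is the gate of $u$ in $T'$. So $g$ is the nearest ancestor of $g_P$ in $V(T')$.

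Next I would argue that $g_P$ is an imprint of $u$ in $T$. By definition, $g_P$ is the gate of $u$ in the convex path $P$, so $I[u,g_P]\cap P=\{g_P\}$. But I need $I[u,g_P]\cap V(T)=\{g_P\}$, not merely $I[u,g_P]\cap P=\{g_P\}$. Suppose some $x\in (I[u,g_P]\cap V(T))\setminus\{g_P\}$. Then $x$ lies on some root–leaf path of $T$; let $z'$ be the lowest common ancestor of $x$ and $g_P$ in $T$. Since $T$ is isometric (Lemma~\ref{lem:tiso}), $z'\in I[x,g_P]\subseteq I[u,g_P]$; but $z'$ is an ancestor of $g_P$ on $P$ and also an ancestor of $x$, and if $z'\neq g_P$ this forces $z'\in P\cap I[u,g_P]$, contradicting $I[u,g_P]\cap P=\{g_P\}$. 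Hence $z'=g_P$, so $g_P$ is an ancestor of $x$; applying isometricity of $T$ again, $g_P\in I[x,u]$ would be forced the wrong way unless... here I'd instead observe $d(u,x)=d(u,g_P)+d(g_P,x)$ would be needed but we have $x\in I[u,g_P]$, i.e. $d(u,g_P)=d(u,x)+d(x,g_P)$; combined with $g_P$ being an ancestor of $x$ and isometricity giving $d(r,x)=d(r,g_P)+d(g_P,x)$, and $d(r,u)=d(r,g_P)+d(g_P,u)$ since $P$ is convex, we get $d(r,x)=d(r,g_P)+d(g_P,x)$ and $d(u,x)=d(u,g_P)-d(g_P,x)$, which forces $x$ to also be on a shortest $u$–$r$ path together with $g_P$; pushing this through yields $x\in P$, the final contradiction. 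So $I[u,g_P]\cap V(T)=\{g_P\}$ and $g_P$ is an imprint of $u$ in $T$.

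Putting the two parts together: $g_P$ is an imprint of $u$ in $T$, and the gate $g$ of $u$ in $T'$ is the nearest ancestor of $g_P$ in $V(T')$ on $T$, which is exactly the assertion. I would present the argument in this order — first identify $g_P$, then show $g$ is its nearest ancestor in $T'$, then verify $g_P$ is a genuine imprint — since the imprint verification is the only delicate step and it is cleanest to have $g_P$ and $g$ both in hand before invoking Lemma~\ref{lem:tiso}.

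The main obstacle I anticipate is precisely the verification that the gate $g_P$ of $u$ in the convex path $P$ is an \emph{imprint} in the whole tree $T$, i.e. that no other vertex of $T$ (off the path $P$) sneaks into $I[u,g_P]$. The subtlety is that $T$ is not convex, so a priori a shortest $u$–$g_P$ path could leave $P$ and re-enter $T$ elsewhere; the resolution relies essentially on the isometricity of $T$ (Lemma~\ref{lem:tiso}) forcing the lowest common ancestor of any such stray vertex with $g_P$ to lie on a shortest path, which then collides with convexity of $P$. A careful bookkeeping of which distance decompositions hold (convexity of $P$ gives $d(r,u)=d(r,g_P)+d(g_P,u)$; isometricity of $T$ gives the tree-metric decompositions) is what makes this step work, and I would want to write it out fully rather than wave at it.
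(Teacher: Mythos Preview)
Your overall strategy --- start from the gate $g$ of $u$ in $T'$, pick a root--leaf path $P$ through $g$, let $g_P$ be the gate of $u$ in $P$, and show both that $g$ is the nearest ancestor of $g_P$ in $T'$ and that $g_P$ is an imprint --- is appealing, and the first of these two claims is fine. The second claim, however, is false as stated, and the argument you sketch for it does not close.

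Concretely, take the single-imprint case: $u$ has a unique imprint $w$ in $T$, and then (since $d(u,z)=d(u,w)+d(w,z)$ for every $z\in V(T)$) the gate $g$ is the nearest ancestor of $w$ in $T'$. Suppose $w\notin V(T')$ and $g$ has at least two children in $T$. You may then choose a root--leaf path $P$ through $g$ that does \emph{not} continue toward $w$. For such a $P$ the gate $g_P$ of $u$ in $P$ is exactly $g$ (it is the vertex of $P$ minimising $d(w,\cdot)$ in $T$). But $w\in I[u,g]\cap V(T)$ with $w\neq g$, so $g_P=g$ is \emph{not} an imprint. Tracing your argument on this example, you correctly deduce that the offending vertex $x$ (here $x=w$) lies in $I[u,r]$ and is a descendant of $g_P$; but the step ``pushing this through yields $x\in P$'' simply fails: $x=w$ sits on a different root--leaf path, and none of the distance identities you assembled force it onto $P$.

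The paper's proof avoids this by going in the opposite direction: it starts from the (at most two) imprints $w^1,w^2$ supplied by Lemma~\ref{lem:equaldist}, forms their nearest ancestors $z_1,z_2$ in $T'$, and then argues directly --- via Lemma~\ref{lem:quad} and a short case analysis on whether each $z_i$ lies above or below the lowest common ancestor of $w^1,w^2$ --- that one of $z_1,z_2$ is the gate of $u$ in $T'$. To repair your route you would have to choose $P$ so that it passes through an imprint, which requires locating the imprints first; at that point you are essentially back to the paper's argument.
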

\begin{proof}
If $u$ has exactly one imprint, the statement is clear. Assume $u$ has two imprints $w^1$ and $w^2$ and let $t$ be the lowest common ancestor of $w^1$ and $w^2$. Let $z_1,z_2$ be the nearest ancestor of $w^1,w^2$ in $T'$, respectively. Then, either $z_1$ or $z_2$ is an ancestor of $t$ because otherwise we can apply Lemma~\ref{lem:quad} to obtain a vertex such that it is not $t$ and adjacent to two neighbors of $t$, which contradicts to the convexity of $T'$. Without loss of generality we can assume $z_2$ is an ancestor of $t$. If $z_1$ is also an ancestor of $t$, $z_1=z_2$ holds and the statement is clear. Assume that $z_1$ is a descendant of $t$. We prove $I[u,z_1]\cap V(T')=\{z_1\}$. Let $x\in V(T)\setminus \{z_1\}$. If $d(u,x)=d(u,w^1)+d(w^1,x)$ holds, $d(u,x)>d(u,z_1)$ holds because we have that $z_1$ is the nearest vertex from $w^1$ in $T'$ because of the isometricity of $T$. Otherwise, we have $d(u,x)=d(u,w^2)+d(w^2,x)<d(u,w^1)+d(w^1,x)$. In this case, we have that $t$ is on the $w^1-x$ path on $T$ and we have $d(u,z_1)<d(u,t)<d(u,t)+d(t,x)=d(u,x)$, which is from the isometricity of $T$. Therefore $x\not \in I[u,z_1]$ and the lemma is proved.
\end{proof}

The \emph{Euler-tour} of rooted tree $T$ with root $r$ is a walk $(r=w_0,\dots, w_{2n-1}=r)$ on $T$ that starts and ends at $r$ and passes through each edge exactly twice, in different direction. Given $T$, the Euler-tour of $T$ can be calculated in linear time by depth-first search.
Let $T'$ be a connected subgraph of rooted tree $T$. Let $S_{T'}$ be the set of the indices $i$ such that at least one of $(w_{i-1},w_{i})$ or $(w_{i},w_{i+1})$ is an edge in $T'$. If the Euler-tour of $T$ is already calculated, we can calculate such set $S_{T'}$ in $O(|V(T')|)$ time. Let $u\in V(T)$ and assume $u$ has an ancestor in $T'$. Let $u=w_i$. Then, the nearest ancestor of $u$ in $T'$ is the $w_{i'}$, where $i'$ is the largest index in $S_{T'}$ with $i'\leq i$. Such $i'$ can be found in $O(\log n)$ time by binary search if the elements of $S_{T'}$ are sorted. Therefore the following holds.

\begin{lemma}\label{lem:nearestds}
Let $T$ be a rooted tree and $T'$ be a connected subgraph of $T$.
Assume an Euler-tour of $T$ is given. Then, there is an algorithm such that, given a vertex $u\in V(T)$, calculate the nearest ancestor of $u$ in $T'$ in $O(\log |V(T')|)$ time. The preprocessing requires $O(|V(T')|\log |V(T')|)$ time and $O(|V(T')|)$ space.
\end{lemma}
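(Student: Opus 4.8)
The plan is to reduce the statement to a single application of binary search over the Euler‑tour array, exactly as sketched in the paragraph preceding Lemma~\ref{lem:nearestds}. First I would recall the standard fact that the Euler‑tour $(w_0,\dots,w_{2n-1})$ of a rooted tree can be produced in linear time by a depth‑first search, and that for any vertex $u$ we may fix a single occurrence index $i$ with $u=w_i$ (say the first time the DFS enters $u$). The key observation to spell out is the characterisation of the nearest ancestor of $u$ in $T'$: since $T'$ is a connected subgraph of $T$ (hence itself a subtree), the edges of $T'$ traversed by the Euler‑tour of $T$ form a contiguous‑in‑the‑tree but generally scattered‑in‑the‑walk collection, and the nearest ancestor of $u$ in $T'$ is precisely $w_{i'}$ where $i'=\max\{j\in S_{T'}: j\le i\}$, with $S_{T'}$ defined as in the excerpt. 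I would justify this by noting that walking backward along the Euler‑tour from index $i$ corresponds to retracing the path from $u$ toward the root, and the first index at which an edge of $T'$ is incident is exactly where the root‑path of $u$ first meets $T'$, i.e.\ the nearest ancestor.

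Next I would describe the preprocessing: compute $S_{T'}$ by scanning the Euler‑tour once while checking, for each consecutive pair $(w_{j-1},w_j)$, whether it is an edge of $T'$ (this check is $O(1)$ per step if membership of vertices/edges in $T'$ is recorded in a lookup table built in $O(|V(T')|)$ time), then sort the resulting index set. The scan touches only the portion of the tour relevant to $T'$ plus constant overhead, so it is $O(|V(T')|)$; the sort is $O(|V(T')|\log|V(T')|)$, which dominates the preprocessing time, while the stored sorted array of $S_{T'}$ uses $O(|V(T')|)$ space. For a query $u$, locate the occurrence index $i$ of $u$ in $O(1)$ (precomputed), then binary‑search the sorted array $S_{T'}$ for the largest element $\le i$ in $O(\log|V(T')|)$ time, and return the corresponding Euler‑tour vertex. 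This matches the claimed bounds.

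I do not expect a serious obstacle here; the only point requiring a little care is the correctness characterisation of the nearest ancestor in terms of $S_{T'}$, and in particular making sure the argument is robust to the fact that a vertex appears several times in the Euler‑tour and that $T'$ need not contain the root of $T$ in any special position — but the lemma's hypothesis only asks for the algorithm to work when $u$ has an ancestor in $T'$, so we may simply assume $S_{T'}\cap\{0,\dots,i\}\neq\emptyset$, which is exactly the condition guaranteeing the binary search succeeds. A second minor bookkeeping issue is that the paper invokes this lemma with $T'$ convex (as in Lemma~\ref{lem:treegate}); convexity implies connectedness, so the stated hypothesis is satisfied, and nothing more is needed. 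Thus the proof is essentially an assembly of the Euler‑tour construction, the $S_{T'}$ characterisation, and a textbook binary search, with the time and space accounting done as above.
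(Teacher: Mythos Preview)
Your proposal is correct and follows essentially the same approach as the paper: the paper's justification (given in the paragraph immediately preceding the lemma) defines the index set $S_{T'}$, observes that the nearest ancestor of $u=w_i$ is $w_{i'}$ for $i'=\max\{j\in S_{T'}:j\le i\}$, and finds $i'$ by binary search over the sorted $S_{T'}$. You reproduce exactly this argument, adding only some welcome elaboration on why the characterisation via $S_{T'}$ is correct and on the bookkeeping for the time and space bounds.
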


Consider finding a successor. Let $T$ be a tree with gated branches and let $w,w'$ be two neighboring vertices in $T$ and $F(w),F(w')$ be the set of vertices with an imprint $w$ and $w'$, respectively. We consider finding $\su_{w,w'}(z)$, which is defined by the gate of $z$ in the boundary of $F(w')$ in Section~\ref{sec:process}.

For all possible pairs $(w,w')$, we precalculate the boundary $T_{w,w'}$ of $F(w)$ relative to $F(w')$. The size of it is bounded by $|F(w'')|$, where $w''$ is $w$ if $w$ is the child of $w'$ and $w'$ otherwise, and therefore the total size of these boundaries are $O(n)$. For each pair $(w,w')$, we construct the data structure in Lemma~\ref{lem:nearestds} on the total boundary of $F(w)$ that finds the nearest ancestor in $T_{w,w'}$. Then, from Lemma~\ref{lem:treegate}, we can calculate the gate. The same algorithm can also be applied to calculate the entrance under the setting of Section~\ref{sec:path}.

Now, we treat the remaining task in Section~\ref{sec:decomposition}. We use the same notation here as Section~\ref{sec:decomposition}; $m$ is the median of the cube-free median graph $G$, $F(x)$ is the fiber of $x$ in $\St(m)$, $r(u)$ is the vertex with $u\in F(r(u))$.
We give an algorithm that, given vertices $u,v\in V(G)$ and $x\in \St(m)$ with $I[u,v]\cap F(x)\neq \emptyset$, calculate the gate of $u$ in $F(x)$. Now, we can state that if $m\in I[r(u),x]$, the gate of $u$ in $F(x)$ is $x$; recall Lemma~\ref{lem:subgrid}. If $d(r(u),x)=1$, from the definition of the gate, the gate of $u$ in $F(x)$ is the neighbor of the gate of $u$ in the boundary of $F(r(u))$ relative to $F(x)$. Finally, assume $d(r(u),x)=2$. Let $y$ be the common neighbor of $r(u)$ and $x$. Then, again from the definition of the gate, we have the vertices $w_1,w_2,w_3$, where $w_1$ is the gate of $u$ in the boundary of $F(r(u))$ relative to $F(y)$, $w_2$ is the neighbor of $w_1$ in $F(y)$, $w_3$ is the gate of $w_2$ in the boundary of $F(x)$ relative to $F(y)$ and the gate of $u$ in $F(x)$ is the neighbor of $w_3$ in $F(x)$. Therefore, in all cases we can calculate the gate of $u$ in $F(x)$ by repeatedly calculating the gate, which can be computed in $O(\log n)$ time using Lemma~\ref{lem:treegate} and Lemma~\ref{lem:nearestds}.

Now we consider finding the entrance under the setting of Section~\ref{sec:single} and Section~\ref{sec:double}. Before doing it, we investigate the property of the tree with gated branches we actually treat.

\subsection{Basic Properties of a Maximal Tree with Gated Branches}\label{sec:gatedbranch}

Let $T$ be the maximal tree with gated branches rooted at $r$, here the tree with gated branches is maximal if we cannot add a vertex of $G$ to $T$ and get a tree with gated branches.
The following lemma characterizes the maximality of $T$.

\begin{lemma}\label{lem:maximal}
Let $T'$ be a tree with gated branches rooted at $r'$. Let $x$ be a vertex such that $x\not \in V(T')$. Then, $V(T')\cup \{x\}$ induces a tree with gated branches if and only if both of the following conditions holds.
\begin{description}
  \item[(a)] There is a vertex $y\in V(T)$ such that $x\sim y$ and $d(r',x)=d(r',y)+1$.
  \item[(b)] $r'=y$, or $x$ and $y$'s parent $z$ have no common neighbor other than $y$.
\end{description}
\end{lemma}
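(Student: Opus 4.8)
The plan is to prove both directions by unwinding the definition of "tree with gated branches," namely that every root--leaf path is convex. For the forward direction, suppose $V(T')\cup\{x\}$ induces a tree with gated branches rooted at $r'$. Since the induced subgraph is a tree containing $T'$ as a subtree and $x\notin V(T')$, the vertex $x$ must be a leaf attached to a unique neighbor $y\in V(T')$; because the root stays $r'$, the vertex $y$ is the parent of $x$, so $d(r',x)=d(r',y)+1$, giving (a). For (b): if $y\neq r'$, let $z$ be the parent of $y$ in $T'$. The root--leaf path through $x$ now contains $z,y,x$ consecutively, and it is convex, so $I[z,x]\subseteq\{z,y,x\}$. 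If $x$ and $z$ had a common neighbor $w\neq y$, then since $G$ is bipartite $d(z,x)=2$ and $w\in I[z,x]\setminus\{z,y,x\}$, a contradiction; hence (b) holds.

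For the converse, assume (a) and (b). Attaching $x$ to $y$ as a new leaf clearly produces a tree (no cycle is created since $x\notin V(T')$, and $x$ has only the neighbor $y$ inside $V(T')\cup\{x\}$ by the $K_{2,3}$-freeness of median graphs together with (b) — I will need to check that $x$ has no second neighbor in $V(T')$, which follows because a second neighbor would force a $K_{2,3}$ or a forbidden configuration; more carefully, any neighbor of $x$ in $T'$ other than $y$ would lie at distance $2$ from $z$ as well, contradicting (b) when $y\neq r'$, and I handle the $y=r'$ case separately using convexity of the branches through $y$). The only root--leaf path that changes is the one ending at the new leaf $x$: it is the old path from $r'$ to $y$ (which is convex by hypothesis on $T'$) extended by the edge $yx$. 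So it remains to show this extended path $Q$ is convex, i.e.\ $I[a,x]\subseteq V(Q)$ for every $a\in V(Q)$. By Lemma~\ref{lem:subgrid} and the fact that the $r'$--$y$ subpath is already convex, it suffices to rule out a vertex of $I[a,x]$ off the path; the only way this can happen, by considering the isometric grid embedding of $I[a,x]$, is for $x$ to have a neighbor $w\neq y$ with $d(a,w)=d(a,y)$, which forces $w$ and $z$ (the parent of $y$) to be adjacent — exactly what (b) forbids. The case $y=r'$ is easier: then $Q$ is just the edge $r'x$, which is trivially convex.

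The main obstacle I anticipate is the converse direction: verifying that no "detour" vertex can appear in $I[a,x]$ for $a$ on the old path. Condition (b) only directly controls common neighbors of $x$ and the grandparent-type vertex $z$, so I will need Lemma~\ref{lem:quad} (the median-graph quadrangle condition) to argue that any would-be detour vertex propagates down to produce precisely such a common neighbor, and then Lemma~\ref{lem:subgrid} to confine the picture inside a grid so that this local obstruction is the only possible one. I also need to be careful that adding $x$ does not secretly make some \emph{other} branch non-convex or turn the graph into a non-tree — but since $x$ is a leaf with its unique neighbor $y$ in $V(T')$, all other root--leaf paths are unchanged, so this reduces entirely to the single path $Q$.
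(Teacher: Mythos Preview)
Your forward direction is essentially the paper's, and your convexity argument for the new root--leaf path is correct (if more elaborate than needed). The real gap is in showing that $T'':=T'\cup\{x\}$ is a tree, i.e.\ that $x$ has no second neighbor $w\neq y$ in $V(T')$. Your claim that such a $w$ ``would lie at distance $2$ from $z$, contradicting (b)'' is not right on two counts. First, condition (b) forbids common \emph{neighbors} of $x$ and $z$, so what you need is $w\sim z$ (distance $1$), not distance $2$. Second, and more importantly, (b) alone does not rule out every possible $w$: you must split on $d(r',w)$. Since $w\sim x$ and $G$ is bipartite, $d(r',w)\in\{d(r',y),\,d(r',y)+2\}$. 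In the case $d(r',w)=d(r',y)+2$ you get $x\in I[r',w]$, which contradicts convexity of the existing $r'$--$w$ branch in $T'$---this has nothing to do with (b). Only in the case $d(r',w)=d(r',y)$ does Lemma~\ref{lem:quad} (applied to $r',x,y,w$) produce a vertex adjacent to both $y$ and $w$, which by convexity of the $r'$--$y$ branch must be the parent $z$ of $y$; then $w$ is a common neighbor of $x$ and $z$ other than $y$, contradicting (b). Your sketch collapses these two cases and invokes $K_{2,3}$-freeness, which plays no role here.

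For the convexity of the new branch $Q$, your grid-embedding argument works, but the paper's route is shorter: use the equivalence ``convex $\Leftrightarrow$ connected and locally convex'' already recorded in Section~\ref{sec:prelim}. The path $Q$ is connected, and the only new length-$2$ subinterval is $(z,y,x)$; condition (b) says exactly that $I[z,x]=\{z,y,x\}$, so local convexity is immediate. This avoids Lemma~\ref{lem:subgrid} and the propagation argument entirely.
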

\begin{proof}
Let $T''$ be the subgraph induced by $V(T')\cup \{x\}$.
Assume $T''$ has gated branches. Since $x$ should be contained in some convex $r'-x$ path, there should be a neighbor $y$ of $x$ with $d(r',x)=d(r',y)+1$. Therefore (a) holds. If (b) does not hold, the common neighbor of $z$ and $x$ other than $y$ is contained in $I[x,z]$ and therefore $T''$ does not have gated branches. Therefore "only if" part is proved.

Now we prove "if" part. Assume both (a) and (b) hold. First, we prove $T''$ is a tree. Assume the contrary. Then, $x$ is contained in a cycle $C$ of $T''$ because $T'$ is a tree. We can take $C$ to contain $y$. $C$ has even length because the median graphs are bipartite. Let $w$ be the neighbor of $x$ in $C$ other than $y$. If $d(r',w)>d(r',y)$, $d(r',w)-2\geq d(r',y)$ holds and therefore $x\in I[r',w]$, which is a contradiction. If $d(r',w)<d(r',y)$, $d(r',w)\leq d(r',y)-2$ holds and therefore $x\in I[r',y]$, which is a contradiction. Therefore $d(r',w)=d(r',y)=d(r',x)-1$. Here we can apply Lemma~\ref{lem:quad} to $r',x,y,w$ and obtain a vertex $z$ with $d(r',z)=d(r',x)-2$, $z\sim y$ and $z\sim w$. Since $z\in I[r',y]$, $z$ is a parent of $y$. It contradicts to (b) and therefore $T''$ is a tree.

Now we prove $T''$ has gated branches. We should only to prove that the path $P$ between $x$ and $r'$ is convex. It is enough to prove the local convexity of $P$, and this is obtained from (b) and the local convexity of $V(P)\setminus \{x\}$.
\end{proof}

Now we consider identifying the entrance of $L'$ in the setting of Section~\ref{sec:single} and Section~\ref{sec:double}. First, we consider the former case.

\subsection{Entrance Identification: Single Imprint Case}\label{app:a}

Here we give the methods to identify the entrances of the staircases $L$ and $L'$ for the case that $I[u,v]$ contains exactly one imprint of $u$ in the maximal tree with gated branches $T$, which is the remaining problem in Section~\ref{sec:single}. We use the same settings and notations as Section~\ref{sec:single}; $w$ is the imprint of $u$ in $T$, $t$ is the lowest common ancestor of $w$ and $v$, $P$ (resp. $P'$) is any root-leaf path of $T$ that contains $w$ (resp. $v$), and $I[u,v]=I[u,t]\cup V(L')=I[u,w]\cup V(L)\cup V(L')$. We assume $t\neq v$ because otherwise $L'$ is empty. Let $e$ be the entrance of $L$. We assume $e\neq t$, otherwise $e$ is already found.

Let $t'$ be the neighbor of $t$ in $P'$ that is in $B_{L'}$; in other words, $t'$ is the start of the base of $L'$. Let $e'$ be the top of $L'$. We define the path $Q=(t=x_0,\dots, x_l=e)$ (resp. $Q'=(t'=x'_0,\dots, x'_l=e')$) as the $t-e$ (resp. $t'-e'$) shortest path.

The next lemma corresponds to the case of figure~(d). This lemma is the profit of imposing the maximality of $T$.

\begin{lemma}\label{lem:oneimp1}
Assume that $x_1$ is the neighbor of $t$ in $B_L$. Then, $e$ is on $P$.
\end{lemma}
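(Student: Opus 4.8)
\textbf{Proof proposal for Lemma~\ref{lem:oneimp1}.}

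The plan is to show that the $t$--$e$ shortest path $Q$ stays inside the root-leaf path $P$ (which contains $w$); since $e \in I[u,w]$ and $e \neq t$, and $Q$ lies in $P$, the conclusion $e \in V(P)$ follows. The hypothesis is that the base of $L$ leaves $t$ in the direction of $x_1$, and $x_1$ is a neighbor of $t$ lying on $B_L$. First I would recall the coordinate picture from Section~\ref{sec:path}: the interval $I[u,t]$ embeds isometrically in a two-dimensional grid (Lemma~\ref{lem:subgrid}), with $w$ and $u$ in the half-plane of nonnegative $y$ and the staircases $L$ occupying the quadrant of positive $x$ relative to $w$. In that picture the entrance $e$ of $L$ is the gate of $u$ in the boundary of $F(w)$ relative to $F(w_{+})$, where $w_{+}$ is the successor of $w$ along $P$ toward $B_L$. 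What I need to rule out is that $e$ lies strictly off $P$, i.e. that the $t$--$e$ path $Q$ branches away from $P$ before reaching $e$.

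The key step is to use maximality of $T$ together with Lemma~\ref{lem:maximal}. Suppose for contradiction that $Q = (t = x_0, \dots, x_l = e)$ with $l \geq 1$ leaves $P$ at some point; let $x_j$ be the first vertex of $Q$ not on $P$. Since $x_1$ is on $B_L$ and (by the hypothesis) is a neighbor of $t$ in the base direction of $L$, I would argue that the prefix of $Q$ up to $x_{j-1}$ runs along $P$ (or at least along a convex path lying in $T$), so $x_{j-1} \in V(T)$ and $x_j \sim x_{j-1}$ with $d(r, x_j) = d(r, x_{j-1}) + 1$ — verifying condition (a) of Lemma~\ref{lem:maximal} for adding $x_j$ to $T$. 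Then I would check condition (b): if $x_j$ and the parent $z$ of $x_{j-1}$ had a common neighbor $y' \neq x_{j-1}$, Lemma~\ref{lem:quad} would force a square configuration, and combined with the grid structure of $I[u,t]$ (Lemma~\ref{lem:subgrid}) and the fact that $x_{j-1}$ already has a grid-neighbor on $P$, we would get a cube or a $K_{2,3}$, contradicting cube-freeness. Hence $V(T) \cup \{x_j\}$ induces a tree with gated branches, contradicting the maximality of $T$. Therefore no such branch point exists and $Q \subseteq V(P)$, so $e \in V(P)$.

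The main obstacle I expect is the geometric bookkeeping in the step that identifies where $Q$ can leave $P$: I must pin down, using the coordinate embedding, exactly which neighbor $w_{+}$ of $w$ lies on $P$ versus which lies in $B_L$, and confirm that the gate-of-$u$ characterization of $e$ forces the $t$--$e$ path to travel ``parallel'' to $P$ rather than diverging. Concretely, the delicate point is translating ``$x_1$ is the neighbor of $t$ in $B_L$'' into the statement that the grid column through $t$ in the embedding of $I[u,t]$ is precisely the segment of $P$ between $t$ and $w$ — once that is established, convexity of $P$ (it is a root-leaf path of a tree with gated branches) pins $Q$ to $P$ directly, and the maximality argument above is only needed to handle the boundary case where the grid picture alone is not conclusive. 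I would organize the write-up so that the generic case is handled by the coordinate/convexity argument and maximality is invoked only for the exceptional configuration.
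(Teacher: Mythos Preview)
Your skeleton is the same as the paper's: assume $e\notin V(P)$, take the first vertex $x_i$ of $Q=(t=x_0,\dots,x_l=e)$ that leaves $B_L$, and show $V(T)\cup\{x_i\}$ is still a tree with gated branches via Lemma~\ref{lem:maximal}, contradicting maximality of $T$. So the overall plan is right.

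Two places need fixing. First, your verification of condition (b) of Lemma~\ref{lem:maximal} does not work as written. Having a second common neighbor $y'\ne x_{i-1}$ of $x_i$ and $x_{i-2}$ just gives a square $x_i\,x_{i-1}\,x_{i-2}\,y'$; there is no evident way to manufacture a cube or a $K_{2,3}$ from this and the grid picture of $I[u,t]$. The paper's argument is simply that $Q$ is locally convex, which immediately rules out such a $y'$. The reason $Q$ is convex is not spelled out in the paper, but it is short: $e$ is adjacent to $e'\in F_{P'}(t')$ and lies in $F_{P'}(t)$, hence $e\in T_{P'}(t,t')$; also $t\in T_{P'}(t,t')$. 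By Lemma~\ref{lem:gtree}\,(i) this boundary is a convex tree, so $I[t,e]$ is the unique $t$--$e$ path in it and in particular convex. That is all you need for (b). The paper also uses Lemma~\ref{lem:ppintersect}\,(i) to conclude $x_i\notin V(T)$, which you should cite rather than infer from the grid embedding.

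Second, your last paragraph gets the logical role of maximality backwards. You propose to handle the ``generic'' case by a convexity/coordinate argument and invoke maximality only exceptionally. In fact the convexity of $P$ alone does \emph{not} pin $Q$ to $P$: the grid column through $t$ in the embedding of $I[u,t]$ need not coincide with $Q$, and $e$ need not lie on that column a priori. Maximality of $T$ (via Lemma~\ref{lem:maximal}) is the whole argument, not a boundary case. Drop the coordinate digression and run the maximality contradiction directly; with the convexity of $Q$ supplying condition (b), the proof is then complete and matches the paper's.
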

\begin{proof}
Assume $e$ is not on $P$. Let $x_i$ be the vertex in $V(Q)\setminus V(B_L)$ with smallest index. From (i) in Lemma~\ref{lem:ppintersect}, $x_i\not\in V(T)$. We prove that $T\cup \{x_i\}$ still has gated branches by using Lemma~\ref{lem:maximal}. First, we have $d(r,x_{i-1})=d(r,t)+i-1=d(r,x_i)-1$. Because $x_1$ is on $B_L$, we have $i\geq 2$. From definition of $x_i$, we have that $x_{i-1}$ and $x_{i-2}$ is on $B_L$. There is no common neighbor of $x_i$ and $x_{i-2}$ other than $x_{i-1}$ because otherwise $Q$ is not locally convex. Hence the conditions in Lemma~\ref{lem:maximal} are satisfied and the lemma is proved. 
\end{proof}

Now, consider the algorithm that finds $e$. Let $F_t$ (resp. $F_{t'}$) be the set of the vertices with an imprint $t$ (resp. $t'$) in $T$. Let $T_{P',t}$ be the tree induced by the set of vertices in $T$ with gate $t$ in $P'$ and has a neighbor in $F_{t'}$.
Assume that $x_1$ is the neighbor of $t$ in $B_L$. Then, $e$ is the nearest ancestor of $w$ in $T_{P',t}$ in $T$. Thus, from Lemma~\ref{lem:nearestds}, $e$ can be calculated in $O(\log n)$ time with appropriate preprocessing. Since $|V(T_{P',t})|\leq |F_{t'}|$, the total size of $V(T_{P',t})$ is $O(n)$. Therefore we have following.

\begin{lemma}\label{lem:identify1}
There is an $O(\log n)$-time algorithm to identify $e$ in case $u$ has one imprint in $I[u,v]$ and $x_1$ is the neighbor of $t$ in $B_L$. The preprocessing requires $O(n)$ space and $O(n\log n)$ time.
\end{lemma}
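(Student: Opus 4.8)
The plan is to reduce the identification of $e$ to a single nearest-ancestor query on a small tree, then invoke Lemma~\ref{lem:nearestds}. First I would recall the situation: by Lemma~\ref{lem:oneimp1}, since $x_1$ is the neighbor of $t$ in $B_L$, the entrance $e$ lies on $P$; hence $e$ is an ancestor of $w$ on $T$ (both $w$ and $e$ lie on the same root-leaf path $P$, and from the coordinate analysis of Section~\ref{sec:path} the base $B_L$ runs from $t$ towards $w$, so $e$ sits between $t$ and $w$ on $P$). So $e$ is necessarily the nearest ancestor of $w$ that belongs to a certain distinguished vertex set; the task is to identify that vertex set intrinsically so that it can be precomputed. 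I claim the right set is $V(T_{P',t})$, the vertices of $T$ whose gate in $P'$ is $t$ and which have a neighbor in $F_{t'}$: indeed $e$, being the neighbor of the top $e'$ of $L'$ inside $I[u,t]$, has a neighbor in the fiber of $t'$, and among the candidates on $P$ this property pins it down, exactly as in the characterization of the entrance in Section~\ref{sec:path} (where the entrance was the gate of $u$ in the relevant boundary). So the key step is: $e$ = nearest ancestor of $w$ in $T_{P',t}$, viewing $T_{P',t}$ as a connected subgraph of $T$.

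Next I would establish the preprocessing. By Lemma~\ref{lem:gtree}(i), $T_{P',t}$ is (a subtree of) the convex tree $T_{X}(t,t')$ for the appropriate convex set, so it is connected, and its size is at most $|F_{t'}|$ since each of its vertices has a distinct neighbor in $F_{t'}$ (the unique-neighbor property from the isomorphism $T_X(t,t')\cong T_X(t',t)$ in Section~\ref{sec:prelim}). Summing over all pairs $(t,t')$ — equivalently over all edges of $T$ oriented so that the total boundary in question is charged to the smaller side, exactly the bookkeeping already used for successors earlier in this appendix — gives total size $O(n)$. On each such $T_{P',t}$ I build the data structure of Lemma~\ref{lem:nearestds}, using the already-computed Euler-tour of $T$; that costs $O(|V(T_{P',t})|\log|V(T_{P',t})|)$ time and $O(|V(T_{P',t})|)$ space per pair, hence $O(n\log n)$ time and $O(n)$ space in total. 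A query then asks for the nearest ancestor of $w$ in the relevant $T_{P',t}$, answered in $O(\log n)$ time by Lemma~\ref{lem:nearestds}. Locating which pair $(t,t')$ and which subtree to query is immediate once $t$ and $t'$ are known, and $t$ is the lowest common ancestor of $w$ and $v$, computable in $O(1)$ time, while $t'$ is a specified child of $t$.

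The main obstacle is the correctness of the claim $e = $ nearest ancestor of $w$ in $T_{P',t}$, i.e. verifying that no vertex strictly between $t$ and $e$ on $P$ lies in $T_{P',t}$ and that $e$ itself does. That $e$ has a neighbor in $F_{t'}$ follows from $e'$ (the top of $L'$) being that neighbor — here one must check $e'\in F_{t'}$, which is where Lemma~\ref{lem:oneimp}(ii) is used, guaranteeing the gate of $e'$ in $P'$ is an imprint and, combined with the positional setup of Figure~(d), equals $t'$. For the "nearest" part: if some vertex $y$ on $P$ strictly between $t$ and $e$ had a neighbor in $F_{t'}$, then $y$'s neighbor $y'\in F_{t'}$ together with $e'$ would force, via Lemma~\ref{lem:quad} applied from $u$ (or by the grid structure of $I[u,v]$ from Lemma~\ref{lem:subgrid}), a configuration contradicting isometricity or the fact that $e$ is the entrance — this is the one genuinely geometric verification and mirrors the convexity arguments already carried out for the analogous statement in Section~\ref{sec:path}. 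Once this is in hand, the complexity bounds are routine from Lemma~\ref{lem:nearestds} and the $O(n)$ total-size accounting.
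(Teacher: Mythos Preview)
Your approach is exactly the paper's: Lemma~\ref{lem:oneimp1} puts $e$ on $P$, then $e$ is identified as the nearest ancestor of $w$ in $T_{P',t}$ and computed by Lemma~\ref{lem:nearestds}, with the $O(n)$ total-space bound coming from $|V(T_{P',t})|\le |F_{t'}|$. The paper actually asserts the key equality ``$e$ is the nearest ancestor of $w$ in $T_{P',t}$'' without further argument, so your added justification goes beyond what the paper writes.

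However, your verification of ``nearest'' is in the wrong direction. You propose to check that no vertex strictly between $t$ and $e$ on $P$ lies in $T_{P',t}$; but $T_{P',t}$ is a connected (indeed convex) subtree of $T$ containing $t$, so \emph{every} vertex between $t$ and $e$ on $P$ lies in $T_{P',t}$. What must be excluded is a vertex $y$ strictly between $e$ and $w$ (that is, a deeper ancestor of $w$) lying in $T_{P',t}$. The clean way to see this: by the characterization at the end of Section~\ref{sec:path}, $e$ is the gate of $u$ in $T_{P',t}$; since $w$ is the gate of $u$ in $P$, any $y\in T_{P',t}\cap P$ satisfies $d(u,y)=d(u,w)+d(w,y)\ge d(u,e)=d(u,w)+d(w,e)$, hence $d(w,y)\ge d(w,e)$, so $e$ is the point of $T_{P',t}\cap P$ closest to $w$, i.e.\ the nearest ancestor of $w$ in $T_{P',t}$. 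Once the direction is fixed, the rest of your argument (membership of $e$ via $e'\in F_{t'}$ using Lemma~\ref{lem:oneimp}(ii), and the complexity bookkeeping) is correct.
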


Let us consider another case, which corresponds to the case of figure~(e). The next lemma holds.

\begin{lemma}\label{lem:oneimp2}
Assume $x_1$ is not on $B_L$. Then, $V(B_L)\cup V(B_{L'})$ is convex.
\end{lemma}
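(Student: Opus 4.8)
The plan is to apply the criterion ``connected and locally convex $\Rightarrow$ convex'' (condition~(c) of the gated/convex equivalence). Let $w'$ be the parent of $w$ in $T$. If $d(w,t)\le 1$ then $V(B_L)\cup V(B_{L'})$ is $\{t\}\cup V(B_{L'})$ (or just $V(B_{L'})$), a subpath of the convex path $P'$, and we are done; so assume $d(w,t)\ge 2$ and let $\hat t$ be the neighbour of $t$ in $B_L$ and $t'$ the neighbour of $t$ in $B_{L'}$. From the grid embedding of Section~\ref{sec:path}, $B_L=I[w',t]$ is the subpath of the convex path $P$ from $w'$ to $t$ and $\{t\}\cup V(B_{L'})=I[t,v]$ is the subpath of $P'$ from $t$ to $v$; both are convex, so $V(B_L)\cup V(B_{L'})$ is the path gluing them at the edge $tt'$ and is connected. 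Also $t$ lies on the $w$--$v$ geodesic of the isometric tree $T$ (Lemma~\ref{lem:tiso}), whence $t\in I[w,v]\subseteq I[u,v]$ (as $w\in I[u,v]$). For local convexity, a distance-$2$ pair inside $B_L$, or inside $\{t\}\cup V(B_{L'})$, is handled by convexity of $P$ or $P'$. For a straddling distance-$2$ pair $a\in B_L\setminus\{t\}$, $b\in V(B_{L'})$, the relations $a\in I[u,t]$, $b\in I[t,v]$ and $t\in I[u,v]$ force $t\in I[a,b]$, so $d(a,t)=d(t,b)=1$, i.e.\ $a=\hat t$ and $b=t'$; and $I[\hat t,t']$ is contained in $V(B_L)\cup V(B_{L'})$ iff $\hat t$ and $t'$ have no common neighbour other than $t$ (any common neighbour $z\ne t$, being adjacent to the two children $\hat t,t'$ of $t$ on distinct branches, is not in $V(B_L)\cup V(B_{L'})$, since $T$ is isometric). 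Hence the lemma reduces to the Claim: $\hat t$ and $t'$ have no common neighbour besides $t$.

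Suppose $z\ne t$ is such a common neighbour. Then $S:=\{t,\hat t,z,t'\}$ is an induced $4$-cycle ($d(\hat t,t')=2$ by bipartiteness), so by $K_{2,3}$-freeness $N(t)\cap N(z)=\{\hat t,t'\}$. First I locate $z$. Since $\hat t\in V(L)\subseteq I[u,t]$ and $\hat t\sim t$, $d(u,\hat t)=d(u,t)-1$; and since $z\sim t'$ with $t'\in V(L')$ disjoint from $I[u,t]$ we get $d(u,t')=d(u,t)+1$; combining, $d(u,z)=d(u,t)$. Thus $z\notin I[u,t]$, and a short triangle-inequality argument (using $t\in I[u,v]$ and $t'\in I[t,v]$, so $d(z,v)=d(t,v)$) gives $z\in I[u,v]$, hence $z\in V(L')$. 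Moreover $z\notin V(T)$ (else $T$ would contain the $4$-cycle $S$), so $\hat t$ and $t'$ are imprints of $z$ in $T$, and by Lemma~\ref{lem:equaldist}(i) they are its only ones; by Lemma~\ref{lem:oneimp}(ii) the gate of $z$ in $P'$, being an imprint, is $t'$, so $z$ lies in the first column (the $t'$--$e'$ path) of the staircases $L'$. It remains to contradict the existence of $z$ using the hypothesis $x_1\notin B_L$. Here I would follow the strategy of the proof of Lemma~\ref{lem:oneimp1}: $x_1\notin B_L$ says the shortest $t$--$e$ path $Q$ leaves $P$ at its first step, and using this, the maximality of $T$ (Lemma~\ref{lem:maximal}) and the grid structure of $I[u,v]$ (Lemma~\ref{lem:subgrid}) I would either add a suitable auxiliary vertex to $T$, or exhibit a shortest $t$--$e$ path beginning with the step $t\to\hat t$ (contradicting $x_1\notin B_L$ through Lemma~\ref{lem:oneimp1}), or derive that $z$, which lies in the first column of $L'$ and is adjacent to $\hat t\in V(L)$, must coincide with the top $e'$ of $L'$, contradicting the disjoint decomposition $I[u,v]=I[u,w]\sqcup V(L)\sqcup V(L')$.

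The step I expect to be the main obstacle is exactly the last one in the second paragraph: all the metric data assembled (the square $S$, the distances from $u$ and $v$) are mutually consistent, so the contradiction must come from the maximality of $T$ and/or the staircase structure of $L$ and $L'$ — an adaptation of the argument of Lemma~\ref{lem:oneimp1} from an interior vertex of $B_L$ to the junction vertex $t$, and identifying which auxiliary vertex to add (or which geodesic to produce) is the delicate part. A secondary technical point is the clean reduction in the first paragraph of local convexity to the single pair $(\hat t,t')$.
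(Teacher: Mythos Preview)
Your reduction in the first paragraph to the Claim that $\hat t$ and $t'$ have no common neighbour besides $t$ is correct and is exactly what the paper does (the paper states this reduction in one sentence; you spell it out carefully). In the second paragraph you also make correct progress: you show that any such common neighbour $z$ lies in $I[u,v]\setminus I[u,t]=V(L')$, that $z\notin V(T)$, and that the gate of $z$ in $P'$ is $t'$; hence $z$ lies on the first column $Q'=(t'=x'_0,\dots,x'_l=e')$ of $L'$, and since $z\sim t'=x'_0$ you get $z=x'_1$.

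The genuine gap is only the closing step, and it is much shorter than any of the three routes you sketch. Neither the maximality of $T$ nor an adaptation of Lemma~\ref{lem:oneimp1} is needed. The observation the paper uses (compressed into ``from the definition of the staircases'') is that the paths $Q$ and $Q'$ are \emph{parallel}: $Q$ lies in the convex boundary tree $T_{P'}(t,t')$ and $Q'$ in $T_{P'}(t',t)$, and under the isomorphism between these two trees (each vertex of one has a unique neighbour in the other) the path $Q$ is sent to $Q'$, so $x_i\sim x'_i$ for every $i$ and $x_1$ is the \emph{unique} neighbour of $x'_1$ in $T_{P'}(t,t')$. Now you have already arranged that $\hat t\sim z=x'_1$, with $\hat t$ in the fiber of $t$ and $z$ in the fiber of $t'$; hence $\hat t\in T_{P'}(t,t')$, so $\hat t=x_1$. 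This directly contradicts the hypothesis $x_1\notin B_L$. (Equivalently, in the grid picture of Section~\ref{sec:path} with $t=(0,0)$ and $t'=(1,0)$, one has $x'_1=(1,1)$, whose only neighbour with nonpositive $x$-coordinate is $(0,1)=x_1$.)

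So the ``main obstacle'' you anticipate is a two-line consequence of the boundary isomorphism already set up in Section~\ref{sec:path}; your three proposed strategies are detours.
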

\begin{proof}
We prove that $V(B_L)\cup V(B_{L'})$ is locally convex. Let $t''$ be the neighbor of $t$ in $B_L$. Now, it is sufficient to prove that $t'$ and $t''$ have no common neighbor other than $t$. Assume there is such a vertex $z$. From the definition of the staircases, we have $z=x'_1$, which contradicts the fact that $x_1$ is not $t''$.
\end{proof}

From the results in Section~\ref{sec:path}, $V(L)\cup V(L')$ induces a staircases. Let $z$ be the endpoint of the path induced by the vertices of $L$ with an imprint $t$, other than $t$. Then, $e$ is the nearest ancestor of $z$ in the boundary of $F_t$ relative to $F_{t'}$. Thus, from Lemma~\ref{lem:nearestds}, $e$ can be calculated in $O(\log n)$ time with appropriate preprocessing.
Now, since the size of this boundary is bounded by $|F_{t'}|$, the total size of these boundaries is at most $O(n)$. Therefore we have the following.

\begin{lemma}\label{lem:identify2}
There is an $O(\log n)$-time algorithm to identify $e$ in case $u$ has one imprint in $I[u,v]$ and $x_1$ is not on $B_L$. The preprocessing requires $O(n)$ space and $O(n\log n)$ time.
\end{lemma}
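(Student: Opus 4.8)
The plan is to reduce this case to the one-end-on-a-convex-path analysis of Section~\ref{sec:path}. Under the hypothesis of Lemma~\ref{lem:oneimp2} the concatenation at $t$ of the two bases, $R := B_L\cup B_{L'}$, is a convex path; write $w'$ for its first vertex, that is, the start of $B_L$ (the parent of $w$). The first step is to show that $I[u,v]\setminus I[u,w']$ is a single staircase whose base is a subpath of $R$ containing $t$, $t'$ and $v$: one checks that the gate of $u$ in $R$ is $w'$ — once $R$ is known to be convex this follows from $w$ being the gate of $u$ in $P$ and $t$ the gate of $u$ in $P'$ — and then applies the decomposition of Section~\ref{sec:path} to $I[u,v]$ with the convex path $R$. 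The degenerate cases $w=t$ (so that $L$ is empty) or $B_L$ a single vertex are handled directly, finding $e$ as a gate in the boundary of $F(w)$ relative to $F(t')$ from an imprint of $u$, exactly as in Section~\ref{sec:path}.

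Inside this staircase I would locate the entrance $e$ of $L'$ as follows. Its vertices in the fiber $F_t$ are precisely the vertices of $L$ in $F_t$: by Lemma~\ref{lem:oneimp}(ii) no vertex of $L'$ has imprint $t$, and by Lemma~\ref{lem:oneimp}(iii) the vertices of $L$ in $F_t$ form a path from $t$ up to $z$. Since $t$ and $t'$ are consecutive on the base, the successor relation of a staircase (Section~\ref{sec:process}) gives that the vertices in $F_{t'}$ form the path from $\su_{t,t'}(z)$ to $t'$, whose top is $e'=\su_{t,t'}(z)$; unwinding the definition of $\su_{t,t'}$ through the boundary of $F_t$ relative to $F_{t'}$ identifies the entrance $e$ — the neighbor of $e'$ lying in $F_t$ — with the gate of $z$ in that boundary.

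It remains to compute this gate within the claimed bounds. By Lemma~\ref{lem:gtree} the boundary of $F_t$ relative to $F_{t'}$ is a convex subtree of the total boundary of $F_t$, which is a tree with gated branches (and isometric), so by Lemma~\ref{lem:treegate} the gate of $z$ in it equals the nearest ancestor — in the total boundary of $F_t$ rooted at $t$ — of an imprint of $z$. At query time $t$ and $t'$ are obtained from an LCA query on $T$, and $z$ from a staircase query along $B_L$ (Section~\ref{sec:process}, which also returns the successor we need), all in $O(\log n)$ time; a single nearest-ancestor query (Lemma~\ref{lem:nearestds}) then costs $O(\log n)$. For preprocessing, build the structure of Lemma~\ref{lem:nearestds} once for each edge $(t,t')$ of $T$ with $t'$ the child; since the boundary of $F_t$ relative to $F_{t'}$ has size at most $|F_{t'}|$ and each non-root vertex of $T$ is the child of exactly one such edge, these boundaries have total size $O(n)$, so all the structures together use $O(n\log n)$ time and $O(n)$ space.

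The step I expect to be the main obstacle is the middle one: proving that the entrance of $L'$ is exactly the gate of $z$ in the boundary of $F_t$ relative to $F_{t'}$. This requires three facts to interlock — that $B_L\cup B_{L'}$ is convex, so that the relevant part of $I[u,v]$ is a single staircase (Lemma~\ref{lem:oneimp2} together with Section~\ref{sec:path}); that $L'$ meets none of the fiber $F_t$ (Lemma~\ref{lem:oneimp}(ii)); and that one successor step of a staircase across the fiber boundary $F_t\to F_{t'}$ is exactly a gate computation in the convex tree forming the boundary of $F_t$ relative to $F_{t'}$ (Lemma~\ref{lem:gtree}(i)), the isometry of $T$ (Lemma~\ref{lem:tiso}) being what makes that gate agree with the nearest-ancestor query of Lemma~\ref{lem:treegate}.
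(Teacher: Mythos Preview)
Your approach is essentially the paper's own: invoke Lemma~\ref{lem:oneimp2} to see that $B_L\cup B_{L'}$ is convex, conclude via Section~\ref{sec:path} that $V(L)\cup V(L')$ forms a single staircase, read off the top $z$ of the stair over $t$, and recover $e$ as the gate of $z$ in the boundary of $F_t$ relative to $F_{t'}$ via a nearest-ancestor query (Lemma~\ref{lem:nearestds}); the space bound follows because those boundaries have total size $O(n)$. Your write-up is in fact more explicit than the paper's one-paragraph argument --- you spell out the role of Lemma~\ref{lem:treegate} and the successor relation where the paper simply asserts ``$e$ is the nearest ancestor of $z$''.

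One small over-claim to flag: you say $z$ is obtained ``from a staircase query along $B_L$ \ldots\ all in $O(\log n)$ time.'' The staircase query of Section~\ref{sec:process} runs in $O(\log^2 n)$ time in general (Algorithm~\ref{alg:stairs2}, via heavy--light decomposition), not $O(\log n)$. The paper finesses this by computing $(z,S)$ once via \textsc{StaircasesQuery} when evaluating $p(L)$ (see Algorithm~\ref{alg:stairsall}) and then treating $z$ as given when invoking Lemma~\ref{lem:identify2}; the $O(\log n)$ in the lemma statement refers only to the nearest-ancestor step. This does not affect the overall $O(\log^2 n)$ query bound, but your sentence as written overstates what the staircase query delivers.
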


Whether $x_1$ is on $B_L$ or not can be determined just by checking the neighbor of $t$ in $B_L$ has a neighbor with an imprint $t'$. If $e=t$, we can just apply the algorithm of the latter case to find $e$, before knowing $e=t$ holds. Therefore $e$ can be always calculated in $O(\log n)$ time with appropriate data structures.

\subsection{Entrance Identification: Double Imprints Case}\label{app:b}

Here we give the ways to identify the entrance of the staircases $L'$ for the case that $I[u,v]$ contains two imprints of $u$, which is the remaining problem in Section~\ref{sec:double}. We use the same settings and notations as Section~\ref{sec:double}; $w^1$ and $w^2$ are the imprints of $u$ in $T$, $w$ is the lowest common ancestor of $w^1$ and $w^2$, $t$ is the lowest common ancestor of $w$ and $v$, $P$ (resp. $P'$) is any root-leaf path that contains $w$ (resp. $v$), and $I[u,v]=I[u,t]\cup V(L')=I[u,w]\cup V(L)\cup V(L')$. First, we prove that $L$ is actually a line. Let $P_1$ (resp. $P_2$) be the root-leaf path of $T$ that contains $w^1$ (resp. $w^2$). We assume $t\neq w$ for a moment because otherwise $L$ is empty.

\begin{lemma}\label{lem:pathsqunion}
$I[u,t]$ is disjoint union of $I[u,w]$ and $B_{L}$.
\end{lemma}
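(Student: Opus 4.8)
The plan is to show that the interval $I[u,t]$ decomposes as the disjoint union of $I[u,w]$ (where $w$ is the lowest common ancestor of the two imprints $w^1,w^2$) and the base $B_L$ of the staircases $L$, and then separately that $B_L$ is a single path, i.e.\ $L$ is a line. First I would use the established facts from Section~\ref{sec:double}: we already know $w^1,w^2\in I[u,w]$, that $w\in I[w^1,w^2]\subseteq I[u,v]$, and that $w\in I[u,t]$. I would take the isometric embedding of $I[u,t]$ into a two-dimensional grid provided by Lemma~\ref{lem:subgrid}, and set up coordinates analogous to Section~\ref{sec:path}, placing $w$ at the origin. The two imprints $w^1,w^2$ are neighbors of $w$ lying on the $w$-to-$u$ side (both on $I[u,w]$) and both ancestors of $w$ on $T$ — wait, rather $w$ is their common ancestor, so $w^1,w^2$ are the two children of $w$ heading toward $w^1,w^2$; in grid coordinates they occupy the two grid-neighbors of $w$ on the $u$-side, forcing $u$ to sit ``diagonally'' from $w$.

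The key geometric claim is that, in this grid picture, the subpath of $P$ from $r$ down through $w$ to $t$ runs along one axis direction out of $w$, and that the only vertices of $I[u,t]$ not already in $I[u,w]$ are those forming a path hugging that axis — which is exactly the definition of a staircases whose every fiber-slice is a single vertex (a line). I would argue: any $z\in I[u,t]\setminus I[u,w]$ has, by convexity and the coordinate setup, a fixed sign pattern in its coordinates that is incompatible with lying on any $u$-$w$ geodesic; conversely, using part~(i) of Lemma~\ref{lem:twoimp} (which says $I[u,v]$ meets $T$ only along the $w^1$-$v$ and $w^2$-$v$ paths, hence along $w$-$t$ inside $I[u,t]$) together with isometricity of $T$, the vertices of $B_L$ are forced to lie in $I[u,t]$ and outside $I[u,w]$. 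Disjointness of $I[u,w]$ and $B_L$ follows because $B_L\setminus\{t\}$ consists of vertices strictly on the far side of $w$ along the $P$-direction, none of which can lie on a geodesic from $u$ to $w$; that $t\in B_L$ and $t\notin I[u,w]$ (when $t\ne w$) is immediate since $t$ is a strict descendant of $w$ and $w\in I[u,t]$. Finally, covering: every vertex of $I[u,t]$ is either in $I[u,w]$ or has positive coordinate in the $P$-direction, and the latter set is precisely $B_L$; the fact that this set is a single path rather than a fat staircases uses that $w^1,w^2$ occupy \emph{both} transverse grid-neighbors of $w$, leaving no room for a second column, which is where the double-imprint hypothesis is essential (and contrasts with the single-imprint case).

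The main obstacle I anticipate is the careful coordinate bookkeeping needed to pin down where $w^1$, $w^2$, $t$, and $u$ sit relative to each other in the grid embedding of $I[u,t]$ — in particular, verifying that the two imprints really do occupy the two transverse neighbors of $w$ (so that the ``staircases'' collapses to a line), and that the $w$-to-$t$ segment of $P$ is genuinely axis-aligned in a compatible way. This requires combining the isometricity of $T$ (Lemma~\ref{lem:tiso}), Lemma~\ref{lem:quad}, and the no-$K_{2,3}$/cube-free constraints to rule out the degenerate configurations; the rest of the argument is then a routine convexity-plus-grid-geometry check, essentially mirroring Section~\ref{sec:path} but with the stronger conclusion that each gate-fiber of $B_L$ in $P$ is a singleton.
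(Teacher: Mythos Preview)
Your plan is essentially correct and lands on the same contradiction as the paper (convexity of a root-leaf path $P_i$), but it is considerably more roundabout. Two small slips: the imprints $w^1,w^2$ are in general \emph{not} neighbors of $w$; the vertices you actually need are the two children $c_1,c_2$ of $w$ in $T$ on the paths toward $w^1,w^2$ (you half-correct yourself mid-sentence, but the phrasing stays garbled). Also, $t$ is an \emph{ancestor} of $w$, not a descendant. Neither slip is fatal once repaired.

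The paper bypasses all of the coordinate bookkeeping you flag as the main obstacle. It observes that ``$L$ is a line'' is equivalent to ``$w'$ (the neighbor of $w$ on $P$ toward $t$) has no neighbor $z$ in $I[u,v]\setminus V(T)$'', and dispatches this in three lines: if such a $z$ existed, Lemma~\ref{lem:quad} applied to $u,w',w,z$ yields a common neighbor $y$ of $z$ and $w$ with $d(u,y)=d(u,w)-1$; but the only two neighbors of $w$ in the grid $I[u,v]$ at that distance are $c_1,c_2\in V(T)$, so $y\in\{c_1,c_2\}$; then $z\in I[y,w']$ contradicts convexity of the root-leaf path $P_i\supseteq\{y,w,w'\}$. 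This is exactly your ``no room for a second column'' step, isolated and proved directly without ever naming coordinates or tracking where $u$, $t$, $w^1$, $w^2$ sit in the embedding. What you anticipate as the hard part simply does not arise in the paper's argument.
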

\begin{proof}
Let $w'$ be the neighbor of $w$ in $P$ between $w$ and $t$. By definition of the staircases, it is sufficient to prove that $w'$ has no neighbor in $I[u,v]\setminus V(T)$. Assume the contrary and let $z$ be such a vertex. Then, there should be a common neighbor of $z$ and $w$. Since $w$ has already three neighbor in $I[u,v]\cap V(T)$, such common neighbor should be on $T$. Therefore we have that one of $P_1$ or $P_2$ is not convex and it is a contradiction.
\end{proof}

Now we identify the entrance $e$ of the staircases $L'$. We assume $t\neq v$ because otherwise $L'$ is empty. Let $t'$ be the neighbor of $t$ in $P'$ and $e'$ be the top of $L'$. Let $Q=(t=x_0,\dots, x_l=e)$ (resp. $Q'=(t'=x'_0,\dots, x'_l=e')$) be the $t-e$ (resp. $t'-e'$) shortest path. We prove the following lemma, which corresponds to the case of figure~(f).

\begin{lemma}\label{lem:twoimp2}
$e$ is on $T$.
\end{lemma}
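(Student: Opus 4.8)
\textbf{Proof proposal for Lemma~\ref{lem:twoimp2}.}

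The plan is to argue by contradiction, mimicking the strategy behind Lemma~\ref{lem:oneimp1}: if $e$ were not on $T$, then the maximality of $T$ (characterized by Lemma~\ref{lem:maximal}) would be violated, because we could attach to $T$ the first vertex of the path $Q=(t=x_0,\dots,x_l=e)$ that leaves $T$. So first I would let $x_i$ be the vertex of $V(Q)$ with smallest index that is not in $V(T)$; such an $i$ exists and is at least $1$ under the assumption $e\notin V(T)$, and by (i) of Lemma~\ref{lem:twoimp} we know $x_i\notin V(T)$ indeed (every vertex of $I[u,v]\cap V(T)$ lies on the $w^1$–$v$ or $w^2$–$v$ paths, which are all in $T$). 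The key point I want to establish is that $i\geq 2$, i.e. $x_1\in V(T)$; this is exactly the geometric content of Figure~(f), and I expect it to follow from the fact that $w$ has three neighbours in $I[u,v]\cap V(T)$ (namely its neighbours toward $w^1$, toward $w^2$, and toward $t$ — using $t\neq w$), combined with an argument like the one in Lemma~\ref{lem:pathsqunion}: if $x_1\notin V(T)$ then $x_1$ together with $t$'s neighbour on $B_L$ (which exists since $t\neq w$, by Lemma~\ref{lem:pathsqunion} the base $B_L$ is nonempty and reaches down to $t$) would create a square forcing a common neighbour, contradicting convexity of $P_1$ or $P_2$, or contradicting that $L$ is a line.

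Once $i\geq 2$ is in hand, I would verify the two conditions of Lemma~\ref{lem:maximal} for adding $x_i$ to $T$. Condition (a): $d(r,x_{i-1})=d(r,t)+(i-1)=d(r,x_i)-1$ since $Q$ is a shortest path rooted (in terms of distances from $u$ and hence from $r$, using isometricity of $T$ from Lemma~\ref{lem:tiso}) — $x_{i-1}\in V(T)$ is the required neighbour $y$. Condition (b): $x_i$ and the parent $x_{i-2}$ of $x_{i-1}$ have no common neighbour other than $x_{i-1}$ — this is because $Q$, being a shortest path, is locally convex at $x_{i-1}$, so a second common neighbour $z$ of $x_i$ and $x_{i-2}$ would force $z\in I[x_i,x_{i-2}]\subseteq I[t,e]$, contradicting local convexity; here I must also check that $x_{i-2}$ really is the parent of $x_{i-1}$ in $T$, which follows from $d(r,x_{i-2})=d(r,x_{i-1})-1$ and $x_{i-2}\sim x_{i-1}$, both already noted. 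These give $V(T)\cup\{x_i\}$ a tree-with-gated-branches structure, contradicting maximality of $T$; hence $e\in V(T)$.

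The main obstacle I anticipate is the step $i\geq 2$, i.e. showing $x_1\in V(T)$. In the single-imprint case (Lemma~\ref{lem:oneimp1}) the analogous fact was an explicit hypothesis (``$x_1$ is the neighbour of $t$ in $B_L$''), and the other sub-case (Figure~(e)) was handled separately; here there is supposed to be ``only one case to tract''. So I need to genuinely prove that the double-imprint geometry forces the base $B_{L'}$ to start by going along $T$. The argument should be: $t$ is the lowest common ancestor of $w$ and $v$, and $L$ is a line descending from $w$ to the parent $w'$ of $t$ (Lemma~\ref{lem:pathsqunion}), so the neighbour of $t$ in $B_L$ is $w'$, which lies on $T$ (it is on the $w^1$–$t$ portion... actually on the $r$–$w$ path); meanwhile $x'_1$, the start of $B_{L'}$, is $t'$ on $P'\subseteq T$; the vertex $x_1$ is the neighbour of $t$ in $B_L$ pushed one step — and the square relating $t,x_1,x'_1=t'$ and the $B_L$-neighbour of $t$ shows $x_1$ has a neighbour in $V(T)$ at distance $d(r,x_1)-1$, but I need $x_1$ itself in $T$. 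I would try to derive this by applying Lemma~\ref{lem:quad} to $r$ and an appropriate square and using that $t$ already has enough neighbours in $T$ to force $x_1\in V(T)$, exactly as in the proof of Lemma~\ref{lem:pathsqunion}; if that does not immediately close, I would instead directly show $I[u,v]$ contains a neighbour of $t$ in $V(T)\setminus\{w',t',(\text{the }w^1\text{- and }w^2\text{-neighbours})\}$ is impossible by the $K_{2,3}$-freeness / grid structure (Lemma~\ref{lem:subgrid}) of $I[u,t]$, pinning down $x_1$.
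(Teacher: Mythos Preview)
Your overall strategy matches the paper's: argue by contradiction, take the first vertex $x_i$ of $Q$ that leaves $T$, and use Lemma~\ref{lem:maximal} to contradict the maximality of $T$. The second half of your argument (verifying conditions (a) and (b) of Lemma~\ref{lem:maximal} once $i\geq 2$) is essentially the paper's, modulo one point you should note: the paper also records that $x_{i-1},x_{i-2}$ lie on the \emph{same} branch $P_1$ (or $P_2$) of $T$, which is what makes $x_{i-2}$ the parent of $x_{i-1}$ in $T$ and thus makes condition~(b) the relevant one.

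The genuine gap is exactly where you anticipated it: the proof that $x_1\in V(T)$. Your sketches do not close. You repeatedly lean on ``using $t\neq w$'' (three neighbours of $w$, the $B_L$-neighbour of $t$, etc.), but in the paper's argument the contradiction actually lives in the case $t=w$; the case $t\neq w$ is disposed of by a trivial distance count, not by a square/$K_{2,3}$ argument. Concretely, the paper argues as follows. By Lemma~\ref{lem:pathsqunion} one has $I[u,t]=I[u,w]\sqcup B_L$ with $B_L\subseteq V(T)$ (and $B_L=\emptyset$ if $t=w$). Hence if $x_1\notin V(T)$ then $x_1\notin B_L$, so $x_1\in I[u,w]$. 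But $x_1\sim t$ with $d(u,x_1)=d(u,t)-1=d(u,w)+d(w,t)-1$, while $x_1\in I[u,w]$ forces $d(u,x_1)\leq d(u,w)$; this gives $d(w,t)\leq 1$, i.e.\ $t=w$. Now $x_1$ is a neighbour of $w$ inside $I[u,w]$; by Lemma~\ref{lem:subgrid} the interval $I[u,w]$ embeds in a grid, so $w$ has at most two neighbours there, and these are already the first steps toward $w^1$ and $w^2$ along $P_1,P_2\subseteq V(T)$. There is no room for a further neighbour $x_1\notin V(T)$.

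So your square-based plan (``$x_1$ together with $t$'s neighbour on $B_L$ would create a square forcing a common neighbour'') is aimed at the wrong case and is unnecessary; and your fallback (``grid structure of $I[u,t]$'') is the right instinct but only after you have first reduced to $t=w$ and looked at $I[u,w]$, not $I[u,t]$.
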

\begin{proof}
First, we state that $x_1$ is on $T$. Indeed, if the contrary holds, $x_1$ should be in $I[u,w]$ because $x_1$ is not in either $L'$ or $L$, therefore $t=w$. However, from Lemma~\ref{lem:subgrid}, $w$ has at most two neighbors in $I[u,w]$, which already exist on $P_1$ and $P_2$, and there is no position remaining for $x_1$. Therefore $x_1$ is on $T$.

Now, assume $e$ is not on $T$. Let $x_i$ be the vertex in $V(Q)\setminus V(T)$ with smallest distance from $t$. Without loss of generality, we can assume that $x_{i-1}\in V(P_1)$. We prove that $T\cup \{x_i\}$ still has gated branches by using Lemma~\ref{lem:maximal}. First, we have $d(r,x_{i-1})=d(r,t)+i-1=d(r,x_i)-1$. From assumption, we have $i\geq 2$. By definition of $x_i$, we have $x_{i-1}$ and $x_{i-2}$ are on $P_1$. Now, there is no common neighbor of $x_i$ and $x_{i-2}$ other than $x_{i-1}$ because otherwise $Q$ is not convex. Hence the conditions in Lemma~\ref{lem:maximal} are satisfied and the lemma is proved.
\end{proof}

Now, consider the algorithm that find $e$. If we know whether $e$ is on $P_1$ or $P_2$, we can use the same algorithm as the Lemma~\ref{lem:identify1} to do it. To get the correct entrance, we just call the algorithm in Lemma~\ref{lem:identify1} on both $P_1$ and $P_2$ and return the one with larger distance from $t$, because these algorithms returns the nearest ancestor of $e$ on the corresponding path. Therefore we have the following.

\begin{lemma}\label{lem:identify3}
There is an $O(\log n)$-time algorithm to identify $e$ in case $u$ has two imprints in $I[u,v]$. The preprocessing requires $O(n)$ space and $O(n\log n)$ time.
\end{lemma}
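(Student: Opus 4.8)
The starting point is Lemma~\ref{lem:twoimp2}, which already tells us that the entrance $e$ of $L'$ lies on $T$. Combining this with part (i) of Lemma~\ref{lem:twoimp} — $I[u,v]\cap V(T)$ is contained in the union of the $w^1$–$v$ and $w^2$–$v$ paths of $T$ — I would first argue that $e$ lies on $P_1$ or on $P_2$. Indeed $e\in I[u,t]\subseteq I[u,v]$, so $e$ lies on the $w^i$–$v$ path for some $i$; since (as noted in Section~\ref{sec:double}) the lowest common ancestor of $w^i$ and $v$ is $t$, this path splits into a $w^i$–$t$ part and a $t$–$v$ part, and a vertex on the latter part other than $t$ cannot lie in $I[u,t]$ because every shortest $u$–$t$ path passes through $w^1$ or $w^2$ and then climbs $T$ to $t$, so it would be strictly farther from $u$ than $t$. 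Hence $e$ lies on the $w^i$–$t$ segment, which is a subpath of $P_i$.

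Next, I would observe that, once we commit to the path $P_i$ containing $e$, the situation is exactly the one handled in Section~\ref{app:a}: $e$ is the nearest ancestor on $T$ of $w^i$ inside the boundary tree $T_{P',t}$ (the vertices of $T$ with gate $t$ on $P'$ having a neighbour in $F_{t'}$). This is because $e$ is the gate of $u$ in $T_{P',t}$ and every shortest $u$–$T_{P',t}$ path leaves $u$ through $w^1$ or $w^2$; so the nearest-ancestor characterization underlying Lemma~\ref{lem:identify1} applies verbatim with the single imprint $w$ replaced by $w^i$.

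The device for avoiding the case split is then to run the Lemma~\ref{lem:identify1} subroutine (nearest ancestor in $T_{P',t}$, implemented via Lemma~\ref{lem:nearestds}) once on input $w^1$ and once on input $w^2$, obtaining candidates $c_1,c_2$, and to return whichever is farther from $t$ on $T$. The correctness claim I must verify is: if $e\in V(P_i)$ then $c_i=e$, while the other candidate $c_{3-i}$ is an ancestor of $e$ on $T$ — hence no farther from $t$ — with equality precisely when $e$ lies on the common $t$–$w$ segment. For the "ancestor" part: $c_{3-i}$ and $e$ both lie in the subtree $T_{P',t}$, $c_{3-i}$ is an ancestor of $w^{3-i}$ and $e$ an ancestor of $w^i$, so the two ancestor-chains agree exactly along the $t$–$w$ segment; and since $e$ is the closest vertex of $T_{P',t}$ to $u$ and all $u$–$T_{P',t}$ shortest paths go through $w^1$ or $w^2$, the candidate $c_{3-i}$ cannot descend strictly below $w$ into the $P_{3-i}$-only branch. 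This forces $c_{3-i}$ onto the $t$–$w$ segment, hence to be an ancestor of $e$.

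For the complexity and space bookkeeping: the preprocessing is just the Lemma~\ref{lem:nearestds} structure on each relevant $T_{P',t}$; since $|V(T_{P',t})|\le|F_{t'}|$ and $\sum_{t'}|F_{t'}|=O(n)$ (each vertex has at most two imprints in $T$), the total is $O(n)$ space and $O(n\log n)$ time, while a query costs two nearest-ancestor lookups plus a constant-time depth comparison on $T$, i.e.\ $O(\log n)$. I expect the main obstacle to be pinning down the correctness of the "return the farther candidate" rule — precisely, proving that the losing candidate is genuinely an ancestor of $e$ — since this needs the gate interpretation of $e$ together with the fact that $T_{P',t}$ is a subtree; everything else is routine and parallels Section~\ref{app:a}.
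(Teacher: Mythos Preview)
Your approach matches the paper's exactly: invoke Lemma~\ref{lem:twoimp2} to place $e$ on $T$, run the Lemma~\ref{lem:identify1} subroutine with each of $w^1,w^2$ as input, and return the candidate farther from $t$. The complexity accounting is also identical.

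There is one genuine soft spot in your correctness argument. You claim that the losing candidate $c_{3-i}$ must lie on the $t$--$w$ segment (hence be an ancestor of $e$), arguing that ``$c_{3-i}$ cannot descend strictly below $w$ into the $P_{3-i}$-only branch.'' This step is not justified, and it need not hold: nothing prevents $T_{P',t}$ from extending below $w$ on \emph{both} branches, so when $e$ sits strictly below $w$ on $P_i$, the nearest ancestor of $w^{3-i}$ in $T_{P',t}$ may also sit strictly below $w$ on $P_{3-i}$, in which case $c_{3-i}$ and $e$ are incomparable in $T$. What you actually need is only the inequality $d(c_{3-i},t)\le d(e,t)$, and this follows cleanly without the ancestor claim: since $e$ is the gate of $u$ in $T_{P',t}$ and $c_{3-i}\in T_{P',t}$, we have $d(u,e)\le d(u,c_{3-i})$; writing $d(u,e)=d(u,w^i)+d(w^i,e)$ and $d(u,c_{3-i})=d(u,w^{3-i})+d(w^{3-i},c_{3-i})$ (imprints on convex paths), and using $d(u,w^1)+d(w^1,w)=d(u,w^2)+d(w^2,w)$ from Section~\ref{sec:double}, a two-line computation gives $d(e,t)\ge d(c_{3-i},t)$. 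The paper's own justification (``these algorithms return the nearest ancestor of $e$ on the corresponding path'') is terse to the point of being loose in the same place; the distance inequality above is the clean fix.
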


\section{Construction}\label{sec:construction}

Here we prove that our data structure can be constructed in $O(n\log^3 n)$ time and $O(n\log^2 n)$ space. The whole precalculation algorithm is given in Algorithm~\ref{alg:preprocess}. Some of the parts are the same as the efficient construction of distance labeling scheme~\cite{chepoi2019distance}.

Let us see the details of Algorithm~\ref{alg:preprocess} one by one. Let $m$ be the median of $G$. For each $x\in \St(m)$, let $F(x)$ be the fiber of $x$ with respect to $\St(m)$, and let $T'_x$ be the maximal tree with gated branches that includes the total boundary of $F(x)$. For $x\in \St(m)$ and $w\in V(T'_x)$, let $F(x,w)$ be the set of vertices in $F(x)$ with an imprint $w$.

\begin{algorithm}
\caption{Construction of Data Structure}
\label{u_alg}
\begin{algorithmic}[1]\label{alg:preprocess}
\REQUIRE Cube-free median graph $G$
\STATE Calculate the median $m$ of $G$, using the algorithm in ~\cite{beneteau2019medians}\label{l1}
\STATE Calculate $d(m,v)$ for all $v\in V(G)$\label{l2}
\STATE Calculate $\St(m)$\label{l4}
\STATE Decompose $G$ into fibers with respect to $\St(m)$\label{l5}
\STATE Enumerate all squares in $G$ and store them appropriately\label{l3}
\FORALL{$x\in \St(m)$}
    \STATE Recursively construct a data structure on fiber $F(x)$\label{l6}
    \STATE Calculate maximal tree with gated branches $T'_x$ that includes the total boundary of $F(x)$\label{l7}
    \STATE Build a data structure that supports lowest common ancestor queries on $T'_x$\label{l8}
    \STATE Calculate the Euler-tour of $T'_x$\label{l8.5}
    \FORALL{$u\in V(F(x))$}
        \STATE Calculate the imprints of $u$ in $T'_x$\label{l9}
        \STATE For each imprint $w$ of $u$, calculate $p(I[u,w])$\label{l10}
        \STATE If $u$ has two imprints $w^1$ and $w^2$, let $w$ be the lowest common ancestor of $w^1$ and $w^2$ in $T'_x$ and calculate $p(I[u,w])$\label{l11}
    \ENDFOR
    \FORALL{$y\in \St(m)$ with $x\sim y$}
        \STATE Build a data structure that supports the queries to calculate the gate of $u$ in the boundary of $F(x)$ relative to $F(y)$ for all $u\in F(x)$\label{l11.2}
    \ENDFOR
    \FORALL{$w\in V(T'_x)$}
        \STATE Calculate total boundary of $F(x,w)$ and its Euler-tour\label{l11.5}
        \FORALL{$w'\in V(T'_x)$ with $w\sim w'$}
            \STATE Build a data structure that supports the queries to calculate the gate of $u$ in the boundary of $F(x,w)$ relative to $F(x,w')$ for all $u\in F(x,w)$\label{l11.8}
        \ENDFOR
        \STATE Do preprocessing in Lemma~\ref{lem:identify1}, Lemma~\ref{lem:identify2} and Lemma~\ref{lem:identify3}\label{l12}
    \ENDFOR
    \STATE Build a data structure presented in Section~\ref{sec:process}\label{l13}
\ENDFOR
\end{algorithmic}
\end{algorithm}

Line~\ref{l1} can be processed in linear time~\cite{beneteau2019medians}. Line~\ref{l2},~\ref{l4},~\ref{l5} can be implemented in linear time~\cite{chepoi2019distance}.

Let $xy_1zy_2$ be a square in $G$. Without loss of generality, we can assume $x$ has the largest distance from $m$. Then, $d(m,y_1)=d(m,y_2)=d(m,x)-1$ holds because $G$ is bipartite. Let us fix $x$. $x$ has no neighbor with distance $d(m,x)-1$ from $m$ other than $y_1$ and $y_2$, because otherwise we can obtain a cube by repeatedly applying Lemma~\ref{lem:quad}. Therefore we find $y_1$ and $y_2$ in constant time if we know $x$. By applying Lemma~\ref{lem:quad} on $m,x,y_1,y_2$, we obtain a vertex $z'$ with $y_1\sim z',y_2\sim z'$ and $d(m,z')=d(m,x)-2$. We have $z=z'$ because otherwise $G$ contains $K_{2,3}$. Since $y_1$ and $y_2$ has at most two neighbors with distance $d(m,x)-2$ from $m$, we can find $z$ in constant time if we know $y_1$ and $y_2$. Therefore, if we know $x$, we obtain the unique square $xy_1zy_2$ in constant time and we can enumerate all squares in linear time (Line~\ref{l3}). This also proves that there are at most $O(n)$ squares in $G$. We store the information of all squares in ascending order of the pairs of indices of vertices, for each diagonals in each square. It can be implemented in $O(n\log n)$ time (Line~\ref{l3}).


From Lemma~\ref{lem:half}, Line~\ref{l6} costs the time and space complexity by factor of $O(\log n)$. For $x\in \St(m)$, the total boundary can be calculated in linear time by definition. To calculate the maximal tree with gated branches, we first set the tree $T'_x$ as the total boundary of $F(x)$ and look through the vertices $z\in F(x)\setminus V(T'_x)$ in ascending order of distance from $x$. We check whether $z$ satisfies the conditions in Lemma~\ref{lem:maximal} one by one; if they are satisfied, we add $z$ to current $T'_x$. When we check $z$, we have only to check whether there exists a neighbor $z'$ of $z$ with $d(x,z')=d(x,z)-1$ and if exists, whether there exists a square that contains $z,z'$ and the parent of $z'$ (if the parent exists). We can check whether the square exists just by checking the unique square such that $z$ is the furthest vertex from $m$, which we already calculated in Line~\ref{l3}. Therefore Line~\ref{l7} can be processed in linear time. It is known that Line~\ref{l8} can be processed in linear time~\cite{bender2005lowest}. Line~\ref{l8.5} can be implemented in linear time.
Line~\ref{l9} can be processed in linear time by algorithm in~\cite{chepoi2019distance}.

Line~\ref{l10} can be implemented just by using the data structure constructed in Line~\ref{l6}. Line~\ref{l11} can also be similarly implemented along with a lowest common ancestor query. These lines take $O(n\log^2 n)$ time in total. 
Line~\ref{l11.2} takes $O(n\log n)$ time in total by Lemma~\ref{lem:nearestds}. Line~\ref{l11.5} takes linear time in total. Line~\ref{l11.8} also takes $O(n\log n)$ time in total by Lemma~\ref{lem:nearestds}. Line~\ref{l12} can be implemented in $O(n\log n)$ time, as we described in Section~\ref{sec:gatedbranch}. Finally, as we described in Appendix~\ref{sec:finding}, we can implement Line~\ref{l13} in $O(n\log n)$ time. Therefore, the total time complexity is $O(n\log^3 n)$.

In the algorithm of construction and query processing, we need to calculate the value $d(u,v)$ for $u,v\in V(G)$ efficiently. It can be done by using distance labeling scheme~\cite{chepoi2019distance}. We just need the values, for all $u\in V(G)$, of the distance between $u$ and each of the imprints, and it can be calculated when processing Line~\ref{l9}.

Now, we can construct the whole data structure we need in $O(n\log^3 n)$ time.

\section{Proof of Lemma~\ref{lem:convexstar}}\label{app:proof}

Here we give a proof of the following folklore result.

\begin{lemma}[{\rm Reappearing of Lemma~\ref{lem:convexstar}}]
Let $X$ be a convex vertex set and let $Y$ be a convex subset of $X$. For $x\in X$, let $F(x)$ be the fiber of $x$ with respect to $X$. Then, $\bigcup_{y\in Y}F(y)$ is convex.
\end{lemma}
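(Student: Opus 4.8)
I want to show $U := \bigcup_{y\in Y} F(y)$ is convex, and by the equivalence lemma it suffices to prove $U$ is connected and locally convex, i.e. $I[a,b]\subseteq U$ for all $a,b\in U$ with $d(a,b)=2$. Let $z_a\in Y$ and $z_b\in Y$ be the gates of $a$ and $b$ in $X$; since $a\in F(z_a)$ and $b\in F(z_b)$, I have $z_a\in I[a,w]$ and $z_b\in I[b,w]$ for every $w\in X$. Take any $c$ with $a\sim c\sim b$; I must show the gate of $c$ in $X$ lies in $Y$.

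First I would handle connectivity. Given $a\in F(z_a)$ with $z_a\in Y$, the interval $I[a,z_a]$ lies inside $F(z_a)$ (fibers are convex — this follows since $X$ is convex and gates are preserved), so $a$ is connected within $U$ to $z_a\in Y\subseteq U$; and $Y$ itself is connected since it is convex. Hence $U$ is connected.

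For local convexity, the key step is: let $c\in I[a,b]$ with $d(a,b)=2$, let $g$ be the gate of $c$ in $X$, and show $g\in Y$. The natural tool is the median operation. Consider $m_0 := $ the median of $a$, $b$, and an arbitrary vertex $w\in X$; I claim $m_0$ is forced to relate $z_a$, $z_b$, and $g$ in a way that pins $g$ down. More concretely, I expect to argue that $g$ is the gate of $c$ in $X$, and since $z_a$ is the gate of $a$ and $z_b$ the gate of $b$, applying the gate property along the edges $a\sim c$ and $c\sim b$ gives $z_a\in I[a,g]\cup\{\dots\}$-type relations; the clean way is to show $g\in I[z_a,z_b]$. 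Indeed $z_a\in I[a,g]$ and $z_b\in I[b,g]$ (gates), and combined with $c\in I[a,b]$, $d(a,b)=2$, a short distance computation should yield $d(z_a,z_b)=d(z_a,g)+d(g,z_b)$, i.e. $g\in I[z_a,z_b]$. Since $z_a,z_b\in Y$ and $Y$ is convex, $g\in Y$, so $c\in F(g)\subseteq U$, as desired.

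The main obstacle I anticipate is the distance bookkeeping establishing $g\in I[z_a,z_b]$: one must carefully use that $d(a,b)=2$ (so $a,b$ have a common neighbor structure, and by bipartiteness $z_a,z_b$ are within distance $2$ of each other), rule out the degenerate case $g\notin I[z_a,z_b]$ using that $X$ is convex hence gated (so gates are unique and the triangle-inequality-with-equality manipulations are tight), and possibly split into the subcases $z_a=z_b$ (then $g=z_a\in Y$ immediately, since then $a,b\in F(z_a)$ and convexity of the fiber gives $c\in F(z_a)$) versus $z_a\neq z_b$. I would also keep in mind the alternative, perhaps slicker route: use Lemma~\ref{lem:convexstar}'s intended application only needs $Y$ convex, so one may directly verify that $U$ is an intersection-closed / gated set by exhibiting a gate for every $v\notin U$ — namely the gate of (the gate of $v$ in $X$) in $Y$ — but checking that composite is genuinely a gate for $v$ in $U$ again reduces to the same median-based distance argument, so I would present the local-convexity proof as the cleanest path.
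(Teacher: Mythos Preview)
Your approach is the paper's: reduce to connectedness plus local convexity, and for $a\sim c\sim b$ with $a,b\in U$ show that the gate $g$ of $c$ in $X$ lands in $I[z_a,z_b]\subseteq Y$. The paper argues by contradiction but uses exactly the same mechanism, invoking the preliminaries fact that two fibers $F_X(x),F_X(y)$ are neighboring iff $x\sim y$, so that the three gates satisfy $z_a\sim g\sim z_b$ (or coincide), and then bipartiteness plus convexity of $Y$ finishes.

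One correction worth making explicit: your parenthetical ``by bipartiteness $z_a,z_b$ are within distance~$2$ of each other'' is misattributed. That bound (and likewise $d(z_a,g)\le 1$, $d(g,z_b)\le 1$) comes from the non-expansiveness of the gate map, i.e.\ precisely the neighboring-fibers fact just mentioned; bipartiteness only enters afterwards to rule out a triangle among the gates. The four relations you list --- $z_a\in I[a,g]$, $z_b\in I[b,g]$, $c\in I[a,b]$, $d(a,b)=2$ --- do not by themselves force $g\in I[z_a,z_b]$ via a pure distance computation, so the ``short distance computation'' you gesture at really is that one extra fact. Once you name it, your proof and the paper's are the same.
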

\begin{proof}
Let $F=\bigcup_{y\in Y}F(y)$. By definition, $F$ induces a connected subgraph. We prove that $F$ is locally convex. 
Assume the contrary and let $y_1,y_2,y_3$ be vertices with $y_1,y_3\in F$, $y_2\not \in F$ and $y_1\sim y_2\sim y_3$.
Let $y_i\in F(x_i)$ for all $i=1,2,3$. Since two fibers $F(z)$ and $F(z')$ are neighboring if and only if $z\sim z'$, we have $x_1\sim x_2\sim x_3$. Now, we have $x_1\neq x_3$ because otherwise we have $y_2\in F(x_1)=F(x_3)$ from convexity of $F(x_1)$. Therefore, because the median graphs are bipartite, we have $d(x_1,x_3)=2$. Therefore $x_2\in I[x_1,x_3]\subseteq F$ holds and the lemma is proved.
\end{proof}

\begin{algorithm}
\caption{The Algorithm to Interval Query with One End on $T$}
\begin{algorithmic}[1]\label{alg:stairsall}
\REQUIRE $u,v\in V(G)$
\IF{$d(u,v)=d(u,w)+d(w,v)$ holds for exactly one imprint $w$ of $u$}
    \STATE Let $t$ be the lowest common ancestor of $w$ and $v$
    \STATE Let $P=(r=w_0,\dots, w_k)$ and $P'=(r=w'_0,\dots, w'_k)$, where $w_a=w'_a=t$ and $w_b=w$ be the root-leaf path that contains $w$ and $v$, respectively
    \IF{$w\neq t$}
        \STATE {\it(Consider the decomposition of $I[u,v]$ into $I[u,w]$, $L=L(s,w_{b-1},t)$ and $L'=L(s',w'_{a+1},v)$)}
        \STATE Find the entrance $e$ of $L$
        \STATE $(z,S)\leftarrow \text{StaircasesQuery}(s,w_{b-1},t)$, where $s$ is the neighbor of $e$ in $F(w_{b-1})$
        \IF{$v=t$}
            \RETURN $p(I[u,w])\oplus S$
        \ENDIF
        \IF{$w_{a+1}$ has a neighbor in $F(w'_{a+1})$}
            \STATE Find the entrance $e'$ of $L'$ using the data structure in Lemma~\ref{lem:identify1}
        \ELSE
            \STATE Find the entrance $e'$ of $L'$ using the data structure in Lemma~\ref{lem:identify2}
        \ENDIF
        \STATE $(z',S')\leftarrow \text{StaircasesQuery}(s',w'_{a+1},v)$, where $s'$ is the neighbor of $e'$ in $F(w'_{a+1})$
        \RETURN $p(I[u,w])\oplus S\oplus S'$
    \ELSE
        \IF{$v=t$}
            \RETURN $p(I[u,w])$
        \ENDIF
        \STATE {\it(Consider the decomposition of $I[u,v]$ into $I[u,w]$ and $L'=L(s',w'_{a+1},v)$)}
        \STATE Find the entrance $e'$ of $L'$
        \STATE $(z',S')\leftarrow \text{StaircasesQuery}(s',w'_{a+1},v)$, where $s'$ is the neighbor of $e'$ in $F(w'_{a+1})$
        \RETURN $p(I[u,w])\oplus S'$
    \ENDIF
\ELSE
    \STATE Let $w^1,w^2$ be the imprints of $u$
    \STATE Let $w$ be the lowest common ancestor of $w^1$ and $w^2$
    \STATE Let $t$ be the lowest common ancestor of $w$ and $v$
    \STATE Let $P_1=(r=w_{1,0},\dots,w_{1,k_1})$, $P_2=(r=w_{2,0},\dots, w_{2,k_2})$ and $P'=(r=w'_0,\dots, w'_{k'})$, where $w_{1,a}=w_{2,a}=w'_{a}=t$, $w_{1,b}=w_{2,b}=w$ be the root-leaf path that contains $w^1$, $w^2$ and $v$, respectively
    \STATE {\it(Consider the decomposition of $I[u,v]$ into $I[u,w]$, the path $(w_{1,a},\dots, w_{1,b-1})$ and $L'=L(s',w'_{a+1},v)$)}
    \STATE Find the entrance $e'$ of $L'$ using the data structure in Lemma~\ref{lem:identify3}
    \STATE $(z',S')\leftarrow \text{StaircasesQuery}(s',w'_{a+1},v)$, where $s'$ is the neighbor of $e'$ in $F(w'_{a+1})$
    \IF{$w=t$}
        \RETURN $p(I[u,w])\oplus S'$
    \ENDIF
    \STATE $(z,S)\leftarrow \text{StaircasesQuery}(w_{1,b-1},w_{1,b-1},w_{1,a})$
    \RETURN $p(I[u,w])\oplus S\oplus S'$
\ENDIF
\end{algorithmic}
\end{algorithm}

\end{document}